\theoremstyle{plain}
\newtheorem{theorem}{Theorem}[section]
\newtheorem{lemma}[theorem]{Lemma}
\newtheorem{proposition}[theorem]{Proposition}
\theoremstyle{remark}
\newtheorem{remark}[theorem]{Remark}
\newcommand{\ZZ}{\mathbb{Z}}
\newcommand{\RR}{\mathbb{R}}
\newcommand{\CC}{\mathbb{C}}
\newcommand{\NN}{\mathbb{N}}
\newcommand{\EE}{\mathbb{E}}
\newcommand{\drm}{\ensuremath{\mathrm{d}}}
\newcommand{\p}{\ensuremath{\mathbb{P}}}
\newcommand{\re}{\ensuremath{\mathrm{Re}\,}}
\newcommand{\im}{\ensuremath{\mathrm{Im}\,}}
\newcommand{\euler}{\mathrm{e}}
\newcommand{\PP}{\mathbb{P}}
\renewcommand{\i}{\ensuremath{{\mathrm{i}}}}
\newcommand{\sprod}[2]{\left\langle#1,#2 \right \rangle}
\newcommand{\abs}[1]{\left\vert#1\right\vert}
\newcommand{\supp}{\ensuremath{\mathrm{supp}\,}}
\newcommand{\dist}{\ensuremath{\mathrm{dist}\,}}
\newcommand{\Pro}{\ensuremath{P}}
\newcommand{\BIGOP}[1]{\mathop{\mathchoice%
{\raise-0.22em\hbox{\huge $#1$}}%
{\raise-0.05em\hbox{\Large $#1$}}{\hbox{\large $#1$}}{#1}}}
\newcommand{\BIGboxplus}{\mathop{\mathchoice%
{\raise-0.35em\hbox{\huge $\boxplus$}}%
{\raise-0.15em\hbox{\Large $\boxplus$}}{\hbox{\large $\boxplus$}}{\boxplus}}}
\newcommand{\bigtimes}{\BIGOP{\times}}
\begin{document}
\title[Localization for models on $\mathbb{Z}$  with single-site potentials of finite support]{Localization via fractional moments for models on $\mathbb{Z}$  with single-site potentials of finite support \\ {\small \it Dedicated to the memory of Pierre Duclos}}
\author{A. Elgart$^1$, M. Tautenhahn$^2$ and I. Veseli\'c$^2$}
\address{$^1$ Department of Mathematics, Virginia Tech., Blacksburg, VA, 24061, USA}
\address{$^2$ Fakult\"at f\"ur Mathematik, TU Chemnitz, 09126 Chemnitz, Germany}
\eads{\mailto{aelgart@vt.edu}, \mailto{martin.tautenhahn@mathematik.tu-chemnitz.de}, \mailto{ivan.veselic@mathematik.tu-chemnitz.de}}
\begin{abstract}
One of the fundamental results in the theory of localization for discrete
Schr\"o\-dinger operators with random potentials is the exponential decay of
Green's function and the absence of continuous spectrum.
In this paper we provide a new variant of these results for
one-dimensional alloy-type potentials with finitely supported
sign-changing single-site potentials using the fractional moment method.
\end{abstract}
\ams{82B44, 60H25, 35J10}
\submitto{JPA}
%
%
%
%
%
%
\section{Introduction}
Anderson models on the lattice are discrete Schr\"odinger operators with random potentials. Such models have been studied since a long time in computational and theoretical physics, as well as in mathematics. One of the fundamental results for these
models is the physical phenomenon of \emph{localization}. There are various manifestations of localization: exponential decay of Green's function, absence of diffusion, spectral localization (meaning almost sure absence of continuous
spectrum), exponential decay of generalized eigensolutions, or non-spreading of wave packets.  Such properties have been established exclusively (apart from one-dimensional situations) by two different methods, the multiscale analysis and the fractional moment method. The \emph{multiscale analysis} (MSA) was invented by Fr\"ohlich and Spencer in \cite{FroehlichS1983}, while the \emph{fractional moment method} (FMM) was introduced by Aizenman and Molchanov \cite{AizenmanM1993}.
\par
In this paper we focus our attention on correlated Anderson models. More precisely, we develop the FMM for a one-dimensional discrete Schr\"odinger operator with random potential of alloy-type. In this model, the potential at the lattice site $x \in \mathbb Z$ is defined by a finite linear combination $V_\omega (x) = \sum_k \omega_k u(x-k)$ of independent identically distributed (i.\,i.\,d.) random coupling constants $\omega_k$ having a bounded density. The function $u(\cdot - k)$ is called single-site potential and may be interpreted as a finite  interaction range potential associated to the lattice site $k \in \mathbb Z$. Consequently, for the model under consideration the potential values at different sites are not independent random variables.  Let us stress that we have no sign assumption on the single-site potential, thus the correlations may be negative.
\par
For such models we prove in one space dimension and at all energies a so-called fractional moment bound, i.\,e. exponential off-diagonal decay of an averaged fractional power of Green's function. The restriction to the one-dimensional case allows an elegant and short proof in which the basic steps---decoupling and averaging---are particularly transparent. Currently we are working on the extension of our result to the multi-dimensional case.
\par
A second result concerns a criterion of exponential localization,  {i.\,e.} the fact that in a certain interval there is no continuous spectrum and all eigenfunctions decay exponentially almost surely. Our proof uses an idea developed first in the context of MSA, but actually establishes exponential localization directly from fractional moment
bounds on the Green's function, without going through the induction step of the MSA.
\par
Let us discuss  which localization results established previously apply to the case of correlated potentials. The papers \cite{DreifusK1991} and \cite{AizenmanM1993,AizenmanSFH2001} derive localization for  Anderson models with correlated potentials, using the MSA or the FMM, respectively. However, all of them require quite stringent conditions on the conditional distribution of the potential value at a site  conditioned on the remaining sites. These are typically not satisfied for the discrete alloy-type potential whenever the coupling constants are bounded random variables, cf.{} \cite{TautenhahnV2010}. Thus, our results are not covered by those in \cite{DreifusK1991,AizenmanM1993,AizenmanSFH2001}.
\par
For continuous alloy-type models spectral localization can be derived via MSA, as soon as one has verified a Wegner-type bound and an initial scale estimate. There are certain energy/disorder regimes where this has been achieved for  sign-changing single-site potentials. All of them require that the random variables have a bounded density. For weak disorder \cite{Klopp2002} establishes localization near non-degenerate simple band edges. For the bottom of the spectrum localization holds as well. In the case that the bottom of the spectrum is $-\infty$ this has been proven in \cite{Klopp1995}, while for lower bounded operators it follows from a combination of \cite{Klopp1995,HislopK2002} and \cite{KloppN2009}. 
See also \cite{Veselic2002,KostrykinV2006,Veselic2010b} for related results, which however apply only to particular single-site potentials of a generalized step-function form. 
It should be emphasized that all these proof use monotonicity at some stage of the argument which is not the 
case for our proof of exponential decay of fractional moments. This is discussed in more detail below. In \cite{BuschmanS2001,Stolz2002} localization is established  for
alloy-type models on $\RR$ with sign-changing single-site potentials using genuinly one-dimensional techniques. Thus these results are related to ours. However, to our knowledge, fractional moment bounds have not been established for  alloy-type models  with sign-changing single-site potential so far (neither in the continuous nor the
discrete setting). Let us stress that even for models where the MSA is well established, the implementation of the FMM gives new insights and slightly stronger
results. For instance, the paper \cite{AizenmanENSS2006} concerns models for which the MSA was developed much earlier.
\par
We would like to bring one particular feature of our proof of exponential decay of fractional moments to the attention of the reader. Contrary to standard proofs of localization either via multiscale analysis or via the fractional moment method this  proof nowhere uses a monotonous spectral averaging, a monotonous Wegner estimate, or any other kind of monotonicity argument. The key tool which allows us to disregard monotonicity issues is an averaging result for determinants, formulated in Lemma \ref{lemma:det}.
\par
Let us elaborate on the role played by monotonicity in previous arguments in some more detail. The first approach to localization for alloy-type models which does not use monotonicity is the one pursued in \cite{Stolz2002,BuschmanS2001}. However, this method does not yield such a strong bound on the Green's function as Theorem \ref{thm:result1} below. Furthermore, the method \cite{BuschmanS2001,Stolz2002} is based on the Pr\"ufer coordinate, a quantity which is only defined for one-dimensional models. The fractional moment method is a tool suited for models in arbitrary dimension, albeit so far we have implemented in only in one dimension for our model. 
\par
A result related to bounds on the fractional moments of the Green's function is a Wegner estimate. Such estimates have been developed for alloy-type models with sign-changing single-site potentials. There are two methods at disposal to derive a Wegner bound in this situation, one developed in \cite{Klopp1995,HislopK2002,Klopp2002} and the other in \cite{Veselic2002,KostrykinV2006,Veselic2010b}. Now both of them use monotonicity at some stage of the argument. Let us first discuss aspects of the method in \cite{Klopp1995, HislopK2002,Klopp2002}, restricting ourselves for simplicity to the energy region near the bottom of the spectrum. (\cite{HislopK2002,Klopp2002} have also results about internal spectral edges in the weak disorder regime.) There it is shown that negative eigenvalues of certain auxiliary operators have a negative derivative with respect to an appropriately chosen vector field. Thus a  monotonicity property is  established for appropriately chosen spectral subspaces. On the other hand, the idea of \cite{Veselic2002,KostrykinV2006} consist in finding a linear combination of single-site potentials which is non-negative and averaging w.\,r.\,t.~the associated
transformed coupling constants. This is obviously again a monotonicity argument.
\par
Recently Bourgain \cite{Bourgain2009} has developed a method to establish Wegner estimates for Anderson models with matrix-valued potentials. These models, like ours, lack monotonicity. To overcome this difficulty Bourgain uses analyticity and subharmonicity properties of the relevant matrix-functions. A key ingredient is (a multidimensional version of) Cartan's Theorem. If one likes, one can interpret our Lemma
\ref{lemma:det} as a very specific and explicite version of Cartan's Theorem. 
\par
Using methods from dynamical systems Sadel and Schulz-Baldes prove in \cite{SadelSB2008} positivity of Lyapunov exponents and at most logarithmic growth of quantum dynamics for one-dimensional random potentials with correlations. 
However, due to the assumptions on the underlying probability space it seems that our model can not be reduced to the ones in \cite{SadelSB2008}. 
Avila and Damanik have related results as announced and sketched in \cite{Damanik2007, Damanik-Oberwolfach}.
\par
Finally, another recent development concerns localization results for discrete alloy-type models based on the MSA.
After the first version of the present paper was completed, Veseli\'c \cite{Veselic2010b}  
obtained a Wegner estimate for the multi-dimensional 
analog of the model considered in this paper, which can be used to obtain exponential spectral localization at large disorder.  
For this results the distribution of the coupling constants has to satisfy an analog of condition (a) in Theorem 2.2.{} below.
In a preprint, of which we learned after the final version of our paper was submitted, 
Kr\"uger \cite{Krueger} obtained results about spectral and dynamical localization for a class of multi-dimensional models which includes ours as a subclass.
This work uses some of the ideas of \cite{Bourgain2009} mentioned above and establishes Wegner-like estimates 
without the use of monotonicity.

%
%
%
%
%
%
\section{Model and results} \label{sec:model}
We consider a one-dimensional Anderson model. This is the random discrete Schr\"o{}\-ding\-er operator
\begin{equation} \label{eq:hamiltonian}
 H_\omega := -\Delta + V_\omega , \quad \omega \in \Omega,
\end{equation}
acting on $\ell^2 (\ZZ)$, the space of all square-summable sequences indexed by $\ZZ$ with an inner product $\sprod{\cdot}{\cdot}$. Here, $\Delta: \ell^2 \left(\ZZ\right) \to \ell^2 \left(\ZZ\right)$ denotes the discrete Laplace operator and $V_\omega : \ell^2 \left(\ZZ\right) \to \ell^2 \left(\ZZ\right)$ is a random multiplication operator. They are defined by
\begin{equation*}
\left(\Delta \psi \right) (x) := \sum_{\abs{e} = 1} \psi (x+e) \quad \mbox{and} \quad
\left( V_\omega \psi \right) (x) :=  V_\omega (x) \psi (x)
\end{equation*}
and represent the kinetic energy and the random potential energy, respectively.
We assume that the probability space has a product structure $\Omega :=
\bigtimes_{k \in \ZZ} \RR$ and is equipped with the probability measure $\drm
\p (\omega) := \prod_{k \in \ZZ} \rho (\omega_k) \drm \omega_k$ where
$\rho \in L^\infty (\RR) \cap L^1 (\RR)$ with $\| \rho \|_{L^1}$ = 1.
Hence, each element $\omega$ of $\Omega$ may be represented as a collection
$\{\omega_k\}_{k \in \ZZ}$ of {i.\,i.\,d.}
random variables, each distributed with the density $\rho$. The symbol
$\mathbb{E}\{\cdot\}$ denotes the expectation with respect to the probability
measure, i.\,e. $\mathbb{E} \{\cdot\} := \int_\Omega (\cdot) \drm
\p (\omega)$. For a set $\Gamma \subset \ZZ$, $\mathbb{E}_\Gamma\{\cdot \}$
denotes the expectation with respect to $\omega_k$, $k \in \Gamma$. That is,
$\mathbb{E}_{\Gamma} \{\cdot\} := \int_{\Omega_\Gamma} (\cdot)
\prod_{k \in \Gamma} \rho(\omega_k) \drm \omega_k$ where $\Omega_\Gamma
:= \bigtimes_{k \in \Gamma} \RR$. Let the \emph{single-site potential}
$u : \ZZ \to \RR$ be a function with finite and non-empty support $\Theta
:= \supp u = \{k \in \ZZ : u(k) \not = 0 \}$. We assume that the random
potential $V_\omega $ has an alloy-type structure, {i.\,e.}
\begin{equation*}
V_\omega (x) := \sum_{k \in \ZZ} \omega_k u (x-k)
\end{equation*}
at a lattice site $x \in \ZZ$ is a linear combination of the i.\,i.\,d. random
variables $\omega_k$, $k\in\ZZ$, with coefficients provided by the single-site
potential. For this reason we call  the Hamiltonian \eref{eq:hamiltonian} sometimes
a discrete alloy-type model. The function $u(\cdot - k)$ may be interpreted as a finite range potential associated to the lattice site $k\in\ZZ$. The Hamiltonian
\eref{eq:hamiltonian} is possibly unbounded, but self-adjoint on a dense
subspace of $\ell^2 (\ZZ)$, see e.\,g. \cite{Kirsch-08}. Finally, for the
operator $H_\omega$ in \eref{eq:hamiltonian} and $z \in \CC \setminus \sigma
(H_\omega)$ we define the corresponding \emph{resolvent} by $G_\omega (z)
:= (H_\omega - z)^{-1}$. For the \emph{Green's function}, which assigns
to each  $(x,y) \in \ZZ \times \ZZ$ the corresponding matrix element of the
resolvent, we use the notation
\begin{equation} \label{eq:greens}
G_\omega (z;x,y) := \sprod{\delta_x}{(H_\omega - z)^{-1}\delta_y}.
\end{equation}
For $\Gamma \subset \ZZ$, $\delta_k \in \ell^2 (\Gamma)$ denotes the
Dirac function given by $\delta_k (k) = 1$ for $k \in \Gamma$ and
$\delta_k (j) = 0$ for $j \in \Gamma \setminus \{k\}$. The
quantities $\| \rho \|_\infty^{-1}$ and (in the case that
$\rho$ is weakly differentiable) $\| \rho' \|_{L^1}^{-1}$ may
be understood as a measure of the disorder present in the model (a
small value of norms corresponds to the strong disorder). Our
results in the case of strong disorder are the following three
theorems.
\begin{theorem} \label{thm:result1}
 Let $n \in \NN$, $\Theta = \{0,\dots,n-1\}$, $s \in (0,1)$, and
 $\| \rho \|_\infty$ be sufficiently small.
Then there exist constants $C,m \in (0,\infty)$ such that for all $x,y \in \ZZ$ with $|x-y| \geq n$ and all $z \in \CC \setminus \RR$,
\begin{equation} \label{eq:result1}
 \mathbb{E} \bigl\{ | G_\omega (z;x,y) |^{s/n} \bigr\} \leq C{\rm e}^{-m \abs{x-y}} .
\end{equation}
\end{theorem}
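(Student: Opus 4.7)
The plan is to implement the fractional moment method of Aizenman and Molchanov, adapted to the correlated, finite-range alloy setting. Two novelties are forced by the model: varying a single coupling constant $\omega_k$ perturbs the potential on all $n$ sites of the block $\Lambda_k := \{k,k+1,\ldots,k+n-1\}$, so the basic averaging step must control a rank-$n$ rather than a rank-one perturbation; and correspondingly the characteristic polynomial of the restriction $H_\omega^{\Lambda_k}$ has degree exactly $n$ in $\omega_k$ (its leading coefficient is $\prod_{j=0}^{n-1}u(j)\ne 0$ under $\Theta=\{0,\ldots,n-1\}$), so only $(s/n)$-th moments of this polynomial are integrable. This is the origin of the exponent $s/n$ in the statement.

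The first step is to isolate the $\omega_k$-dependence of $G_\omega(z;x,y)$. Applying a Krein/Schur complement formula to the block $\Lambda_k$ (on a finite-volume truncation, then passing to the limit) yields, for $x,y\notin\Lambda_k$,
\begin{equation*}
G_\omega(z;x,y)\;=\;\frac{P_{x,y}(\omega_k)}{\det\bigl(H_\omega^{\Lambda_k}-z\bigr)},
\end{equation*}
where the denominator is a polynomial of degree $n$ in $\omega_k$. Lemma~\ref{lemma:det} is the central tool at this point: it yields an integral estimate of the form
\begin{equation*}
\int\bigl|\det\bigl(H_\omega^{\Lambda_k}-z\bigr)\bigr|^{-s/n}\,\rho(\omega_k)\,\drm\omega_k\;\le\;C\,\|\rho\|_\infty^{s},
\end{equation*}
which may be derived by factoring the polynomial over $\CC$ and combining the classical single-variable bound $\int|t-r|^{-s}\rho(t)\,\drm t\le C_s\|\rho\|_\infty^{s}$ with the AM--GM inequality $\prod_\ell|t-r_\ell|^{-s/n}\le n^{-1}\sum_\ell|t-r_\ell|^{-s}$. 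Combined with the a priori bound $|G_\omega(z;\cdot,\cdot)|\le|\im z|^{-1}$ used to control the numerator $P_{x,y}$, this produces a one-step averaging estimate
\begin{equation*}
\mathbb{E}_k\bigl\{|G_\omega(z;x,y)|^{s/n}\bigr\}\;\le\;C\,\|\rho\|_\infty^{s}\,R_k(z;x,y),
\end{equation*}
where $R_k$ is $\omega_k$-independent and bounded in terms of Green's function values on the complement of $\Lambda_k$.

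The second step is to iterate the one-step estimate along a sequence of pairwise disjoint blocks $\Lambda_{k_1},\ldots,\Lambda_{k_N}$ placed between $x$ and $y$. Since each block has length $n$, one can fit $N\sim\lfloor|x-y|/n\rfloor$ such blocks. Each averaging contributes a multiplicative factor of the form $C\,\|\rho\|_\infty^{s}$; for $\|\rho\|_\infty$ small enough that this factor is strictly less than $1$, the accumulated product is exponentially small in $|x-y|$, giving the desired bound with rate $m\sim(s/n)\log\|\rho\|_\infty^{-1}$.

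The main technical obstacle will be closing the iteration. In the classical Aizenman--Molchanov setting the residual after a one-step average is itself a Green's function on a strictly smaller domain, so the same scheme may be reapplied transparently. In our correlated setting the residual $R_k$ must retain this structure, and the blocks $\Lambda_{k_j}$ must be chosen pairwise disjoint so that the averaged coupling constants $\omega_{k_1},\ldots,\omega_{k_N}$ are independent and no $\omega_{k_j}$ re-enters a later step through the residual. Both requirements are specifically one-dimensional: a single block $\Lambda_k$ of $n$ sites already disconnects $\ZZ$ into two half-lines, and a matching geometric resolvent identity at each step reduces the problem to Green's functions on the two detached pieces, along which the iteration then proceeds cleanly.
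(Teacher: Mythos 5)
Your overall scheme --- decouple in blocks of length $n$, average over one coupling constant at a time via Lemma~\ref{lemma:det}, iterate, and let $\|\rho\|_\infty$ small make each factor less than one --- is the right skeleton and matches the paper's strategy. But there is a concrete gap in the one-step estimate that, as written, would leave you with a $z$-dependent bound and hence fail to prove the theorem.

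You propose to write $G_\omega(z;x,y)=P_{x,y}(\omega_k)/\det(H_\omega^{\Lambda_k}-z)$ for $x,y\notin\Lambda_k$, average the denominator with Lemma~\ref{lemma:det}, and control the numerator $P_{x,y}$ via the a priori bound $|G_\omega(z;\cdot,\cdot)|\le|\im z|^{-1}$. This cannot close: a bound on $|G|$ gives $|P_{x,y}|\le |\det|/|\im z|$, which is useless after dividing by $|\det|$, and any direct use of $|\im z|^{-1}$ leaves a divergence as $\im z\to 0$, whereas \eref{eq:result1} must hold uniformly for all $z\in\CC\setminus\RR$. What is actually needed --- and what the paper uses --- is the genuinely one-dimensional observation that, after a Schur complement to the block $\Lambda=\{x,\ldots,x+n-1\}$ (Lemma~\ref{lemma:fraction}), the matrix $[D]=[H_\Lambda-B_\Gamma^\Lambda-z]$ is \emph{tridiagonal}, so Cramer's rule applied to the corner matrix element gives a cofactor that is triangular with determinant $\pm1$, hence
\[
|G_\Gamma(z;x,x+n-1)|=\frac{1}{\abs{\det[D]}}
\]
with no numerator at all. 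This is the identity that makes the fractional moment estimate $z$-uniform. Your decomposition $[D]=A+\omega_xV$ with $V=\mathrm{diag}(u(0),\ldots,u(n-1))$ and $\det V=\prod_k u(k)\ne0$ then feeds straight into Lemma~\ref{lemma:det}.

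Related to this, the residual $R_k$ in your one-step estimate is asserted to be ``a Green's function on the complement of $\Lambda_k$'' but your Schur-complement representation does not by itself produce that structure. The paper obtains it from the geometric resolvent identity \eref{eq:resolvent} with the depleted Laplacian: cutting the single bond between $x+n-1$ and $x+n$ gives the exact factorization
\[
G_\Gamma(z;x,y)=G_\Gamma(z;x,x+n-1)\,G_\Lambda(z;x+n,y),
\]
where $\Lambda=\{x+n,x+n+1,\ldots\}$. The first factor is $\pm 1/\det[D]$ and carries all the $\omega_x$-dependence; the second is the Green's function of the right half-line and is $\omega_x$-independent since $\supp u=\{0,\ldots,n-1\}$. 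Averaging over $\omega_x$ therefore hits exactly the determinant factor, and the iteration shifts $x\mapsto x+n$, $\Gamma\mapsto\Lambda$, so the coupling constants $\omega_x,\omega_{x+n},\ldots$ averaged in successive steps never re-enter. You should also verify the closing step at the right end when fewer than $n$ sites remain (the paper does this via Lemma~\ref{lemma:finitness1}(iii), choosing the fractional power $t=s(d+1)/n$ to keep the exponent consistent). With the triangular-cofactor identity and the depleted-Hamiltonian factorization in hand, the rest of your outline --- the H\"older or AM--GM reduction in Lemma~\ref{lemma:det} (after replacing the possibly complex roots $z_j$ by $\re z_j$), the disjoint-block iteration, and the smallness requirement on $\|\rho\|_\infty$ --- does assemble into the proof.
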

\begin{theorem} \label{thm:result2}
 Let $n \in \NN$, $\Theta\subset\ZZ$ finite with $\min \Theta = 0$ and $\max \Theta = n-1$, $r$
as in Eq.{} \eref{eq:r} (the width of the largest gap in $\Theta$),
and $s \in (0,n/(n+r))$. Assume
\begin{enumerate}[(a)]
 \item $\rho \in W^{1,1} (\RR)$ with $\| \rho' \|_{L^1}$ sufficiently small, \textbf{or}
 \item $\supp \rho$ compact with $\| \rho \|_\infty$ sufficiently small.
\end{enumerate}
Then there exist constants $C,m \in (0,\infty)$ such that the bound \eref{eq:result1} holds true for all $x,y \in \ZZ$ with $| x-y | \geq 2(n+r)$ and all $z \in \CC \setminus \RR$.
\end{theorem}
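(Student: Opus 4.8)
The plan is to run the fractional moment method, replacing systematically the rank-one spectral averaging available for the i.i.d.\ Anderson model by the determinant-averaging estimate of Lemma~\ref{lemma:det}. As a preliminary step I would pass to finite volume: writing $H_{\omega,\Lambda}$ for the restriction of $H_\omega$ to a finite interval $\Lambda\supset\{x,y\}$, one has $G_{\omega,\Lambda}(z;x,y)\to G_\omega(z;x,y)$ for every $z\in\CC\setminus\RR$, so by Fatou's lemma it suffices to prove \eref{eq:result1} for $G_{\omega,\Lambda}$ with $C,m$ independent of $\Lambda$, and all subsequent estimates are carried out for the finite matrix $H_{\omega,\Lambda}-z$.

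The analytic core is an averaging bound on a single box. Fix $B=[a,b]$ with $b=a+n+r-1$ and set $J=[a,a+r]$. By Cramer's rule, $G^B_\omega(z;a,b)$ is a minor of $H^B_\omega-z$ divided by $\det(H^B_\omega-z)$. The key combinatorial observation is that, since the largest gap of $\Theta$ has width $r$, the set $\Theta$ meets every block of $r+1$ consecutive integers inside $[0,n-1]$; hence $\bigcup_{k\in J}(k+\Theta)=B$. Therefore, freezing all coupling constants outside $J$, the map $(\omega_k)_{k\in J}\mapsto\det(H^B_\omega-z)=\det\bigl(M_0+\sum_{k\in J}\omega_kD_k\bigr)$ is a polynomial of total degree $n+r$, whose top-degree part is a product of $n+r$ nonzero linear forms and in particular does not vanish identically. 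Lemma~\ref{lemma:det} then yields $\EE_J\{|G^B_\omega(z;a,b)|^{s/n}\}\le C_0$, uniformly in $z\in\CC\setminus\RR$, in $\Lambda$ and in the frozen variables, \emph{provided} $(s/n)(n+r)<1$, i.e.\ precisely for $s<n/(n+r)$; the two alternatives (a) and (b) are the two regularity regimes in which Lemma~\ref{lemma:det} is valid (integration by parts against $\rho'$ when $\rho\in W^{1,1}$; a compactness/$\|\rho\|_\infty$ argument when $\supp\rho$ is compact), and in each case $C_0$ is small once the corresponding norm of $\rho$ is small. The same reasoning applied to a box placed inside a half-line gives the $z$-uniform a~priori bound $\EE\{|G_\omega^{\mathrm{half}}(z;x,a-1)|^{s/n}\}\le C_1$ whenever $a-1-x\ge n+r$.

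To upgrade this to exponential decay I would iterate a one-step decoupling. Applying the bond-removal (geometric) resolvent identity first at the bond $(b,b+1)$, and then at the bond $(a-1,a)$ inside the resulting left half-line operator, one obtains for $x<a$
\[
G_{\omega,\Lambda}(z;x,y)=\pm\,G^{\mathrm{half}}_\omega(z;x,a-1)\;G^B_\omega(z;a,b)\;G_{\omega,\Lambda}(z;b+1,y).
\]
With $J=[a,a+r]$ the first factor depends only on $\{\omega_k:k\le a-1\}$, the last only on $\{\omega_k:k\ge a+r+1\}$, while the middle factor carries the whole of $\omega_J$; hence $\EE$ factorizes over these three disjoint families. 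Bounding the middle factor by $C_0$ (by the box estimate above) and the left factor by $C_1$ gives
\[
\EE\{|G_{\omega,\Lambda}(z;x,y)|^{s/n}\}\le C_0C_1\,\EE\{|G_{\omega,\Lambda}(z;b+1,y)|^{s/n}\}.
\]
For strong disorder $\theta:=C_0C_1<1$; placing $B$ so that both it and a separating buffer of width $\sim n+r$ toward $x$ fit inside $[x,y]$ (which is exactly what $|x-y|\ge 2(n+r)$ allows) and iterating $\sim|x-y|/(n+r)$ times yields $\EE\{|G_\omega(z;x,y)|^{s/n}\}\le C\theta^{\,|x-y|/(n+r)}=C\euler^{-m|x-y|}$, and then $\Lambda\uparrow\ZZ$ completes the proof.

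The step I expect to be the main obstacle is the box estimate: making rigorous that $\det(H^B_\omega-z)$ genuinely has total degree $n+r$ in $\omega_J$ with a top-degree part that Lemma~\ref{lemma:det} can control (this is precisely where the gap width $r$ is converted into a quantitative input), and verifying that Lemma~\ref{lemma:det} is a sufficiently robust, genuinely multivariate Cartan-type inequality to also absorb the numerator minors and the a~priori bound. This is also the source of the weakening of the geometric hypothesis from $|x-y|\ge n$ in Theorem~\ref{thm:result1} to $|x-y|\ge 2(n+r)$ here: one needs a decoupling block of width $n+r$ \emph{and} a separating buffer of width $n+r$ so that the a~priori bound applies to the flanking resolvent.
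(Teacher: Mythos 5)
You have the right scaffolding: a box averaging estimate, a geometric-resolvent decoupling, and iteration, with the gap width $r$ driving both the box size $n+r$ and the exponent restriction $s<n/(n+r)$. But the single most important technical ingredient of the paper's proof of this theorem is not Lemma~\ref{lemma:det} itself; it is the \emph{multivariate} averaging Lemma~\ref{lemma:detgen}, which you do not supply and which Lemma~\ref{lemma:det} does not give you for free. Lemma~\ref{lemma:det} integrates over a \emph{single} real variable against $\rho$; your box determinant is a polynomial in the $r+1$ variables $\omega_J$, and Lemma~\ref{lemma:det} says nothing about $\int_{\RR^{r+1}}|\det(M_0+\sum_{k\in J}\omega_k D_k)|^{-t}\prod\rho(\omega_k)\,\drm\omega_k$. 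The paper bridges this gap by a non-trivial linear change of variables $(\omega_k)\mapsto T(\zeta_k)$ that makes the determinant depend on $\zeta_0$ only through a rank-full matrix $\sum_k\alpha_kV_k$; one must \emph{prove} such an invertible combination exists (the paper does this via the volume/hyperplane argument of Figure~\ref{fig:volume}, which also yields a quantitative lower bound on $\alpha_0$) and then apply Lemma~\ref{lemma:det} in the $\zeta_0$ variable. Without this reduction your ``box estimate'' is an assertion, not a proof.

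Related to this, you have misread the role of hypotheses (a) and (b). Lemma~\ref{lemma:det} is valid for \emph{every} $\rho\in L^1\cap L^\infty$; it needs neither $\rho\in W^{1,1}$ nor compact support. The two regularity alternatives enter precisely \emph{after} the change of variables: the transformed joint density $g(\zeta_0,\dots,\zeta_r)=\rho(\alpha_0\zeta_0)\prod_i\rho(\alpha_i\zeta_0+\alpha_0\zeta_i)$ must have $\int\sup_{\zeta_0}g\,\drm\zeta_1\cdots\drm\zeta_r$ finite, and this is controlled via $\sup_{\zeta_0}g\le\tfrac12\int|\partial_{\zeta_0}g|\,\drm\zeta_0$ under $\mathcal A_1$, or via compact support under $\mathcal A_2$. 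Two further, smaller, points: the paper's iteration uses a \emph{single} cut per step, $G_\Gamma(z;x,y)=G_\Gamma(z;x,x+n+r-1)\,G_\Lambda(z;x+n+r,y)$, and then averages the first factor over $\omega_x,\dots,\omega_{x+r}$ (which do not appear in the second); your three-factor formula with the fully restricted box resolvent $G^B_\omega(z;a,b)$ in the middle is not what the geometric resolvent identity produces (the middle factor is a $G_\Gamma$- or half-line object, handled through the Schur complement of Lemma~\ref{lemma:fraction}), and the ``half-line a~priori bound'' you invoke is again a multivariate averaging statement that needs the same Lemma~\ref{lemma:detgen}-type input (see Lemma~\ref{lemma:finitness3} for the edge estimate actually used). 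You flagged the box averaging as the likely obstacle; it is, and it is exactly the piece of analysis your sketch omits.
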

\begin{theorem} \label{theorem:loc}
 Let $\supp \rho$ be compact with $\| \rho \|_\infty$ sufficiently small. Then $H_\omega$ has almost surely only pure point spectrum with exponentially decaying eigenfunctions.
\end{theorem}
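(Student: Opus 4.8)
The starting point is Theorem \ref{thm:result2} in case (b): under the present hypotheses one has, at \emph{all} energies and for every $z \in \CC \setminus \RR$, the fractional moment bound $\EE\{\abs{G_\omega(z;x,y)}^{s/n}\} \le C\,{\rm e}^{-m\abs{x-y}}$ as soon as $\abs{x-y}$ is large enough. Note that when $\supp\rho$ is compact the potential $V_\omega$ is uniformly bounded, so $\sigma(H_\omega)$ is contained in a fixed bounded interval $I$; it therefore suffices to prove pure point spectrum with exponentially decaying eigenfunctions on $I$, and the decay bound is available for every $E \in I$ with $\omega$-uniform and $E$-uniform constants. The plan is to (i) pass from $z = E+\i\epsilon$ to the real axis, (ii) deduce exponential decay of the boundary-value Green's function for Lebesgue-a.e.\ energy and $\p$-a.e.\ disorder, and (iii) convert this into exponential decay of the generalized eigenfunctions together with pure point spectrum.

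For steps (i)--(ii), fix a bounded open $I \supset \sigma(H_\omega)$. By Fatou's lemma in $\epsilon \downarrow 0$ followed by Fubini, $\int_I \EE\{\liminf_{\epsilon\downarrow 0}\abs{G_\omega(E+\i\epsilon;x,y)}^{s/n}\}\,\drm E \le C\abs{I}\,{\rm e}^{-m\abs{x-y}}$. Since for a.e.\ $E$ the boundary value $G_\omega(E+\i 0;x,y)$ exists $\p$-a.s.\ (limiting absorption / rank-$n$ boundary-value theory), a Chebyshev estimate together with Borel--Cantelli along a sequence $\abs{x-y}\to\infty$ yields: for Lebesgue-a.e.\ $E \in I$ and $\p$-a.e.\ $\omega$ there is $C_{\omega,E} < \infty$ with $\abs{G_\omega(E+\i 0;x,y)} \le C_{\omega,E}\,{\rm e}^{-(m/2)\abs{x-y}}$ for all $x,y$. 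In particular $\sum_x \abs{G_\omega(E+\i 0;j,x)}^2 < \infty$ for every $j$, for a.e.\ $E$, $\p$-a.s.

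For step (iii) I would invoke a Simon--Wolff--type criterion adapted to the finite-rank structure of the model. Since $u$ is finitely supported, changing a single coupling constant $\omega_k$ is a finite-rank perturbation of $H_\omega$ (supported on $k+\Theta$), and the $\omega_k$ are independent with bounded density; one wants the integrability ``$\sum_x\abs{G_\omega(E+\i 0;j,x)}^2 < \infty$ for a.e.\ $E$, $\p$-a.s.'' to force the spectral measures of $H_\omega$ to be $\p$-a.s.\ pure point. For rank-one perturbations this is exactly Simon--Wolff; in the present alloy-type situation the spectral-averaging input is supplied instead by the determinant averaging of Lemma \ref{lemma:det}, which is precisely the point where the absence of monotonicity is circumvented. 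The exponential rate is then transferred to the eigenfunctions: an $\ell^2$-eigenfunction $\psi$ of $H_\omega$ at energy $E$ is reproduced, via the rank-$n$ resolvent (Krein/Aronszajn) formula, from $G_\omega(E+\i 0;\cdot,\cdot)$, so the exponential bound on the latter passes to $\psi$; alternatively one runs the standard geometric-resolvent argument on polynomially bounded generalized eigenfunctions (which are spectrally complete by Schnol's theorem), using exponential decay of the finite-volume Green's functions.

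The hard part is step (iii). In the usual fractional moment method one exploits that each site carries its own independent random variable, so that rank-one spectral averaging converts Green's function decay into decay of the eigenfunction correlator; here a single $\omega_k$ governs the potential on the whole block $k+\Theta$, and this multi-site coupling must be handled by Lemma \ref{lemma:det} rather than by monotone rank-one averaging. A secondary, more routine, point is to ensure that the constants in the fractional moment bound stay uniform as $\epsilon \downarrow 0$ and---if one prefers to argue through finite boxes---uniform in the box size; this follows by inspecting the proof of Theorem \ref{thm:result2}, whose estimates only involve the finitely many coupling constants near $x$ and $y$.
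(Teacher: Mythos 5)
Your steps (i)--(ii) are fine as far as they go, but step (iii) is the crux of the proof and the argument you propose there does not work; in fact it is precisely the difficulty that the paper highlights and then circumvents by an entirely different route. The Simon--Wolff criterion requires a \emph{monotone rank-one} spectral averaging: one needs the spectral measure of $H+\lambda\langle\phi,\cdot\rangle\phi$ at $\delta_j$ to have an absolutely continuous average in $\lambda$, and that is an inherently monotone statement about a scalar Borel transform. In the present model, varying a single $\omega_k$ is a rank-$n$ perturbation whose diagonal coefficients $u(0),\dots,u(n-1)$ may change sign, so the Aronszajn--Donoghue/Simon--Wolff machinery is not available, and the paper says this explicitly in Remark~\ref{remark:localization}(iii) and again at the beginning of Section~\ref{sec:loc}: ``Neither dynamical nor spectral localization can be directly inferred from the behavior of the Green's function using the existent methods.'' Your remark that ``the spectral-averaging input is supplied instead by the determinant averaging of Lemma~\ref{lemma:det}'' is not a proof: Lemma~\ref{lemma:det} averages $|\det(\cdot)|^{-s/n}$ and yields fractional-moment bounds, but it does not produce the absolute continuity of an averaged spectral measure that Simon--Wolff needs, nor is it clear how to fashion such a statement for a sign-indefinite rank-$n$ perturbation. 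As written, step (iii) is a gap, not a proof.

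The paper's actual proof takes a genuinely different path. From the fractional-moment bound of Theorem~\ref{thm:result2} (case (b)) together with the a~priori diagonal bound of Theorem~\ref{t:a-priori} (which you would also need and which is nontrivial: Theorem~\ref{thm:result2} only controls $|x-y|\ge 2(n+r)$), Proposition~\ref{prop:replace-msa} derives, \emph{without any induction}, the standard two-box regularity estimate that is usually the ``output'' of a multiscale analysis: with probability $\ge 1-L^{-2p}$, for every $E$ in a fixed interval, at least one of two well-separated boxes $\Lambda_{x,L},\Lambda_{y,L}$ is $(m,E)$-regular. A Chebyshev argument converts the fractional-moment decay into smallness of the ``bad energy'' sets $\tilde\Delta_\omega^k$, the a~priori bound yields a Wegner-type estimate \eref{eq:wegner}, and the independence of $\omega_k$ for $k$ in $\Lambda_{x,L}^+$ and $\Lambda_{y,L}^+$ (which requires $|x-y|\ge 2L+n$) lets one control the resonance event $B_{\rm res}$. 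One then feeds this hypothesis into the von~Dreifus--Klein conclusion (Theorem~\ref{thm:vDK-2.3}), which delivers pure point spectrum with exponentially decaying eigenfunctions. This is a criterion for localization from fractional moment bounds \emph{alone}, sidestepping both Simon--Wolff and the induction step of MSA; if you want to salvage your approach you would need to prove a new spectral-averaging theorem for sign-indefinite finite-rank perturbations, which is exactly the open issue the paper's construction avoids.
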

The difference between Theorem~\ref{thm:result1} and Theorem~\ref{thm:result2} is the following: In Theorem~\ref{thm:result1} we assume that $\Theta$ is finite and \emph{connected} (cf. Section 3). The latter condition can be dropped if $\rho$ is sufficiently regular, cf.~Theorem~\ref{thm:result2}. A quantitative version of Theorem~\ref{thm:result1} is proven in Section~\ref{sec:finitness} and \ref{sec:exp}, compare also Theorem~\ref{theorem:exp}. A quantitative version of Theorem~\ref{thm:result2} is stated and proven in Section~\ref{sec:gen}.
\par
We can actually apply Theorems \ref{thm:result1} and \ref{thm:result2} to arbitrary $\Theta$ with $\max \Theta-\min \Theta= n-1$. In this situation a translation of the indices of the random variables $\{\omega_k\}_{k \in \ZZ}$ by $\min\Theta$ transforms  the model to the case $\min \Theta = 0$ and $\max \Theta = n-1$. Note that $\min\Theta $ and $\max\Theta $ are well defined since $\Theta \subset \RR$ is finite.
\begin{remark} \label{remark:localization}
\begin{enumerate}[(i)]
\item
Our proof gives estimates about fractional moments of \emph{certain} matrix elements
of the resolvent for somewhat more general models. Let us formulate this class of random potentials next. Assume that $V_\omega:= V_\omega^{(1)} +V_\omega^{(2)}$ where
$V_\omega^{(1)}, V_\omega^{(2)}\colon \ZZ\to \RR$ are  potentials indexed by the random parameter $\omega$ in some probability space $ \Omega$. Assume that $u \colon \ZZ \to \RR$ has support equal to $\{0,\dots , n-1\}$, and that there exists a sequence $\lambda_k\colon \Omega \to \RR$ of i.\,i.\,d. random variables indexed by $k \in n \ZZ$, each being distributed according to a density  $\rho \in L^\infty(\RR)$.
Assume that $V_\omega^{(1)}(x) = \sum_{ k \in  n\ZZ} \lambda_k (\omega ) u (x-k)$ and that $V_\omega^{(2)}$ is uniformly bounded on $\Omega \times \ZZ$, but otherwise arbitrary. If $F \colon \Omega \to [0,\infty)$ is a random variable  we denote its
 average over all random variables $\lambda_k$, $k \in n \ZZ$, by $\EE^{(1)}(F) := \int F(\omega) \prod_{k \in n\ZZ} \rho(\omega_k) \drm \omega_k$, where the domain  of integration is $\bigtimes_{k \in n\ZZ} \RR$. It follows directly from the iterative application of Lemma \ref{lemma:finitness1} that for all $ p \in \NN$ and for the constant $C_{u,\rho}$ defined in \eref{eq:cucrho} we have
\begin{equation}\label{e:subsequence}
 \EE^{(1)} \bigl\{| G_\omega (z;0,np-1) |^{s/n}\bigr\} \leq C_{u,\rho}^{p}.
\end{equation}
A  decomposition of the type $V_\omega:= V_\omega^{(1)} +V_\omega^{(2)}$ is implicitly used in the proof of Theorem \ref{thm:result2}, given in Section~\ref{sec:gen}.
Note, that in this particular situation the two stochastic processes $V_\omega^{(1)}, V_\omega^{(2)}$ are \emph{not} independent from each other. If $V_\omega^{(2)}\equiv 0$ then the full potential $V_\omega$ equals $\sum_{ k \in  n\ZZ} \lambda_k (\omega ) u (x-k)$. Hence, in this case the bound \eref{e:subsequence} also holds true. 
\item The statements of Theorems  \ref{thm:result1} and \ref{thm:result2} concern only off-diagonal elements. If we assume that $\rho$ has compact support, $\mathbb{E} \bigl\{ | G_\omega (E+i0;x,y) |^{s} \bigr\}$ is finite for any $x,y \in \ZZ$ and $s>0$ sufficiently small. This is proven in Section~\ref{sec:apriori}.
\item Thus in this situation we have full control over fractional moments, which for the
usual Anderson model with {i.\,i.\,d.} potential values suffices to prove spectral and dynamical localization. However, for our model neither dynamical nor spectral localization can be directly inferred using the existent methods in \cite{SimonW1986,Aizenman1994,Graf1994,AizenmanENSS2006}, see Section~\ref{sec:loc}.
The reason is that the random variables $V_\bullet (x)$, $x \in \ZZ$, are not independent, while the dependence of $H_\omega$ on the {i.\,i.\,d.} variables $\omega_x$, $x \in \ZZ$, is not monotone.
\item In Section~\ref{sec:loc} we provide a new criterion for spectral localization without applying the multiscale analysis. It deduces from fractional moment bounds and the fact that the set of generalized eigenvalues has full spectral measure almost sure exponential decay of eigenfunctions. In fact it can be extended to more general random potentials, as long as the correlation length is finite.
\item In this context it is natural to ask whether it is possible to extract a positive part from the random potential in such a way, that the original methods for deriving fractional moment bounds apply. It turns out that this is not possible in general (even in one space dimension), but that the corresponding class of single-site potentials can be characterized in the following way:
If the polynomial $p_u(x): = \sum_{k=0}^{n-1} u(k) \, x^k$ does not vanish on $[0, \infty)$ it is possible to extract from $V_\omega$ a positive single-site potential with certain additional properties. In this situation  the method of \cite{AizenmanENSS2006} applies and gives exponential decay of the fractional moments of the Green's function. This is worked out in detail in \ref{sec:monotone}.
\end{enumerate}
\end{remark}
%
%
%
%
%
%
\section{Fractional moment bounds for Green's function} \label{sec:finitness}
In this section we present fractional moment bounds for Green's function. A very useful observation is that ``important'' matrix elements of the resolvent are given by the inverse of a determinant. The latter can be controlled using the following spectral averaging lemma for determinants.
\begin{lemma} \label{lemma:det}
 Let $n \in \NN$ and $A, V \in \CC^{n \times n}$ be two matrices and assume that $V$ is invertible. Let further $0 \leq \rho \in L^1(\RR) \cap L^\infty (\RR)$ and $s \in (0,1)$. Then we have for all $\lambda > 0$ the bound
\begin{eqnarray}
 \int_{\RR}  \abs{\det (A + rV)}^{-s/n} & \rho (r) \drm r
\leq \abs{\det V}^{-s/n} \| \rho \|_{L^1}^{1-s} \|\rho\|_{\infty}^{s} \frac{2^{s} s^{-s}}{1-s} \label{eq:det1} \\[1ex]
&\leq \abs{\det V}^{-s/n}\Bigl( \lambda^{-s} \|\rho\|_{L^1} + \frac{2 \lambda^{1-s}}{1-s} \|\rho\|_\infty  \Bigr) \label{eq:det2} .
\end{eqnarray}
\end{lemma}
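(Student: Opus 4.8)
The plan is to reduce the estimate to a one-variable statement about the function $r \mapsto \det(A+rV)$. Since $V$ is invertible, write $\det(A+rV) = \det(V)\,\det(V^{-1}A + r I) = \det(V)\prod_{j=1}^n (r - z_j)$, where $z_1,\dots,z_n \in \CC$ are the eigenvalues of $-V^{-1}A$ (with multiplicity). Hence
\begin{equation*}
 \abs{\det(A+rV)}^{-s/n} = \abs{\det V}^{-s/n} \prod_{j=1}^n \abs{r-z_j}^{-s/n}.
\end{equation*}
The factor $\abs{\det V}^{-s/n}$ pulls out of the integral, and what remains is to bound $\int_\RR \prod_{j=1}^n \abs{r-z_j}^{-s/n}\rho(r)\,\drm r$. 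First I would apply the generalized H\"older inequality with exponents $n/1,\dots,n/1$ (i.e. $n$ factors each raised to power $1/n$): against the probability-like weight one gets
\begin{equation*}
 \int_\RR \prod_{j=1}^n \abs{r-z_j}^{-s/n}\rho(r)\,\drm r \le \prod_{j=1}^n \left( \int_\RR \abs{r-z_j}^{-s}\rho(r)\,\drm r \right)^{1/n},
\end{equation*}
so it suffices to bound the single-factor integral $\int_\RR \abs{r-z}^{-s}\rho(r)\,\drm r$ uniformly in $z \in \CC$, and then the product of $n$ such bounds to the power $1/n$ gives back a single copy of the bound.

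The core one-dimensional estimate is: for any $z \in \CC$ and any $0\le \rho \in L^1\cap L^\infty$,
\begin{equation*}
 \int_\RR \abs{r-z}^{-s}\rho(r)\,\drm r \le \|\rho\|_{L^1}^{1-s}\,\|\rho\|_\infty^{s}\,\frac{2^{s}s^{-s}}{1-s}.
\end{equation*}
To see this, note $\abs{r-z} \ge \abs{r-\re z}$, so we may assume $z = a \in \RR$ is real and estimate $\int_\RR \abs{r-a}^{-s}\rho(r)\,\drm r$. Split the domain into $\{\abs{r-a}\le \lambda\}$ and $\{\abs{r-a} > \lambda\}$ for a parameter $\lambda > 0$ to be optimized. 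On the near region, bound $\rho \le \|\rho\|_\infty$ and compute $\int_{\abs{r-a}\le\lambda}\abs{r-a}^{-s}\,\drm r = \frac{2\lambda^{1-s}}{1-s}$, giving the contribution $\frac{2\lambda^{1-s}}{1-s}\|\rho\|_\infty$. On the far region, bound $\abs{r-a}^{-s} \le \lambda^{-s}$ and $\int \rho \le \|\rho\|_{L^1}$, giving $\lambda^{-s}\|\rho\|_{L^1}$. This already proves \eref{eq:det2}. Then optimizing over $\lambda$ — the two terms balance at $\lambda = \bigl(s\|\rho\|_{L^1}/(2\|\rho\|_\infty)\bigr)$ up to the precise constant — yields the closed form $\|\rho\|_{L^1}^{1-s}\|\rho\|_\infty^s \cdot 2^s s^{-s}/(1-s)$, which is \eref{eq:det1}. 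Assembling: raising the single-factor bound to the $1/n$, taking the product of $n$ copies, and restoring $\abs{\det V}^{-s/n}$ gives exactly \eref{eq:det1}; keeping instead the unoptimized two-term bound gives \eref{eq:det2}.

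I expect the only real subtlety to be bookkeeping rather than a genuine obstacle: one must make sure the H\"older step is applied with the right exponents so that $n$ factors of the single-variable bound, each to the power $1/n$, collapse to one clean copy (and not, say, to the $n$-th power). A secondary point worth stating carefully is the reduction from complex $z$ to real $z$ via $\abs{r-z}\ge\abs{r-\re z}$, which is what makes the elementary split-the-integral computation legitimate; the eigenvalues $z_j$ of $-V^{-1}A$ are genuinely complex in general, so this reduction is used for every factor. The arithmetic of the $\lambda$-optimization is routine and I would not belabor it.
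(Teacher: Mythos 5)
Your proof is correct and follows essentially the same route as the paper: factor out $\abs{\det V}$ to reduce to a product $\prod_j\abs{r-z_j}$ over the $n$ roots, replace $z_j$ by $\re z_j$, apply H\"older with exponents $(n,\dots,n)$ against the weight $\rho\,\drm r$, and bound the resulting one-variable integral by a near/far split with the same optimal choice $\lambda=s\|\rho\|_{L^1}/(2\|\rho\|_\infty)$. The only cosmetic differences are that the paper identifies the $z_j$ directly as roots of the degree-$n$ polynomial $r\mapsto\det(A+rV)$ rather than as eigenvalues of $-V^{-1}A$, and performs the $\re z_j$ reduction before H\"older rather than after; both are equivalent.
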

\begin{proof}
 Since $V$ is invertible, the function $r \mapsto \det (A + rV)$ is a polynomial of order $n$ and thus the set $\{r \in \mathbb{R} \colon A + rV \mbox{ is singular}\}$ is a discrete subset of $\mathbb{R}$ with Lebesgue measure zero. We denote the roots of the polynomial by $z_1,\dots , z_n \in \CC$. By multilinearity of the determinant we have
\[
 \abs{\det (A + rV)} = \abs{\det V} \prod_{j=1}^n | r - z_j | \geq
 \abs{\det V} \prod_{j=1}^n | r - \re{z_j}| .
\]
The H\"older inequality implies for $s \in (0,1)$ that
\begin{equation*} \fl
 \int_\RR \abs{\det (A + rV)}^{-s/n} \rho (r) \drm r \leq \abs{\det V}^{-s/n} \prod_{j = 1}^n \left( \int_\RR | r - \re z_j|^{-s} \rho (r) \drm r  \right)^{1/n} .
\end{equation*}
For arbitrary $\lambda > 0$ and all $z \in \RR$ we have
\begin{eqnarray*}
\int_\RR \frac{1}{\abs{r - z}^{s}} \rho(r)\drm r &=  \int\limits_{\abs{r - z} \geq \lambda} \frac{1}{\abs{r - z}^{s}} \rho(r)\drm r + \int\limits_{\abs{r - z} \leq \lambda} \frac{1}{\abs{r - z}^{s}} \rho(r)\drm r \\[1ex]
& \leq \lambda^{-s} \|\rho\|_{L^1} + \|\rho\|_\infty \frac{2 \lambda^{1-s}}{1-s}
\end{eqnarray*}
which gives Ineq.{} \eref{eq:det2}. We now choose $\lambda = s \| \rho \|_{L^1} / (2 \| \rho \|_\infty)$ (which minimises the right hand side of Ineq.{} \eref{eq:det2}) and obtain Ineq.{} \eref{eq:det1}.
\end{proof}
In order to use the estimate of Lemma \ref{lemma:det} for our infinite-dimensional operator $G_\omega (z)$, we will use a special case of the Schur complement formula (also known as Feshbach formula or Grushin problem), see e.\,g. \cite[appendix]{BellissardHS2007}. Before providing such a formula, we will introduce some more notation.
Let $\Gamma_1 \subset \Gamma_2 \subset \ZZ$. We define the operator $\Pro_{\Gamma_1}^{\Gamma_2} : \ell^2 (\Gamma_2) \to \ell^2 (\Gamma_1)$ by
\[
 \Pro_{\Gamma_1}^{\Gamma_2} \psi := \sum_{k \in \Gamma_1} \psi (k) \delta_k .
\]
Note that the adjoint $(\Pro_{\Gamma_1}^{\Gamma_2})^* : \ell^2 (\Gamma_1) \to \ell^2 (\Gamma_2)$ is given by $(\Pro_{\Gamma_1}^{\Gamma_2})^* \phi = \sum_{k \in \Gamma_1} \phi (k) \delta_k$. If $\Gamma_2 = \ZZ$ we will drop the upper index and write $\Pro_{\Gamma_1}$ instead of $\Pro_{\Gamma_1}^{\ZZ}$.
For an arbitrary set $\Gamma \subset \ZZ$ we define the restricted operators $\Delta_\Gamma, V_\Gamma, H_\Gamma:\ell^2 (\Gamma) \to \ell^2 (\Gamma)$ by
\[
 \Delta_\Gamma := \Pro_\Gamma \Delta \Pro_\Gamma^\ast \quad \mbox{and} \quad V_\Gamma := \Pro_\Gamma V_\omega \Pro_\Gamma^\ast .
\]
Furthermore, we define $G_\Gamma (z) := (H_\Gamma - z)^{-1}$ and $G_\Gamma (z;x,y) := \bigl\langle \delta_x, G_\Gamma (z) \delta_y \bigr\rangle$ for $z \in \CC \setminus \sigma (H_\Gamma)$ and $x,y \in \Gamma$.
For an operator $T:\ell^2 (\Gamma) \to \ell^2 (\Gamma)$ the symbol $[T]$ denotes the matrix representation of $T$ with respect to the basis $\{\delta_k\}_{k \in \Gamma}$.
By $\partial \Gamma$ we denote the interior vertex boundary of the set $\Gamma$, i.\,e. $\partial \Gamma := \{k \in \Gamma : \#  \{ j \in \Gamma : \abs{j-k} = 1\} < 2 \}$.
For finite sets $\Gamma \subset \ZZ$, $\abs{\Gamma}$ denotes the number of elements of $\Gamma$. A set $\Gamma \subset \ZZ$ is called \textit{connected}
if $\partial \Gamma\subset  \{\inf\Gamma, \sup\Gamma\}$. In particular, $\ZZ$ is a connected set.
\begin{lemma} \label{lemma:fraction}
Let $\Gamma \subset \ZZ$ and $\Lambda \subset \Gamma$ be finite and connected. Then we have the identity
\begin{equation*}
 G_\Gamma (z;x,y) = \bigl \langle \delta_x , (H_{\Lambda} - B_\Gamma^\Lambda - z )^{-1} \delta_y \bigr \rangle
\end{equation*}
for all $z \in \CC \setminus \sigma(H_\Gamma) $ and all $x,y \in \Lambda$, where $B_\Gamma^\Lambda:\ell^2 (\Lambda) \to \ell^2 (\Lambda)$ is specified in Eq.{} \eref{eq:bij}. Moreover, the operator $B_\Gamma^\Lambda$ is diagonal and does not depend on $V_\omega (k)$, $k \in \Lambda$.
\end{lemma}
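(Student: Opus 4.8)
The plan is to peel off the exterior region $\Gamma\setminus\Lambda$ from the resolvent of $H_\Gamma$ by a Schur complement (Feshbach) reduction adapted to the orthogonal splitting $\ell^2(\Gamma)=\ell^2(\Lambda)\oplus\ell^2(\Gamma\setminus\Lambda)$, and then to use the hypothesis that $\Lambda$ is connected in order to read off the stated structure of the resulting effective potential $B_\Gamma^\Lambda$. Concretely, write $\Pro:=\Pro_\Lambda^\Gamma$ and $Q:=\Pro_{\Gamma\setminus\Lambda}^\Gamma$ for the two coordinate projections and represent $H_\Gamma-z$ as the operator matrix with diagonal blocks $\Pro(H_\Gamma-z)\Pro^*=H_\Lambda-z$ and $Q(H_\Gamma-z)Q^*=H_{\Gamma\setminus\Lambda}-z$ and off-diagonal blocks $B:=\Pro(H_\Gamma-z)Q^*$ and $B^*=Q(H_\Gamma-z)\Pro^*$. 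The potential and $z\,I$, being diagonal, do not contribute to the off-diagonal blocks, so $B$ is deterministic, real, and independent of $z$, and records only the (at most two) hopping bonds of $-\Delta_\Gamma$ joining $\Lambda$ to $\Gamma\setminus\Lambda$.

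First I would invoke the Schur complement formula of \cite{BellissardHS2007}: whenever $H_{\Gamma\setminus\Lambda}-z$ is invertible --- which is automatic for $z\in\CC\setminus\RR$ --- it gives, for all $x,y\in\Lambda$,
\begin{equation}\label{eq:bij}
 G_\Gamma(z;x,y)=\sprod{\delta_x}{\bigl(H_\Lambda-B_\Gamma^\Lambda-z\bigr)^{-1}\delta_y},\qquad B_\Gamma^\Lambda:=B\,(H_{\Gamma\setminus\Lambda}-z)^{-1}\,B^* .
\end{equation}
This already yields the claimed identity, so it only remains to verify the two structural assertions about $B_\Gamma^\Lambda$.

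Next I would exploit connectedness. Since $\Lambda\subset\ZZ$ is finite and connected it is an interval, $\Lambda=\{a,a+1,\dots,b\}$ with $a=\inf\Lambda$ and $b=\sup\Lambda$. Hence the only possibly nonzero entries of $B$ sit at $(a,a-1)$ and $(b,b+1)$ (with value $-1$, and present only if $a-1\in\Gamma$, resp.\ $b+1\in\Gamma$). Moreover $\Lambda$ separates $\Gamma\setminus\Lambda$ into $\Gamma_-:=\{k\in\Gamma:k<a\}$ and $\Gamma_+:=\{k\in\Gamma:k>b\}$, between which there is no nearest-neighbour bond; therefore $H_{\Gamma\setminus\Lambda}=H_{\Gamma_-}\oplus H_{\Gamma_+}$ and $(H_{\Gamma\setminus\Lambda}-z)^{-1}$ is block-diagonal with respect to $\ell^2(\Gamma_-)\oplus\ell^2(\Gamma_+)$. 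Substituting this into $B_\Gamma^\Lambda=B(H_{\Gamma\setminus\Lambda}-z)^{-1}B^*$ and using that $a-1\in\Gamma_-$ and $b+1\in\Gamma_+$ lie in different summands, all entries of $B_\Gamma^\Lambda$ vanish except the diagonal ones at $a$ and $b$, which equal $G_{\Gamma_-}(z;a-1,a-1)$ and $G_{\Gamma_+}(z;b+1,b+1)$ respectively (a missing term being read as $0$, and the two contributions adding when $\Lambda$ is a single point). This shows at once that $B_\Gamma^\Lambda$ is diagonal and that it is built only from the deterministic $B$ and from $G_{\Gamma_-},G_{\Gamma_+}$, which involve only the values $V_\omega(k)$ with $k\in\Gamma\setminus\Lambda$; hence $B_\Gamma^\Lambda$ does not depend on $V_\omega(k)$, $k\in\Lambda$.

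The lemma is largely a bookkeeping matter, and the one point I expect to need genuine care is the legitimacy of the Schur complement step, i.e.\ the invertibility of the complementary block $H_{\Gamma\setminus\Lambda}-z$. This is immediate for $z\in\CC\setminus\RR$, the range relevant to Theorems \ref{thm:result1} and \ref{thm:result2}; for the remaining real $z\in\CC\setminus\sigma(H_\Gamma)$ the operator $B_\Gamma^\Lambda$ can acquire isolated poles at points of $\sigma(H_{\Gamma_-})\cup\sigma(H_{\Gamma_+})$, but there the pertinent row and column of $(H_\Lambda-B_\Gamma^\Lambda-z)^{-1}$ collapse to $0$ in exact agreement with the vanishing of the corresponding entries of $G_\Gamma(z)$, so \eref{eq:bij} persists at those $z$ by continuation from $\CC\setminus\RR$. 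Beyond that, the only loose ends are the boundary cases in which $a-1$ or $b+1$ lies outside $\Gamma$ (for instance when $\Gamma$ is finite or $\Lambda$ abuts $\partial\Gamma$), where the associated hopping term in $B$, and hence the associated diagonal entry of $B_\Gamma^\Lambda$, is simply absent.
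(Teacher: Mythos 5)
Your proposal is correct and takes essentially the same route as the paper: invoke the Schur complement formula for the block splitting $\ell^2(\Gamma)=\ell^2(\Lambda)\oplus\ell^2(\Gamma\setminus\Lambda)$, identify $B_\Gamma^\Lambda$ as the Schur coupling term $B(H_{\Gamma\setminus\Lambda}-z)^{-1}B^*$ with $B=-\Pro_\Lambda^\Gamma\Delta_\Gamma(\Pro_{\Gamma\setminus\Lambda}^\Gamma)^*$, and use that a connected $\Lambda$ is an interval so that only the two boundary bonds contribute and $\Gamma\setminus\Lambda$ splits as $\Gamma_-\sqcup\Gamma_+$, giving diagonality of $B_\Gamma^\Lambda$ with entries $G_{\Gamma_\mp}$ depending only on $V_\omega(k)$, $k\in\Gamma\setminus\Lambda$. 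You are in fact somewhat more careful than the paper in flagging the need for $H_{\Gamma\setminus\Lambda}-z$ to be invertible and in handling real $z\in\CC\setminus\sigma(H_\Gamma)$ by analytic continuation, a point the paper's proof leaves implicit.
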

An analogous statement for arbitrary dimension was established in \cite{ElgartG}.
\begin{proof}
Since $\Lambda$ is finite, $H_\Lambda$ is bounded and the Schur complement formula gives
\begin{equation*} 
 \Pro_{\Lambda}^\Gamma (H_\Gamma - z)^{-1} \bigl(\Pro_{\Lambda}^\Gamma\bigr)^\ast = \bigl( H_{\Lambda} - z - B_\Gamma^\Lambda \bigr)^{-1} ,
\end{equation*}
where
\[
 B_\Gamma^\Lambda = \Pro_{\Lambda}^\Gamma \Delta_\Gamma \bigl(\Pro_{\Gamma \setminus \Lambda}^\Gamma\bigr)^\ast (H_{\Gamma \setminus \Lambda} - z)^{-1} \Pro_{\Gamma \setminus \Lambda}^\Gamma \Delta_\Gamma \bigl(\Pro_{\Lambda}^\Gamma\bigr)^\ast .
\]
It is straightforward to calculate that the matrix elements of $B_\Gamma^\Lambda$ are given by
\begin{equation} \label{eq:bij}
\sprod{\delta_x}{B_\Gamma^\Lambda \delta_y} =
\cases{
\sum_{k \in \Gamma \setminus \Lambda : \atop \abs{k-x} = 1} \sprod{\delta_k}{(H_{\Gamma \setminus \Lambda} - z)^{-1}\delta_k} & if $x=y \in \partial \Lambda$ , \\
\quad 0         & else .
}
\end{equation}
Here we have used that $\Lambda$ is connected.
\end{proof}
\begin{lemma} \label{lemma:finitness1}
Let $n \in \NN$, $\Theta = \{0, \dots , n-1 \}$, $s \in (0,1)$, and $\Gamma \subset \ZZ$ be connected. Then,
\begin{enumerate}[(i)]
\item for every pair $x,x+n-1 \in \Gamma$ and all $z \in \CC \setminus \RR$ we have
\begin{equation} \label{eq:finitness1}
 \mathbb{E}_{\{x\}} \bigl \{ | G_\Gamma (z;x,x+n-1)|^{s/n}  \bigr \} \leq  C_u C_\rho =: C_{u,\rho} .
\end{equation}
\item if $1 \leq \abs{\Gamma} \leq n$, we have for  all $z \in \CC \setminus \RR$ the bound
\begin{equation} \label{eq:finitness2}
 \mathbb{E}_{\{\gamma_0\}} \bigl \{ | G_\Gamma (z;\gamma_0,\gamma_1)|^{s/n}  \bigr \} \leq C_u^+ C_\rho^+ =: C_{u,\rho}^+
\end{equation}
where $\gamma_0 = \min \Gamma$ and $\gamma_1 = \max \Gamma$.
\item if $\Gamma = \{x,x+1,...\}$ and $y \in \Gamma$ with $0 \leq y - x \leq n-1$, we have for all $z \in \CC \setminus \RR$ the bound
\begin{equation} \label{eq:finitness3}
 \mathbb{E}_{\{y-n+1\}}\bigl\{| G_\Gamma (z;x,y)|^{s/n}\bigr\} \leq C_{u,+} C_{\rho}^+ =: C_{u,\rho,+} .
\end{equation}
\end{enumerate}
The constants $C_u$, $C_\rho$, $C_u^+$, $C_\rho^+$ and $C_{u,+}$ are given in Eq.~\eref{eq:cucrho}, \eref{eq:cucrhoplus} and \eref{eq:cucrhominus}.
\end{lemma}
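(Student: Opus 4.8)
The plan is to realize each of the three off-diagonal Green's function elements as the reciprocal of a single determinant and then to average that determinant over one well-chosen coupling constant using Lemma~\ref{lemma:det}. All three parts run along the same lines, with part (i) the cleanest; (ii) and (iii) differ only in the choice of the finite connected block and in the bookkeeping.

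For part (i) I would take $\Lambda:=\{x,x+1,\dots,x+n-1\}$. Because $\Gamma$ is connected and contains $x$ and $x+n-1$, the interval $\Lambda$ lies in $\Gamma$ and is finite and connected, so Lemma~\ref{lemma:fraction} gives $G_\Gamma(z;x,x+n-1)=\langle\delta_x,(H_\Lambda-B_\Gamma^\Lambda-z)^{-1}\delta_{x+n-1}\rangle$. Writing $M:=H_\Lambda-B_\Gamma^\Lambda-z$, this is an $n\times n$ matrix whose representation in $\{\delta_k\}_{k\in\Lambda}$ is tridiagonal with all off-diagonal entries equal to $-1$ (from $-\Delta_\Lambda$), since $B_\Gamma^\Lambda$ is diagonal. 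By Cramer's rule the corner element $\langle\delta_x,M^{-1}\delta_{x+n-1}\rangle$ equals $(-1)^{n+1}\det\tilde M/\det M$, where $\tilde M$ is $M$ with its last row and first column deleted; tridiagonality makes $\tilde M$ lower triangular with every diagonal entry $-1$, so $\det\tilde M=(-1)^{n-1}$ and hence $|G_\Gamma(z;x,x+n-1)|=|\det M|^{-1}$. Now $\omega_x$ enters $V_\omega(k)$ precisely for $k\in\Lambda$, with $\partial V_\omega(k)/\partial\omega_x=u(k-x)$, while $B_\Gamma^\Lambda$ depends only on $V_\omega(k)$ for $k\in\Gamma\setminus\Lambda$ (Lemma~\ref{lemma:fraction}) and is therefore independent of $\omega_x$; thus $M=A+\omega_x V$ with $A$ independent of $\omega_x$ and $V=\mathrm{diag}(u(0),u(1),\dots,u(n-1))$. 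Since $\supp u=\{0,\dots,n-1\}$, the matrix $V$ is invertible with $|\det V|=\prod_{j=0}^{n-1}|u(j)|$, and Lemma~\ref{lemma:det} (in the form \eref{eq:det1}, using $\|\rho\|_{L^1}=1$) bounds $\mathbb{E}_{\{x\}}\{|G_\Gamma(z;x,x+n-1)|^{s/n}\}$ by $|\det V|^{-s/n}\,\|\rho\|_\infty^{s}\,2^{s}s^{-s}/(1-s)$, which is the claim with $C_u:=\prod_{j=0}^{n-1}|u(j)|^{-s/n}$ and $C_\rho:=\|\rho\|_\infty^{s}\,2^{s}s^{-s}/(1-s)$.

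For part (ii) I would take $\Lambda=\Gamma$, so $B_\Gamma^\Lambda=0$ and no Schur reduction is needed; with $m:=|\Gamma|\le n$ the corner element of $(H_\Gamma-z)^{-1}$ is again $\pm\det(H_\Gamma-z)^{-1}$ by the same tridiagonal minor computation, the coupling constant $\omega_{\gamma_0}$ enters all $m$ diagonal entries with coefficients $u(0),\dots,u(m-1)$ (using $\gamma_1-\gamma_0=m-1\le n-1$, so these lie in $\supp u$ and are nonzero), and one applies Lemma~\ref{lemma:det} with matrix dimension $m$ and parameter $t:=sm/n$, which lies in $(0,s]\subset(0,1)$ and satisfies $t/m=s/n$. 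For part (iii) I would take $\Lambda=\{x,x+1,\dots,y\}$, of size $m=y-x+1\le n$; since $y-x\le n-1$ the averaged coupling constant $\omega_{y-n+1}$ satisfies $y-n+1\le x$ and enters $V_\omega(k)$ only for $k\in\Lambda$ or $k\notin\Gamma$, so again $B_\Gamma^\Lambda$ does not depend on it, and on the diagonal of $M=H_\Lambda-B_\Gamma^\Lambda-z$ it appears with coefficients $u(x-y+n-1),u(x-y+n),\dots,u(n-1)$, a block of $m$ consecutive values of $u$ contained in $\{0,\dots,n-1\}$ and hence nonzero; Lemma~\ref{lemma:det} with dimension $m$ and parameter $sm/n$ then completes the estimate. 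In (ii) and (iii) the constants can be taken uniform in the admissible $m$ (and, in (iii), in the admissible starting index) because $t\mapsto 2^{t}t^{-t}/(1-t)$ is bounded on $(0,s]$.

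The only delicate point I anticipate is the combinatorial bookkeeping in each case: one must check that the single averaged coupling constant meets exactly the diagonal of the relevant $m\times m$ matrix, with coefficients forming a contiguous sub-block of $(u(0),\dots,u(n-1))$ — which is precisely where the hypothesis $\Theta=\{0,\dots,n-1\}$ (full support on a connected set) enters, guaranteeing that the coefficient matrix $V$ is invertible — together with keeping track of the sign in Cramer's rule so that the numerator determinant comes out to be $\pm1$. Beyond that, everything reduces to substituting into Lemmas~\ref{lemma:fraction} and \ref{lemma:det}.
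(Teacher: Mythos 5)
Your proposal is correct and follows essentially the same route as the paper: reduce the Green's function matrix element to the reciprocal of a determinant via Lemma~\ref{lemma:fraction} and Cramer's rule, decompose the relevant matrix as $A+\omega_\ast V$ with $V$ diagonal and invertible (using $\supp u = \{0,\dots,n-1\}$), and invoke Lemma~\ref{lemma:det}, rescaling the exponent to $t = sm/n$ in parts (ii) and (iii) and using monotonicity/boundedness of $t\mapsto 2^t t^{-t}/(1-t)$ to make the constant uniform. The bookkeeping you anticipate as delicate is exactly where the paper's argument also concentrates its care, and your identification of the affected diagonal entries and their coefficients agrees with Eq.~\eref{eq:cucrho}--\eref{eq:cucrhominus}.
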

\begin{proof}
We start with the first statement of the lemma. By assumption $x,x+n-1 \in \Gamma$. We apply Lemma \ref{lemma:fraction} with $\Lambda := \{x, x+1, \dots, x+n-1\}\subset \Gamma$ (since $\Gamma$ is connected) and obtain for all $x,y \in \Lambda$
\[
 G_\Gamma (z;x,y) = \bigl \langle \delta_x , (H_{\Lambda} - B_\Gamma^\Lambda - z)^{-1} \delta_{y} \bigr \rangle ,
\]
where the operator $B_\Gamma^\Lambda$ is given by Eq.{} \eref{eq:bij}.
Set $D = H_{\Lambda} - B_\Gamma^\Lambda - z$.
By Cramer's rule we have $G_\Gamma (z;x,y) =  \det C_{y,x} / \det [D]$. Here, ${C}_{i,j} = (-1)^{i+j} M_{i,j}$ and $M_{i,j}$ is obtained from the tridiagonal matrix $[D]$ by deleting row $i$ and column $j$. Thus $C_{x+n-1,x}$ is a lower triangular matrix with determinant $\pm 1$. Hence,
\[
 \abs{G_\Gamma (z;x,x+n-1)} = \frac{1}{\abs{\det [D]}}.
\]
Since $\Theta = \supp u = \{0,\dots,n-1 \}$, every potential value $V_\omega (k)$, $k \in \Lambda$, depends on the random variable $\omega_{x}$, while the operator $B_\Gamma^\Lambda $ is independent of $\omega_x$. Thus we may write $[D]$ as a sum of two matrices
\[
 [D] = A + \omega_{x} V,
\]
where $V \in \RR^{n \times n}$ is diagonal with the elements $u(k-x)$, $k=x,\dots,x+n-1$, and $A := [D] - \omega_x V$. Since $A$ is independent of $\omega_x$ we may apply Lemma \ref{lemma:det} and obtain for all $s \in (0,1)$ the estimate \eref{eq:finitness1} with
\begin{equation} \label{eq:cucrho}
 C_u = \Bigl | \prod_{k \in \Theta} u(k) \Bigr |^{-s/n} \quad \mbox{and} \quad C_\rho =
 \|\rho\|_{\infty}^{s} \frac{2^{s} s^{-s}}{1-s} .
\end{equation}
The proof of Ineq.{} \eref{eq:finitness2} is similar but does not require Lemma \ref{lemma:fraction}. We have the decomposition  $[H_\Gamma - z] = \tilde A + \omega_{\gamma_0} \tilde V$, where
$d := \gamma_1-\gamma_0$,  $\tilde{V}\in \RR^{(d+1) \times (d+1)}$ is diagonal with elements $u (k - \gamma_0)$, $k = \gamma_0,\dots,\gamma_1$, and $\tilde A :=[H_\Gamma - z] - \omega_{\gamma_0} \tilde V$ is independent of $\omega_{\gamma_0}$. By Cramer's rule and Lemma \ref{lemma:det} we obtain
\begin{equation*}
 \mathbb{E}_{\{\gamma_0\}} \bigl \{ | G_\Gamma (z;\gamma_0,\gamma_1)|^{t/(d+1)}  \bigr \} \leq
\Bigl | \prod_{k=0}^{d} u(k) \Bigr |^{-t/(d+1)} \|\rho\|_{\infty}^{t} \frac{2^{t} t^{-t}}{1-t}
\end{equation*}
for all $t \in (0,1)$. We choose $t = s \frac{d+ 1}{n}$ and obtain Ineq.{} \eref{eq:finitness2} with the constants
\begin{equation} \label{eq:cucrhoplus}
 C_u^+ = \max_{i \in \Theta} \Bigl | \prod_{k=0}^{i} u(k) \Bigr |^{-s/n} \quad \mbox{and} \quad C_\rho^+ = \frac{\max \bigl \{ \|\rho\|_{\infty}^{s} , \|\rho\|_{\infty}^{s/n} \bigr \} }{2^{-s} s^{s}(1-s)} .
\end{equation}
In the final step we have used $s \geq t$ and the monotonicity of $(0,1) \ni x \mapsto 2^x x^{-x}/(1-x)$. For the proof of the third statement we apply Lemma \ref{lemma:fraction} with $\Lambda = \{x,\dots,y\}$ and obtain using Cramer's rule $\abs{G_\Gamma (z;x,y)} = | 1 / \det [H_\Lambda - B_\Gamma^\Lambda - z]|$. Set $d:= y-x$. Notice that $B_\Gamma^\Lambda$ is independent of $\omega_{y-n+1}$, while every potential value $V_\omega (k)$, $k \in \Lambda$, depends on $\omega_{y-n+1}$. Thus we have the decomposition $[H_\Lambda - B_\Gamma^\Lambda - z] = A + \omega_{y-n+1} V$, where $V \in \RR^{(d+1)\times(d+1)}$ is diagonal with the elements $u(k)$, $k=n-1-d,\dots,n-1$, and $A:= [H_\Lambda - B_\Gamma^\Lambda - z] - \omega_{y-n+1} V$. Since $A$ is independent of $\omega_{y-n+1}$ we may apply Lemma~\ref{lemma:det} and obtain for all $t \in (0,1)$
\[
 \EE_{\{y-n+1\}} \Bigl\{ \bigl| G_\Gamma (z;x,y) \bigr|^{s/(d+1)} \Bigr\} \leq
 \Bigl| \prod_{k=n-1-d}^{n-1} u(k)\Bigr|^{-t/(d+1)} \| \rho \|_\infty^t \frac{2^t t^{-t}}{1-t} .
\]
We choose $t = s \frac{d+1}{n}$ and obtain Ineq~\eref{eq:finitness3} with
\begin{equation} \label{eq:cucrhominus}
 C_{u,+} := \max_{i \in \Theta} \Bigl|\prod_{k=n-1-i}^{n-1} u(k)\Bigr|^{-s/n} .
\end{equation}
Here we have used $s \geq t$ and the monotonicity of $(0,1) \ni x \mapsto 2^x x^{-x}/(1-x)$.
\end{proof}
%
%
%
%
%
%
\section{Exponential decay of Green's function} \label{sec:exp}
In this section we use so-called ``depleted'' Hamiltonians to formulate a geometric resolvent formula. Such Hamiltonians are obtained by setting to zero the ``hopping terms'' of the Laplacian along a collection of bonds. More precisely, let $\Lambda \subset \Gamma \subset \ZZ$ be arbitrary sets. We define the depleted Laplace operator $\Delta_\Gamma^\Lambda :\ell^2 (\Gamma) \to \ell^2 (\Gamma)$ by
\begin{equation*} \label{eq:de1} \fl
 \sprod{\delta_x}{\Delta_\Gamma^\Lambda \delta_y} :=
\cases{
  0 & if $x \in \Lambda$, $y \in \Gamma \setminus \Lambda$ or $y \in \Lambda$, $x \in \Gamma \setminus \Lambda$ , \\
  \bigl \langle \delta_x , \Delta_\Gamma \delta_y \bigr \rangle & else.
}
\end{equation*}
In other words, the hopping terms which connect $\Lambda$ with $\Gamma \setminus \Lambda$ or vice versa are deleted. The depleted Hamiltonian $H_\Gamma^\Lambda : \ell^2 (\Gamma) \to \ell^2 (\Gamma)$ is then defined by
\[
 H_\Gamma^\Lambda := -\Delta_\Gamma^\Lambda + V_\Gamma .
\]
Let further $T_\Gamma^\Lambda := \Delta_\Gamma - \Delta_\Gamma^\Lambda$ be the difference between the the ``full'' Laplace operator and the depleted Laplace operator.
Analogously to Eq.{} \eref{eq:greens} we use the notation $G_\Gamma^\Lambda (z) := (H_\Gamma^\Lambda - z)^{-1}$ and $G_\Gamma^\Lambda (z;x,y) := \bigl \langle \delta_x, G_\Gamma^\Lambda(z) \delta_y \bigr \rangle$. The second resolvent identity yields for arbitrary sets $\Lambda \subset \Gamma \subset \ZZ$
\begin{eqnarray}
 G_\Gamma (z)   & = G_\Gamma^\Lambda (z) + G_\Gamma (z) T_\Gamma^\Lambda G_\Gamma^\Lambda (z)             \label{eq:resolvent} \\[1ex]
        & = G_\Gamma^\Lambda (z) + G_\Gamma^\Lambda (z)T_\Gamma^\Lambda G_\Gamma (z) .
          \label{eq:resolvent2}
\end{eqnarray}
In the following we will use that $G_\Gamma^\Lambda (z;x,y) = G_\Lambda (z;x,y)$ for all $x,y \in \Lambda$, since $H_\Gamma^\Lambda$ is block-diagonal, and that $G_\Gamma^\Lambda (z;x,y) = 0$ if $x \in \Lambda$ and $y \not \in \Lambda$ or vice versa.
\begin{lemma} \label{lemma:iteration1}
 Let $n \in \NN$, $\Theta = \{0,\dots,n-1 \}$, $\Gamma \subset \ZZ$ be connected, and $s \in (0,1)$.
 Then we have for all $x,y \in \Gamma$ with $y-x \geq n$, $\Lambda := \{x+n,x+n+1,\dots\}\cap \Gamma$ and all $z \in \CC \setminus \RR$ the bound
\begin{equation*}
 \mathbb{E}_{\{x\}}\bigl\{| G_\Gamma (z;x,y)|^{s/n}\bigr\} \leq C_{u,\rho} \cdot | G_\Lambda (z;x+n,y)|^{s/n} .
\end{equation*}
In particular,
\begin{equation}\label{eq:iteration1}
 \mathbb{E} \bigl\{| G_\Gamma (z;x,y)|^{s/n}\bigr\} \leq C_{u,\rho} \cdot \mathbb{E} \bigl\{| G_\Lambda (z;x+n,y)|^{s/n}\bigr\} .
\end{equation}

\end{lemma}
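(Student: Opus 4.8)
The plan is to use the geometric resolvent identity \eref{eq:resolvent2} with the set $\Lambda = \{x+n, x+n+1, \dots\} \cap \Gamma$ to express $G_\Gamma(z;x,y)$ in terms of the block-diagonal resolvent $G_\Gamma^\Lambda(z)$ and the full resolvent $G_\Gamma(z)$, and then to exploit the fact that the single bond connecting $\Lambda$ to its complement in $\Gamma$ contributes exactly one hopping term. First I would observe that since $\Gamma$ is connected and $x, y \in \Gamma$ with $y - x \geq n$, the point $x+n \in \Gamma$, and $\Gamma \setminus \Lambda = \{k \in \Gamma : k \leq x+n-1\}$ is an ``initial segment'' of $\Gamma$ containing $x$. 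The operator $T_\Gamma^\Lambda = \Delta_\Gamma - \Delta_\Gamma^\Lambda$ is supported on the single bond $\{x+n-1, x+n\}$, so $\sprod{\delta_a}{T_\Gamma^\Lambda \delta_b}$ is nonzero only for $(a,b) \in \{(x+n-1,x+n),(x+n,x+n-1)\}$, where it equals $1$. Applying \eref{eq:resolvent2} to the matrix element $\sprod{\delta_x}{(\cdot)\delta_y}$ and using that $G_\Gamma^\Lambda(z;x,y) = 0$ because $x \notin \Lambda$ while $y \in \Lambda$, I get
\[
 G_\Gamma(z;x,y) = G_\Gamma^\Lambda(z;x,x+n-1) \cdot 1 \cdot G_\Gamma(z;x+n,y),
\]
since the only surviving term in the double sum is the one routing through the bond $(x+n-1, x+n)$, and $G_\Gamma^\Lambda(z;x, \cdot)$ restricted to $\Gamma \setminus \Lambda$ equals $G_{\Gamma \setminus \Lambda}(z;x, \cdot)$ by block-diagonality.

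Next I would take absolute values, raise to the power $s/n$, and use the elementary fact $(abc)^{s/n} = a^{s/n} b^{s/n} c^{s/n}$ for nonnegative reals, obtaining
\[
 |G_\Gamma(z;x,y)|^{s/n} = |G_{\Gamma\setminus\Lambda}(z;x,x+n-1)|^{s/n} \cdot |G_\Gamma(z;x+n,y)|^{s/n}.
\]
Now I would apply $\EE_{\{x\}}$. Here is the key point: the second factor $G_\Gamma(z;x+n,y)$ — wait, this does depend on $\omega_x$ through $V_\omega$ on all of $\Gamma$. So I need to be more careful: I should instead route so that the factor carrying the $\omega_x$-dependence is the one to which Lemma \ref{lemma:finitness1}(i) applies. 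The correct move is to use $G_\Gamma(z;x+n,y) = G_\Gamma^\Lambda(z)$-expression — actually, reversing, I would use that $G_\Lambda(z;x+n,y)$ is independent of $\omega_x$ (since $\Theta = \{0,\dots,n-1\}$ means $V_\omega(k)$ for $k \geq x+n$ does not involve $\omega_x$), and that the term I must estimate is $\EE_{\{x\}}\{|G_{\Gamma\setminus\Lambda}(z;x,x+n-1)|^{s/n}\}$. Since $\Gamma \setminus \Lambda$ is connected and contains $x$ and $x+n-1$, Lemma \ref{lemma:finitness1}(i) bounds this expectation by $C_{u,\rho}$. Pulling the $\omega_x$-independent factor $|G_\Lambda(z;x+n,y)|^{s/n}$ out of the expectation then gives the pointwise bound; integrating over all remaining variables and using $G_\Gamma^\Lambda(z;x+n,y) = G_\Lambda(z;x+n,y)$ yields \eref{eq:iteration1}.

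The main obstacle — and the step requiring genuine care — is verifying that the decomposition via \eref{eq:resolvent2} really collapses to a single product term with no residual dependence issues: specifically, confirming that $G_\Gamma^\Lambda(z;x, x+n-1) = G_{\Gamma\setminus\Lambda}(z;x,x+n-1)$ depends on $\omega_x$ in exactly the way controlled by Lemma \ref{lemma:finitness1}(i), and that the remaining factor $G_\Gamma(z;x+n,y)$ can be correctly replaced (after taking the full expectation, via \eref{eq:resolvent2} in the other direction or by the block structure of $H_\Gamma^\Lambda$) by $G_\Lambda(z;x+n,y)$, which is measurable with respect to $\{\omega_k : k \geq x+1\}$ and hence independent of $\omega_x$. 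One must also check the degenerate case where $\Lambda$ or $\Gamma \setminus \Lambda$ is small or where $y = x+n$ exactly, but these are handled uniformly by the connectedness hypothesis and the stated range $y - x \geq n$.
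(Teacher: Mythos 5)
Your overall strategy --- decouple via the geometric resolvent identity so that one factor is handled by Lemma~\ref{lemma:finitness1}(i) and the other is independent of $\omega_x$ --- is exactly right, and you correctly spotted the pitfall: starting from Eq.~\eref{eq:resolvent2} gives
\[
 G_\Gamma(z;x,y) = G_{\Gamma\setminus\Lambda}(z;x,x+n-1)\,G_\Gamma(z;x+n,y),
\]
in which the second factor \emph{does} depend on $\omega_x$ (the potential at $x,\dots,x+n-1\in\Gamma$ involves $\omega_x$), so $\mathbb{E}_{\{x\}}$ does not factorize.

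The problem is your fix. After ``reversing,'' the factorization you end up claiming is $G_\Gamma(z;x,y)=G_{\Gamma\setminus\Lambda}(z;x,x+n-1)\,G_\Lambda(z;x+n,y)$, but this is \emph{not} an identity: it mixes the first factor from \eref{eq:resolvent2} with the second factor from \eref{eq:resolvent}. The correct formula that comes out of \eref{eq:resolvent} (with $G_\Gamma^\Lambda(z;x,y)=0$ and the only surviving bond term being $(x+n-1,x+n)$, since $G_\Gamma^\Lambda(z;x+n-1,y)=0$ for $y\in\Lambda$) is
\[
 G_\Gamma(z;x,y) = G_\Gamma(z;x,x+n-1)\,G_\Lambda(z;x+n,y),
\]
with first factor on the \emph{full} set $\Gamma$, not on $\Gamma\setminus\Lambda$. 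This is precisely the paper's Eq.~\eref{e:geometric-resolvent}. The second factor is independent of $\omega_x$ (as you argue), and the first factor is controlled by Lemma~\ref{lemma:finitness1}(i) applied to $\Gamma$ itself, which is connected and contains $x$ and $x+n-1$. Since that lemma holds for any connected set containing $x$ and $x+n-1$, your wrong choice of $\Gamma\setminus\Lambda$ happens to produce the same constant $C_{u,\rho}$ --- but the product you wrote is not equal to $G_\Gamma(z;x,y)$, so the argument as stated is not valid. Replacing $G_{\Gamma\setminus\Lambda}(z;x,x+n-1)$ by $G_\Gamma(z;x,x+n-1)$ repairs the proof and reproduces the paper's argument exactly.
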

\begin{proof}
Our starting point is Eq.{} \eref{eq:resolvent}. Taking the matrix element $(x,y)$ yields
\begin{equation*}
 G_\Gamma (z;x,y) = G_\Gamma^{\Lambda} (z;x,y) + \bigl \langle \delta_x , G_\Gamma (z) T_\Gamma^\Lambda G_\Gamma^\Lambda (z) \delta_y \bigr \rangle .
\end{equation*}
Since $x \not \in \Lambda$ and $y \in \Lambda$, the first summand on the right vanishes as the depleted Green's function $G_\Gamma^{\Lambda} (z;x,y)$ decouples $x$ and $y$. For the second summand we calculate
\begin{equation}\label{e:geometric-resolvent}
 G_\Gamma (z;x,y) = G_\Gamma (z;x,x+n-1) G_\Lambda (z;x+n,y)   .
\end{equation}
The second factor is independent of $\omega_x$. Thus, taking expectation with respect to $\omega_x$ bounds the first factor using Ineq.{} \eref{eq:finitness1} and the proof is complete.
\end{proof}
\begin{lemma} \label{lemma:iteration2}
 Let $n \in \NN$, $\Theta = \{0,\dots,n-1\}$, $\Gamma = \{x,x+1,...\}$, $y \in \Gamma$ with $n \leq y - x < 2n$, and $s \in (0,1)$. Then we have for all $z \in \CC \setminus \RR$ the bound
\begin{equation} \label{eq:iteration2}
 \mathbb{E}_{\{y-n+1,x\}}\bigl\{| G_\Gamma (z;x,y)|^{s/n}\bigr\} \leq C_{u,\rho}^+ C_{u,\rho} .
\end{equation}
\end{lemma}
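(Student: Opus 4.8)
The plan is to combine the single geometric–resolvent step already used in the proof of Lemma~\ref{lemma:iteration1} with the a priori estimates of Lemma~\ref{lemma:finitness1}, integrating out $\omega_x$ and $\omega_{y-n+1}$ one after the other. Concretely, set $\Lambda:=\{x+n,x+n+1,\dots\}$, which equals $\{x+n,x+n+1,\dots\}\cap\Gamma$ since $\Gamma=\{x,x+1,\dots\}$. Because $y-x\ge n$ we have $y\in\Lambda$, while $x$ and $x+n-1$ lie in $\Gamma\setminus\Lambda$ and are joined by the single cut bond $\{x+n-1,x+n\}$; hence, exactly as in \eref{e:geometric-resolvent},
\begin{equation*}
 G_\Gamma (z;x,y) = G_\Gamma (z;x,x+n-1)\, G_\Lambda (z;x+n,y).
\end{equation*}

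First I would raise this identity to the power $s/n$ and take $\mathbb{E}_{\{x\}}$. Since $\Theta=\{0,\dots,n-1\}$ and $\Lambda$ starts at $x+n$, the potential $V_\Lambda$, and hence the second factor $G_\Lambda (z;x+n,y)$, does not depend on $\omega_x$; it therefore pulls out of the $\omega_x$-average, while the first factor is controlled by Lemma~\ref{lemma:finitness1}(i), giving $\mathbb{E}_{\{x\}}\{|G_\Gamma (z;x,x+n-1)|^{s/n}\}\le C_{u,\rho}$ uniformly in all the remaining coupling constants. This reduces matters to bounding $\mathbb{E}_{\{y-n+1\}}\{|G_\Lambda (z;x+n,y)|^{s/n}\}$. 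Now $\Lambda=\{x+n,x+n+1,\dots\}$ is a half-line with $\min\Lambda=x+n$, and $0\le y-(x+n)=(y-x)-n\le n-1$ by the hypothesis $y-x<2n$; so Lemma~\ref{lemma:finitness1}(iii), applied with $(\Gamma,x,y)$ there replaced by $(\Lambda,x+n,y)$ — equivalently, restricting $G_\Lambda (z;x+n,y)$ via Lemma~\ref{lemma:fraction} to the box $\{x+n,\dots,y\}$, which has at most $n$ sites, and invoking part (ii) — provides the required a priori bound. Multiplying the two estimates yields \eref{eq:iteration2}.

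The step most likely to hide a slip is the bookkeeping of dependence on the coupling constants: one must verify that the Schur-complement term $B$ arising inside Lemma~\ref{lemma:finitness1} when restricting to a sub-box is insensitive to the variable being integrated out (here $\omega_x$ in the first step and $\omega_{y-n+1}$ in the second), so that the two expectations can indeed be evaluated successively with each intermediate bound holding pointwise in the remaining randomness. Checking that the relevant windows of indices line up is precisely where the two-sided condition $n\le y-x<2n$ enters; once this is in place there is no further analytic obstacle, the essential work having already been done in Lemma~\ref{lemma:det}.
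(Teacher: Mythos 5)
Your decomposition is genuinely different from the paper's, and the difference shows in the constant you obtain. The paper starts from the identity \eref{eq:resolvent2} with $\Lambda=\{x,\dots,y-n\}$ (a \emph{finite} box on the left), giving $G_\Gamma(z;x,y)=G_\Lambda(z;x,y-n)\,G_\Gamma(z;y-n+1,y)$; it averages $\omega_{y-n+1}$ on the second factor using Lemma~\ref{lemma:finitness1}(i) to get $C_{u,\rho}$, then applies part (ii) to the first factor (averaging $\omega_{\gamma_0}=\omega_x$, permissible since $1\leq|\Lambda|\leq n$) to get $C_{u,\rho}^+$, which recovers exactly the stated bound $C_{u,\rho}^+C_{u,\rho}$. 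You instead cut on the other side, using \eref{eq:resolvent} with the half-line $\Lambda=\{x+n,x+n+1,\dots\}$ and the factorization $G_\Gamma(z;x,y)=G_\Gamma(z;x,x+n-1)\,G_\Lambda(z;x+n,y)$, average $\omega_x$ via part (i), then $\omega_{y-n+1}$ via part (iii). This is a valid argument --- the dependence bookkeeping does check out, and the two averages do pass through pointwise --- but part (iii) produces the constant $C_{u,\rho,+}=C_{u,+}C_\rho^+$, not $C_{u,\rho}^+=C_u^+C_\rho^+$. In Eq.~\eref{eq:cucrhoplus} and \eref{eq:cucrhominus} the factors $C_u^+$ and $C_{u,+}$ are built from partial products of $u$ anchored at the left and right ends of $\Theta$, respectively, and neither dominates the other in general. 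So your chain of estimates establishes $\mathbb{E}_{\{y-n+1,x\}}\{|G_\Gamma(z;x,y)|^{s/n}\}\leq C_{u,\rho}\,C_{u,\rho,+}$, a bound of the same quality as \eref{eq:iteration2} but not the constant asserted in the lemma. (It is, incidentally, closer in spirit to how the paper proves Theorem~\ref{thm:result1} directly via part (iii), bypassing Lemma~\ref{lemma:iteration2} altogether.)

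One further slip: the parenthetical ``equivalently, restricting \dots\ and invoking part (ii)'' is not an equivalence. Applying part (ii) to the box $\{x+n,\dots,y\}$ would average $\omega_{\gamma_0}=\omega_{x+n}$, not $\omega_{y-n+1}$ --- these coincide only when $y=x+2n-1$ --- and the operator one obtains after restricting via Lemma~\ref{lemma:fraction} carries a Schur-complement term $B$, which part (ii) does not accommodate (it treats $H_\Gamma$ itself, with no $B$). Part (iii), the one you actually cite, is the right tool; just be aware of which constant it returns. If you want the lemma's constant exactly as stated, cut at $y-n$ rather than at $x+n-1$, as the paper does.
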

\begin{proof}
The starting point is Eq.{} \eref{eq:resolvent2}. Choosing $\Lambda = \{x,\dots,y-n\}$ gives
\[
 G_\Gamma (z;x,y) = G_\Lambda (z;x,y-n) G_\Gamma (z;y-n+1,y) .
\]
Since $G_\Lambda (z;x,y-n)$ depends only on the potential values at lattice sites in $\Lambda$
it is independent of $\omega_{y-n+1}$. We take expectation with respect to $\omega_{y-n+1}$ to bound the second factor of the above identity using Ineq.{} \eref{eq:finitness1}.
Since $1 \leq \abs{\Lambda} \leq n$ by assumption, we may apply Ineq.{} \eref{eq:finitness2} to $G_\Lambda (z;x,y-n)$ which ends the proof.
\end{proof}
The proof of the following theorem will serve as a basis to complete the proof of
\begin{itemize}
 \item[(i)] Theorem \ref{thm:result1} at the end of this section,
 \item[(ii)] Theorem \ref{thm:result2} in Section~\ref{sec:gen}.
\end{itemize}
The difference between the proof of Theorem~\ref{thm:result1} and Theorem~\ref{theorem:exp} is, that the latter is better suited for a generalization to single-site potentials with disconnected support.
\begin{theorem} \label{theorem:exp}
Let $\Theta = \{0,\dots,n-1\}$, $\Gamma \subset \ZZ$ connected and $s \in (0,1)$. Assume
\begin{equation} \label{eq:disorder}
 \|\rho\|_\infty < \frac{(1-s)^{1/s}}{2s^{-1}} \Bigl | \prod_{k=0}^{n-1} u(k) \Bigr|^{1/n} .
\end{equation}
Then $m = - \ln C_{u,\rho}$ is strictly positive and
\[
 \mathbb{E} \bigl\{ | G_\Gamma (z;x,y)|^{s/n} \bigr\} \leq C_{u,\rho}^+ \,\, \exp \Biggl\{-m \Biggl \lfloor \frac{| x-y|}{n} \Biggl\rfloor\Biggr\}
\]
for all $x,y \in \Gamma$ with $| x-y | \geq 2n$ and all $z \in \CC \setminus \RR$. Here, $\lfloor \cdot \rfloor$ is defined by $\lfloor z \rfloor := \max\{k\in \ZZ \mid k\leq z\}$.
\end{theorem}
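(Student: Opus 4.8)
The plan is to iterate Lemma~\ref{lemma:iteration1} roughly $|x-y|/n$ times, integrating out one fresh coupling constant at each step, and then to dispose of the short residual block by repeating the argument used for Lemma~\ref{lemma:iteration2}. Using the symmetry $G_\Gamma(z;x,y)=G_\Gamma(z;y,x)$ of the resolvent we may assume $x<y$; put $d:=y-x$ and $k:=\lfloor d/n\rfloor$, so that $k\ge 2$ (because $d\ge 2n$) and $kn\le d<(k+1)n$, whence $n\le d-(k-1)n<2n$. One first observes that the disorder hypothesis \eref{eq:disorder} is precisely equivalent to the inequality $C_{u,\rho}=\bigl|\prod_{k=0}^{n-1}u(k)\bigr|^{-s/n}\|\rho\|_\infty^{s}\,2^{s}s^{-s}/(1-s)<1$, so that $m=-\ln C_{u,\rho}>0$, as asserted.

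For the iteration I would apply Lemma~\ref{lemma:iteration1} a total of $k-1$ times: the $j$-th application ($j=1,\dots,k-1$) integrates out the variable $\omega_{x+(j-1)n}$ and replaces the Green's function of $\Lambda_{j-1}$ by that of $\Lambda_j:=\{x+jn,x+jn+1,\dots\}\cap\Gamma$, where $\Lambda_0:=\Gamma$. Each $\Lambda_j$ is connected, being an intersection of intervals; the $j$-th application is legitimate since it only requires $y-(x+(j-1)n)\ge n$, which amounts to $d\ge jn$ and holds for every $j\le k-1$; the variables $\omega_x,\omega_{x+n},\dots,\omega_{x+(k-2)n}$ are pairwise distinct; and $G_{\Lambda_j}(z;x+jn,y)$ depends only on the $\omega_k$ with $k\ge x+(j-1)n+1$ (because $\supp u=\{0,\dots,n-1\}$), hence is independent of the previously integrated variables. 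Chaining the pointwise bounds of Lemma~\ref{lemma:iteration1} gives
\[
 \mathbb{E}_{\{x,\,x+n,\,\dots,\,x+(k-2)n\}}\bigl\{|G_\Gamma(z;x,y)|^{s/n}\bigr\}\le C_{u,\rho}^{k-1}\,\bigl|G_{\Lambda_{k-1}}(z;a,y)\bigr|^{s/n},
\]
where $a:=x+(k-1)n=\min\Lambda_{k-1}$ and the residual distance obeys $n\le y-a<2n$.

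For the residual block I would repeat the proof of Lemma~\ref{lemma:iteration2}, which uses nothing beyond the connectedness of the set in question. Depleting $\Lambda_{k-1}$ along the bond joining $y-n$ and $y-n+1$ and invoking \eref{eq:resolvent2} produces the factorization
\[
 G_{\Lambda_{k-1}}(z;a,y)=G_{\{a,\dots,y-n\}}(z;a,y-n)\cdot G_{\Lambda_{k-1}}(z;y-n+1,y).
\]
The first factor does not depend on $\omega_{y-n+1}$; integrating that variable and estimating the second factor by Lemma~\ref{lemma:finitness1}(i) (valid since $\Lambda_{k-1}$ is connected and $y-n+1,y\in\Lambda_{k-1}$) contributes a factor $C_{u,\rho}$, and afterwards integrating $\omega_a$ and estimating the first factor by Lemma~\ref{lemma:finitness1}(ii) (valid since $\{a,\dots,y-n\}$ is connected with at most $n$ sites, as $y-a\le 2n-1$) contributes a factor $C_{u,\rho}^+$; one checks here that $\omega_{y-n+1}$ and $\omega_a$ are distinct from each other and from the $k-1$ variables already integrated. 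Substituting into the previous display and taking the expectation over the remaining variables (the integrand being then a constant, and $\|\rho\|_{L^1}=1$) yields
\[
 \mathbb{E}\bigl\{|G_\Gamma(z;x,y)|^{s/n}\bigr\}\le C_{u,\rho}^{k-1}\cdot C_{u,\rho}\cdot C_{u,\rho}^+=C_{u,\rho}^+\,C_{u,\rho}^{k}=C_{u,\rho}^+\,\exp\bigl\{-m\lfloor|x-y|/n\rfloor\bigr\},
\]
which is the asserted bound.

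The main obstacle is not the iteration itself but the choice of where to halt it: one must resist applying Lemma~\ref{lemma:iteration1} a $k$-th time, since that would leave a near-diagonal matrix element $G(z;x+kn,y)$ on a set $\Lambda_k$ that can be arbitrarily large, and none of the estimates available in this section controls such an element uniformly in $z\in\CC\setminus\RR$. Halting after $k-1$ steps and handling the residual block of length in $[n,2n)$ by one further geometric-resolvent split --- exactly the device behind Lemma~\ref{lemma:iteration2} --- keeps every intermediate matrix element under control and produces the advertised prefactor $C_{u,\rho}^+$. The remaining work is bookkeeping: checking at each stage that the coupling constant being integrated is fresh and absent from the surviving Green's function, which is immediate from $\supp u=\{0,\dots,n-1\}$.
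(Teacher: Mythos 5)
Your proof is correct and matches the paper's argument essentially step for step: iterate Lemma~\ref{lemma:iteration1} exactly $\lfloor|x-y|/n\rfloor-1$ times, then dispose of the residual block of length in $[n,2n)$ via the argument of Lemma~\ref{lemma:iteration2}, and identify $C_{u,\rho}<1$ with \eref{eq:disorder}. The only (minor, and in your favor) deviation is that you re-run the proof of Lemma~\ref{lemma:iteration2} on the connected set $\Lambda_{k-1}$ rather than invoking the lemma verbatim --- the lemma is stated for the infinite right ray $\{x,x+1,\dots\}$, whereas $\Lambda_{k-1}$ may be a finite interval if $\Gamma$ is; this is a small imprecision in the paper that you correctly smooth over, since the proof of Lemma~\ref{lemma:iteration2} only uses that the set is connected with minimum $a$ and contains $y$.
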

\begin{proof}
The constant $m$ is larger than zero since $C_{u,\rho} < 1$ by assumption.
By symmetry we assume without loss of generality $y-x \geq 2n$.
In order to estimate $\mathbb{E} \bigl \{| G_\Gamma (z;x,y)|^{s/n}\bigr\}$,
we iterate Eq.{} \eref{eq:iteration1} of Lemma \ref{lemma:iteration1}
and finally use Eq.{} \eref{eq:iteration2} of Lemma \ref{lemma:iteration2} for the last step.
Figure \ref{fig:iteration} shows this procedure schematically.
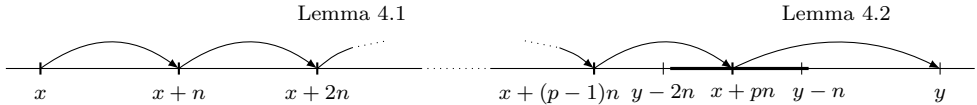
\begin{figure}[ht]
\centering
\begin{tikzpicture}[scale=0.92]
 \draw[very thick] (9.1,0) -- (11.1,0);
 \draw (-0.5,0)--(5.5,0);\draw[dotted] (5.5,0)--(6.5,0);\draw (6.5,0)--(13.5,0);
 \draw[thick] (0 cm,2.5pt) -- (0 cm,-2.5pt) node[anchor=north]  {\footnotesize $\phantom{I}x\phantom{I}$};
 \draw[thin] (13 cm,2.5pt) -- (13 cm,-2.5pt) node[anchor=north]{\footnotesize $\phantom{I}y\phantom{I}$};
 \draw[thick] (2 cm,2.5pt) -- (2 cm,-2.5pt) node[anchor=north]  {\footnotesize $\phantom{I}x+n\phantom{I}$};
 \draw[thick] (4 cm,2.5pt) -- (4 cm,-2.5pt) node[anchor=north]  {\footnotesize $\phantom{I}x+2n\phantom{I}$};
 \draw[thin] (11 cm,2.5pt) -- (11 cm,-2.5pt);
 \draw (11.25,-0.07) node[anchor=north]{\footnotesize $\phantom{I}y-n\phantom{I}$};
 \draw[thin] (9 cm,2.5pt) -- (9 cm,-2.5pt) node[anchor=north]  {\footnotesize $\phantom{I}y-2n\phantom{I}$};
 \draw[thick] (10 cm,2.5pt) -- (10 cm,-2.5pt);
 \draw (10.15,-0.07) node[anchor=north]{\footnotesize $\phantom{I}x+pn\phantom{I}$};
 \draw[thick] (8 cm,2.5pt) -- (8 cm,-2.5pt);
\draw (7.5,-0.07) node[anchor=north]{\footnotesize $\phantom{I}x+(p-1)n\phantom{I}$};
 \draw[-latex] (0,0) .. controls  (0.7,0.5) and (1.3,0.5) .. (2,0);  
 \draw[-latex] (2,0) .. controls  (2.7,0.5) and (3.3,0.5) .. (4,0);
 \draw         (4,0) .. controls  (4.25,0.2)  .. (4.5,0.3);
\draw[-latex] (8,0) .. controls  (8.7,0.5) and (9.3,0.5) .. (10,0);
 \draw[dotted] (4.5,0.3) .. controls  (5,0.4)  .. (5,0.4);
 \draw[dotted] (7,0.4) .. controls (7,0.4) .. (7.5,0.3);
 \draw[-latex]        (7.5,0.3) .. controls  (7.75,0.20)  .. (8,0);
\draw[-latex] (10,0) .. controls  (11,0.5) and (12,0.5)  .. (13,0);
\draw (4.5,0.8) node {\footnotesize Lemma \ref{lemma:iteration1}};
\draw (11.5,0.8) node {\footnotesize Lemma \ref{lemma:iteration2}};
\end{tikzpicture}
\caption{Illustration to the proof of Theorem \ref{theorem:exp}}
\label{fig:iteration}
\end{figure}
We choose $p := \lfloor (y-x)/n \rfloor - 1 \in \NN$ such that $y-2n < x+pn \leq y-n$.
We iterate Eq.{} \eref{eq:iteration1} exactly $p$ times and obtain
\[
 \mathbb{E} \bigl\{| G_\Gamma (z;x,y)|^{s/n}\bigr\} \leq C_{u,\rho}^{p} \cdot \mathbb{E}\bigl\{| G_{\Lambda_{p}} (z;x+pn,y)|^{s/n}\bigr \}
\]
where $\Lambda_p = \{ x+pn, x+pn+1, \dots \}$. Now the first $p$ jumps of Fig. \ref{fig:iteration} are done
and it remains to estimate $\mathbb{E}\bigl\{| G_{\Lambda_{p}} (z;x+pn,y)|^{s/n}\bigr\}$.
Since $n \leq y-(x+pn) < 2n$ and $\Lambda_p = \{ x+pn, x+pn+1, \dots \}$ we may apply Lemma \ref{lemma:iteration2} and get
\[
 \mathbb{E}\bigl\{| G_\Gamma (z;x,y)|^{s/n}\bigr\} \leq C_{u,\rho}^{p+1} C_{u,\rho}^+
= C_{u,\rho}^+ \,\, {\rm e}^{(p+1) \ln C_{u,\rho}} . \qedhere
\]
\end{proof}
\begin{proof}[Proof of Theorem \ref{thm:result1}]
Without loss of generality we assume $y-x \geq n$. We iterate Eq.{} \eref{eq:iteration1} exactly $q := \lfloor (y-x)/n \rfloor \in \NN$ times, starting with $\Gamma = \ZZ$, and obtain $\mathbb{E} \bigl\{| G_\omega (z;x,y)|^{s/n}\bigr\} \leq C_{u,\rho}^{q} \cdot \mathbb{E}\bigl\{| G_{\Lambda_{q}} (z;x+qn,y)|^{s/n}\bigr \}$, where $\Lambda_q = \{ x+pn, x+pn+1, \dots \}$. Since $0 \leq y-(x+qn) \leq n-1$ by construction, we may apply part (iii) of Lemma \ref{lemma:finitness1} and obtain
\begin{equation} \label{eq:Cm}
  \mathbb{E}\bigl\{| G_\omega (z;x,y)|^{s/n}\bigr\} \leq C_{u,\rho}^{q} C_{u,\rho,+}
= C_{u,\rho,+} \,\, \exp \Biggl\{-m \biggl\lfloor \frac{y-x}{n}\biggr\rfloor\Biggr\}
\end{equation}
where $m = -\ln C_{u,\rho}$. In particular, $m>0$ if Ineq. \eref{eq:disorder} holds.
\end{proof}
%
%
%
%
%
%
%
%
\section{Single-site potentials with arbitrary finite support} \label{sec:gen}
In this section we prove Theorem \ref{thm:result2}. We consider the case in which the support $\Theta$ of the single-site potential is an arbitrary finite subset of $\ZZ$.  By translation, we assume without loss of generality that $\min \Theta = 0$ and $\max \Theta = n-1$ for some $n \in \NN$. Furthermore, we define
\begin{equation} \label{eq:r}
 r := \max\big\{\, b-a \, \mid [a,b] \subset \{0,\dots,n-1\}, [a,b] \cap \Theta = \emptyset\big\} .
\end{equation}
Thus $r$ is the width of the largest gap in $\Theta$. In order to handle arbitrary finite supports of the single-site potential, we need one of the following additional assumptions on the density $\rho \in L^{\infty} (\RR)$:
\begin{equation} \label{eq:assumptions}
 \mathcal{A}_1: \rho \in W^{1,1} (\RR) \qquad \mathcal{A}_2: \supp \rho \subset [-R,R] \mbox{ for some $R>0$}.
\end{equation}
To illustrate the difficulties arising for non-connected supports $\Theta$ we consider an example.
Suppose $\Theta = \{0,2,3,\dots , n-1\}$ so that $r = 1$. If we set $\Lambda = \{0,\dots,n-1\}$ there is no decomposition $H_\Lambda - B_\Gamma^\Lambda = A + \omega_0 V$ with an invertible $V$. If we set $\Lambda = \{0,\dots,n-1+r\} = \{0,\dots,n\}$ we observe that every diagonal element of $H_\Lambda$ depends at least on one of the variables $\omega_0$ and $\omega_1 = \omega_r$, while the elements of $B_\Gamma^\Lambda$ (which appear after applying Lemma \ref{lemma:fraction}) are independent of $\omega_k$, $k \in \{0,\dots,r\} = \{0,1\}$. Thus we have a decomposition $H_\Lambda - B_\Gamma^\Lambda = A + \omega_0 V_0 + \omega_1 V_1$, where $A$ is independent of $\omega_k$, $k \in \{0,1\}$, and for all $i \in \Lambda$ either $V_0 (i)$ or $V_1 (i)$ is not zero.
As a consequence there is an $\alpha \in \RR$ such that $V_0 + \alpha V_1$ is invertible on $\ell^2 (\Lambda )$. Motivated by this observation, we prove the following lemma.
\begin{lemma} \label{lemma:detgen}
Let $N,d \in \NN$ and $A, V_0, V_1, \dots,V_N \in \CC^{d \times d}$ be matrices.
Let $(\alpha_k)_{k=0}^N \in \RR^{N+1}$ with $\alpha_0 \not = 0$. Assume that $\sum_{k=0}^N \alpha_k V_k$ is invertible. Let further $0 \leq \rho \in L^1(\RR) \cap L^\infty (\RR)$ with $\|\rho\|_{L^1} = 1$, $t \in (0,1)$, and $\mathcal{A}_1,\mathcal{A}_2$ be as in \eref{eq:assumptions}.
Then, if the condition $\mathcal{A}_1$ is satisfied, we have the bound
\begin{eqnarray*} \fl
I &= \!\!\!\!\! \int\limits_{\RR^{N+1}} \!\!\! \Bigl|\det \Bigl(A + \sum_{i=0}^N r_iV_i\Bigr)\Bigl|^{\frac{t}{d}} \prod_{i=0}^N\rho (r_i) \drm r_i 
&\leq \Bigl| \det \Bigl( \sum_{k=0}^N \alpha_k V_k \Bigr)\Bigr|^{-\frac{t}{d}} \Bigl(\sum_{k=0}^N | \alpha_k| \Bigr)^t \,\, \frac{t^{-t}}{1-t}  \|\rho'\|_{L^1}^{t} .
\end{eqnarray*}
If the condition $\mathcal{A}_2$ is satisfied, we have the bound
\begin{equation*} \fl
I  \leq \Bigl| \det \Bigl( \sum_{k=0}^N \alpha_k V_k \Bigr)\Bigr|^{-t/d} | \alpha_0|^t\Bigl(1+ \max_{i\in \{1,\dots,N\}} \frac{|\alpha_i|}{| \alpha_0|} \Bigr)^{Nt}
\,\, \frac{2^t t^{-t}}{1-t} (2R)^{Nt} \|\rho\|_\infty^{(N+1)t} .
\end{equation*}
\end{lemma}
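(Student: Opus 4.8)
The plan is to mimic the proof of Lemma \ref{lemma:det}, the key new ingredient being that we must reduce a multi-variable integral to a one-variable estimate along a well-chosen direction, and that we can only control a single factor of $|r-z_j|^{-t}$ per integration variable (not the whole product) when we replace $V$ by the invertible combination $\sum_k \alpha_k V_k$. First I would change variables in $\RR^{N+1}$ so that the direction $\alpha = (\alpha_0,\dots,\alpha_N)$ becomes one of the coordinate axes. Concretely, since $\alpha_0 \neq 0$, I set $s_0 := r_0$, $s_i := r_i - (\alpha_i/\alpha_0) r_0$ for $i = 1,\dots,N$ (under $\mathcal{A}_1$) — or, more symmetrically, I write $\sum_i r_i V_i = r_0\,\alpha_0^{-1}\sum_k \alpha_k V_k + \sum_{i\geq 1}(r_i - (\alpha_i/\alpha_0) r_0) V_i$, so that along the $r_0$-axis the matrix $A + \sum r_i V_i$ is an affine pencil in $r_0$ with leading coefficient $\alpha_0^{-1}\sum_k \alpha_k V_k$, which is invertible by hypothesis. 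Fixing the other variables, $r_0 \mapsto \det(A + \sum r_i V_i)$ is then a degree-$d$ polynomial in $r_0$ with leading coefficient $(\alpha_0^{-1})^d \det(\sum_k\alpha_k V_k)$, and multilinearity gives $|\det(A+\sum r_iV_i)| = |\alpha_0|^{-d}|\det(\sum_k\alpha_k V_k)| \prod_{j=1}^d |r_0 - z_j|$ with roots $z_j$ depending on the frozen variables; hence $|\det(\cdots)|^{-t/d} \leq |\alpha_0|^{t} |\det(\sum_k\alpha_kV_k)|^{-t/d}\prod_j |r_0 - \re z_j|^{-t}$.

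For assumption $\mathcal{A}_1$ the idea is then to integrate first in $r_0$. Since the $r_i$ for $i\geq 1$ are only shifted, $\prod_i \rho(r_i)$ becomes, after the substitution, a product in which the $r_0$-dependence is spread across all $N+1$ factors; the clean way is to integrate in the translated variables, using that for any $z\in\RR$ one has $\int_\RR |r-z|^{-t}\rho(r)\,\drm r \leq \|\rho\|_{L^1}^{1-t}\|\rho\|_\infty^{t}\,\frac{2^t t^{-t}}{1-t}$ as in Lemma \ref{lemma:det} — but for the $W^{1,1}$ bound one instead integrates by parts: $\int_\RR |r - z|^{-t}\rho(r)\,\drm r$, after writing $|r-z|^{-t} = \frac{1}{1-t}\frac{\drm}{\drm r}\big(\mathrm{sgn}(r-z)|r-z|^{1-t}\big)$, is bounded by $\frac{1}{1-t}\int_\RR |r-z|^{1-t}|\rho'(r)|\,\drm r$; then splitting this last integral at $|r-z|=1$ and using $|r-z|^{1-t}\leq 1$ on the inner region plus... — actually the slicker route, and the one matching the claimed constant $t^{-t}/(1-t)\,\|\rho'\|_{L^1}^t$, is: bound $\int |r-z|^{-t}\rho(r)\drm r \leq \|\rho\|_\infty \frac{2\lambda^{1-t}}{1-t} + \lambda^{-t}$, but now estimate $\|\rho\|_\infty$ itself via $\rho(r) = \int_{-\infty}^r \rho' \leq \|\rho'\|_{L^1}$, so $\|\rho\|_\infty \leq \|\rho'\|_{L^1}$, and optimize over $\lambda$, which gives a constant of the stated shape. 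One cannot iterate this over all $N+1$ factors (the other $N$ $\rho$-factors are genuine probability densities, not needing the bound), so only the single factor coming from the invertible-direction variable is estimated this way; the remaining $N$ integrations just use $\|\rho\|_{L^1}=1$, and Hölder over the $d$ roots $z_j$ distributes the single good integral as a $1/d$ power — this explains why only one power of $\|\rho'\|_{L^1}$ and one factor $\sum_k|\alpha_k|$ appear, the latter coming from the change-of-variables Jacobian / the size of the translation.

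For assumption $\mathcal{A}_2$ the strategy is different because $\rho$ need not be differentiable: here $\supp\rho \subset [-R,R]$, so after the change of variables each translated variable ranges over an interval of length $\leq 2R(1 + \max_i |\alpha_i|/|\alpha_0|)$ roughly, and on this compact set we bound all $N$ extra $\rho$-factors crudely by $\|\rho\|_\infty$, picking up a factor $(2R)^{Nt}\|\rho\|_\infty^{Nt}$ after raising to the power $t$ and the integrating over the $N$ bounded intervals; the remaining single integral in the good direction is handled by the elementary $\int |r-z|^{-t}\rho(r)\drm r \leq \|\rho\|_\infty \frac{2R^{1-t}}{1-t}\cdot\text{(const)}$ — more precisely the $\|\rho\|_\infty \frac{2\lambda^{1-t}}{1-t} + \lambda^{-t}\|\rho\|_{L^1}$ bound with $\|\rho\|_{L^1}=1$, optimized — contributing the last $\|\rho\|_\infty^{t}$, the $2^t t^{-t}/(1-t)$, and the geometric bookkeeping $|\alpha_0|^t(1 + \max_i|\alpha_i|/|\alpha_0|)^{Nt}$. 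In both cases the final step is Hölder's inequality over the $d$ factors $\prod_{j=1}^d|r_0 - \re z_j|^{-t}$ exactly as in Lemma \ref{lemma:det}, turning the $t$-exponent into $t/d$ per factor. The main obstacle I anticipate is purely organizational: choosing the substitution so that the Jacobian and the translation amounts are tracked cleanly enough to produce exactly the stated constants (in particular getting $\sum_k|\alpha_k|$ versus $|\alpha_0|(1+\max|\alpha_i|/|\alpha_0|)^N$ in the two cases), and making sure the $N$ "spectator" integrations genuinely cost nothing beyond $\|\rho\|_{L^1}=1$ in case $\mathcal{A}_1$ while costing a controlled volume factor in case $\mathcal{A}_2$ — the analytic content is entirely in the single one-dimensional estimate, which is the one already used in Lemma \ref{lemma:det} (case $\mathcal{A}_2$) or its integration-by-parts variant (case $\mathcal{A}_1$).
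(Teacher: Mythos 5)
Your change of variables is essentially the same as the paper's: you take $s_0=r_0$, $s_i=r_i-(\alpha_i/\alpha_0)r_0$, which aligns one coordinate axis with the invertible direction and has Jacobian $1$; the paper instead sets $r_0=\alpha_0 x_0$, $r_i=\alpha_i x_0+\alpha_0 x_i$ (Jacobian $|\alpha_0|^{N+1}$), and the two differ only by a diagonal rescaling. After this, both proofs integrate first along the good direction and invoke Lemma~\ref{lemma:det} in its two-parameter form \eref{eq:det2}, then integrate out the remaining $N$ variables, handling the $L^\infty$ input of \eref{eq:det2} by $\sup g\leq\|\rho\|_\infty^{N+1}$ plus the compact-support volume bound (under $\mathcal{A}_2$) or by an $L^1$ bound on $\partial_{x_0}g$ (under $\mathcal{A}_1$). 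Your $\mathcal{A}_2$ sketch is essentially correct and matches the paper.

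The genuine gap is in your $\mathcal{A}_1$ argument. You propose the ``slicker route'' of bounding the $s_0$-integral with $\rho(s_0)$ alone as the density, replacing $\|\rho\|_\infty$ by $\|\rho'\|_{L^1}$, and treating the other $N$ factors of $\rho$ as spectators that ``just use $\|\rho\|_{L^1}=1$.'' But after your substitution, the fiber density along $s_0$ is the full product $g(s_0)=\rho(s_0)\prod_{i\geq1}\rho\bigl(s_i+(\alpha_i/\alpha_0)s_0\bigr)$, in which \emph{every} factor depends on $s_0$; you cannot apply the one-variable lemma with a single $\rho$ and hope the remaining factors integrate out to $1$, because $\sup_{s_0}g$ does not factorize as $\|\rho\|_\infty\cdot\prod(\text{terms independent of }s_0)$. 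In particular, your sketch never actually produces the factor $\bigl(\sum_{k=0}^N|\alpha_k|\bigr)^t$ — you attribute it to ``the Jacobian / the size of the translation,'' but in your substitution the Jacobian is $1$, and the translations alone don't give a sum over $k$. The paper's mechanism is different and is the missing idea: one must apply Lemma~\ref{lemma:det} with the \emph{full} fiber density $g$ and then bound $\sup_{x_0}g\leq\tfrac12\int_\RR|\partial_{x_0}g|\,\drm x_0$; the product rule for $\partial_{x_0}g$ produces a sum of $N+1$ terms, the $k$-th carrying one factor $|\alpha_k|$ (or $|\alpha_k/\alpha_0|$ in your normalization) and one $\|\rho'\|_{L^1}$, and integrating out the remaining $N$ variables leaves exactly $\tfrac12\|\rho'\|_{L^1}\sum_k|\alpha_k|$ (up to normalization by $|\alpha_0|$, which cancels against the $|\alpha_0|^t$ from $\det W$). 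Only then does optimizing over $\lambda$ give the stated constant. Your first, abandoned idea — integrating by parts to transfer a derivative onto $\rho$ — is closer to the right mechanism than the ``slicker route,'' but as you set it up ($\int|r-z|^{1-t}|\rho'|$) the resulting integral is not controlled by $\|\rho'\|_{L^1}$ alone since $|r-z|^{1-t}$ is unbounded; the $\lambda$-splitting in \eref{eq:det2} combined with $\sup g\leq\tfrac12\int|\partial g|$ is the clean way around this.
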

\begin{proof}
Substituting
\[ \fl
 \pmatrix{
  r_0 \cr
  r_1 \cr
  \vdots \cr
  \vdots \cr
  r_N} =
T \pmatrix{
  x_0 \cr
  x_1 \cr
  \vdots \cr
  \vdots \cr
  x_N \cr}
=
 \pmatrix{
  \alpha_0 & 0        & \cdots  &\cdots   &  0     \cr
  \alpha_1 & \alpha_0 &0        &         & \vdots \cr
  \alpha_2 & 0        & \alpha_0&\ddots   & \vdots \cr
  \vdots   & \vdots   & \ddots  &\alpha_0 & 0      \cr
  \alpha_N & 0        &         & 0       & \alpha_0 \cr}
 \pmatrix{
  x_0 \cr
  x_1 \cr
  \vdots \cr
  \vdots \cr
  x_N \cr }
=
\pmatrix{
  \alpha_0 x_0   \cr
  \alpha_1 x_0 + \alpha_0 x_1 \cr
  \alpha_2 x_0 + \alpha_0 x_2 \cr
  \vdots     \cr
  \alpha_N x_0 + \alpha_0 x_N \cr}
\]
we get
\[ \fl
 I = \int_{\RR^{N}} \left(  \int_\RR \Bigl|\det \Bigl(\tilde A +  x_0\sum_{i=0}^N \alpha_i V_i\Bigr)\Bigr|^{-t/d} g(x_0,\dots,x_N) \drm x_0 \right) | \alpha_0|^{N+1} \drm x_1  \dots  \drm x_N
\]
where $\tilde A = A + \alpha_0 \sum_{i=1}^N x_i V_i$ and
$g(x_0,\dots,x_N) = \rho(\alpha_0 x_0)\prod_{i=1}^N \rho(\alpha_i x_0 + \alpha_0 x_i)$.
Since $x_0 \mapsto g(x_0,\dots,x_N)$ is an element of $L^1(\RR) \cap L^\infty (\RR)$ we may apply Lemma \ref{lemma:det} and obtain for all $\lambda > 0$
\begin{eqnarray*} \fl
I \leq
\Bigl|\det \Bigl(\sum_{i=0}^N \alpha_i V_i\Bigr)\Bigr|^{-t/d} \Bigl(
   \lambda^{-t}  + \frac{2 \lambda^{1-t}}{1-t} \int_{\RR^N} \sup_{x_0 \in \RR} g(x_0,\dots,x_N) |\alpha_0|^{N+1}  \drm x_1   \dots  \drm x_N \Bigr)
\end{eqnarray*}
where $\drm x = \drm x_1  \dots  \drm x_N$. In the case of $\mathcal{A}_1$ we use $\sup_{x_0 \in \RR} g \leq \frac{1}{2} \int_\RR |\partial g/\partial x_0| \drm x_0$, substitute back into the original coordinates and finally choose $\lambda = t/(\|\rho'\|_{L^1} \sum_{k=0}^{N} |\alpha_k|)$.
To end the proof if the condition $\mathcal{A}_2$ is satisfied, we use $\supp \rho \subset [-R,R]$ and see that if $| x_j| > R \,\| T^{-1}\|_\infty $
for some $j = 0,\dots,N$, then $g(x_0,\dots,x_N) = 0$. Thus it is sufficient to integrate over the cube $[-R\| T^{-1}\|_\infty,R\| T^{-1}\|_\infty]^N$.
We estimate $\sup_{x_0 \in \RR} g(x_0,\dots,x_N) \leq \| \rho \|_\infty^{N+1}$ and choose $\lambda = t/(2 \|\rho\|_\infty^{N+1} |\alpha_0^{N+1}| (2R\| T^{-1}_\infty \|)^N)$.
The row-sum norm of $T^{-1}$ equals $\| T^{-1} \|_\infty = \max_{i \in \{1,\dots,N\}}$ $(\abs{\alpha_0}^{-1} + |\alpha_i/\alpha_0^2|) = (1+\max_{i\in\{1,\dots,N\}} |\alpha_i/\alpha_0|)/|\alpha_0|$.
\end{proof}
With the help of Lemma \ref{lemma:detgen} we prove the following analogues of Lemma \ref{lemma:finitness1} and Theorem \ref{theorem:exp}.
\begin{lemma} \label{lemma:finitness3}
 Let $n \in \NN$, $\Theta \subset \ZZ$ with $\min \Theta = 0$, $\max \Theta = n-1$, and $\Gamma \subset \ZZ$ be connected. Let further $r$ be as in Eq.{} \eref{eq:r}, $\mathcal{A}_1, \mathcal{A}_2$ as in \eref{eq:assumptions},
 and $s \in (0,1)$.
Then there exists a constant $D$ such that for all $x,x+n-1+r \in \Gamma$ and $z \in \CC \setminus \RR$
\begin{equation} \label{eq:leqD}
\mathbb{E}_{\{x,\dots,x+r\}} \bigl \{ | G_\Gamma (z;x,x+n-1+r)|^{s/(n+r)}  \bigr \} \leq D\, .
\end{equation}
 The constant $D$ is characterized in Eq.{} \eref{eq:defD} and estimated in Ineq.{} \eref{eq:D1} and \eref{eq:D2}. If $1 \leq \abs{\Gamma} \leq n+r$ with $\gamma_0 = \min \Gamma$ and $\gamma_1 = \max \Gamma$ there exists a constant $D^+$ such that for all $z \in \CC \setminus \RR$
\begin{equation} \label{eq:leqDplus}
 \mathbb{E}_{\{\gamma_0,\dots,\gamma_0+r\}} \bigl \{ | G_\Gamma (z;\gamma_0,\gamma_1) |^{s/(n+r)}  \bigr \} \leq D^+.
\end{equation}
The constant $D^+$ is characterized in Eq.{} \eref{eq:defDplus} and estimated in Ineq.{} \eref{eq:Dplus1} and \eref{eq:Dplus2}.
\end{lemma}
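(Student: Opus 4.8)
The plan is to mimic the proof of Lemma \ref{lemma:finitness1}, replacing the single random variable $\omega_x$ by the block $\{\omega_x,\dots,\omega_{x+r}\}$ and replacing Lemma \ref{lemma:det} by Lemma \ref{lemma:detgen}. First I would set $\Lambda := \{x,x+1,\dots,x+n-1+r\}$, which is connected and contained in $\Gamma$, and apply Lemma \ref{lemma:fraction} to write $G_\Gamma(z;u,v) = \langle \delta_u, (H_\Lambda - B_\Gamma^\Lambda - z)^{-1}\delta_v\rangle$ for $u,v \in \Lambda$, where $B_\Gamma^\Lambda$ is diagonal and independent of $V_\omega(k)$, $k \in \Lambda$. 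Setting $D := H_\Lambda - B_\Gamma^\Lambda - z$, Cramer's rule applied to the $(x, x+n-1+r)$ matrix element yields a cofactor which is (up to sign) the determinant of a lower-triangular matrix with unit diagonal, hence $\abs{G_\Gamma(z;x,x+n-1+r)} = 1/\abs{\det[D]}$, exactly as before. The size of $\Lambda$ is $n+r$, which is why the exponent $s/(n+r)$ appears.

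Next I would extract the dependence on the block of coupling constants. Every site $i \in \Lambda$ lies within distance $n-1$ of one of $x,\dots,x+r$ (since $\Theta$ has no gap wider than $r$ and $\max\Theta - \min\Theta = n-1$), so that the diagonal entry $V_\omega(i)$ depends on at least one of $\omega_x,\dots,\omega_{x+r}$; meanwhile $B_\Gamma^\Lambda$ is independent of all of them. Thus $[D] = A + \sum_{k=0}^r \omega_{x+k} V_k$ where $A$ is independent of $\omega_x,\dots,\omega_{x+r}$, each $V_k$ is diagonal with entries $u(i - (x+k))$, and for every $i \in \Lambda$ at least one $V_k$ has a nonzero $i$-th diagonal entry. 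By a counting/linear-algebra argument (the determinant $\det(\sum_k \alpha_k V_k) = \prod_i (\sum_k \alpha_k V_k(i,i))$ is a nonzero polynomial in $\alpha$ once $\alpha_0 \neq 0$, since for each $i$ the coefficient vector $(V_0(i,i),\dots,V_r(i,i))$ is nonzero) there exists $(\alpha_0,\dots,\alpha_r)$ with $\alpha_0 \neq 0$ making $\sum_k \alpha_k V_k$ invertible. Now apply Lemma \ref{lemma:detgen} with $N = r$, $d = n+r$, and $t = s$: under $\mathcal{A}_1$ this gives the bound with the $\|\rho'\|_{L^1}$ factor, and under $\mathcal{A}_2$ the bound with the $\|\rho\|_\infty$ and $(2R)$ factors. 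The resulting bound is independent of $x$, $z$, $A$, and the precise values of $\omega_k$ outside the block, so it defines the constant $D$ in \eref{eq:defD}, with the two cases giving \eref{eq:D1} and \eref{eq:D2}.

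For the second assertion \eref{eq:leqDplus}, when $1 \leq \abs{\Gamma} \leq n+r$, one proceeds directly without Lemma \ref{lemma:fraction}: write $[H_\Gamma - z] = \tilde A + \sum_{k=0}^{r'} \omega_{\gamma_0+k} \tilde V_k$ where $r' = \min\{r, \abs{\Gamma}-1\}$ and $\tilde A$ is independent of $\omega_{\gamma_0},\dots,\omega_{\gamma_0+r'}$, note that every diagonal entry of $H_\Gamma$ within $\Gamma$ depends on one of these variables, choose suitable $\alpha_k$ making $\sum_k \alpha_k \tilde V_k$ invertible on $\ell^2(\Gamma)$, apply Cramer's rule and Lemma \ref{lemma:detgen} with $d = \abs{\Gamma} =: e+1$ and exponent $t$, then set $t = s\,(e+1)/(n+r) \leq s$ and use the monotonicity of $x \mapsto 2^x x^{-x}/(1-x)$ on $(0,1)$ (and of the analogous $t^{-t}/(1-t)$) together with $\max\{\|\rho\|_\infty^{\cdot}\}$-type adjustments exactly as in the proof of \eref{eq:finitness2}. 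This yields \eref{eq:leqDplus} with $D^+$ as in \eref{eq:defDplus}, \eref{eq:Dplus1}, \eref{eq:Dplus2}.

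The main obstacle I expect is the bookkeeping around the invertibility claim and the precise form of the constants: one must verify carefully that the block $\{x,\dots,x+r\}$ is large enough that no diagonal site of $H_\Lambda$ escapes the union of the supports $\supp u(\cdot - (x+k))$, $k=0,\dots,r$ — this is exactly where the definition of $r$ as the largest gap in $\Theta$ enters — and then track how the norm $\|T^{-1}\|_\infty$ from Lemma \ref{lemma:detgen}, which depends on the chosen $\alpha_k$, feeds into $D$ and $D^+$. The analytic content is entirely contained in Lemmas \ref{lemma:det} and \ref{lemma:detgen}; the work here is combinatorial-geometric (locating which $\omega_k$ each matrix entry depends on) plus the elementary optimization over $t$ already carried out in Lemma \ref{lemma:finitness1}.
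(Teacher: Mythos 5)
Your proposal is correct and follows essentially the same route as the paper: Feshbach reduction to the block $\Lambda=\{x,\dots,x+n-1+r\}$, Cramer's rule to identify the matrix element with a reciprocal determinant, the decomposition $[D]=A+\sum_{k=0}^r \omega_{x+k}V_k$ with $A$ independent of the averaging block, and then Lemma \ref{lemma:detgen} (with $t=s$ in the first part, $t = s\,\abs{\Gamma}/(n+r)$ in the second, using monotonicity of the relevant functions of $t$). Two small remarks. First, the phrase ``every site $i\in\Lambda$ lies within distance $n-1$ of one of $x,\dots,x+r$'' is not by itself the reason each diagonal entry $V_\omega(i)$ picks up one of the variables $\omega_x,\dots,\omega_{x+r}$; what matters is the gap structure of $\Theta$, namely that for each $i\in\Lambda$ there is a $k\in\{0,\dots,r\}$ with $i-(x+k)\in\Theta$ --- you state the correct reason in the parenthetical, so this is only a matter of emphasis. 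Second, you establish \emph{existence} of an invertible combination $\sum_k\alpha_k V_k$ via the nonvanishing-polynomial argument, which suffices for the qualitative statement (existence of a finite $D$, i.e.\ \eref{eq:defD}), but to obtain the quantitative bounds \eref{eq:D1}, \eref{eq:D2}, \eref{eq:Dplus1}, \eref{eq:Dplus2} the paper additionally selects a specific $\alpha'\in[0,1]^{r+1}$ via a volume-comparison argument that gives an explicit lower bound on the distance from $\alpha'$ to each hyperplane $\{\sum_k\alpha_k u(i-k)=0\}$; this is the step your sketch leaves implicit, and it is precisely the ``bookkeeping around the invertibility claim'' you correctly flag as the remaining work.
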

\begin{proof}
 The proof is similar to the proof of Lemma \ref{lemma:finitness1}. Apply Lemma \ref{lemma:fraction} with $\Lambda = \{x,x+1,\dots,x+n-1+r\}$ and Cramer's rule to get $| G_\Gamma (z;x,x+n-1+r)| = 1/\abs{\det [D]}$ where $D = H_\Lambda - B_\Gamma^\Lambda - z$. Note that $B_\Gamma^\Lambda$ is independent of $\omega_k$, $k \in \{x,\dots,x+r\}$. We have the decomposition $[D] = A + \sum_{k=0}^{r} \omega_{x+k} V_k$ where the elements of the diagonal matrices $V_k \in \RR^{(n+r) \times (n+r)}$, $k = 0,\dots,r$, are given by $V_k(i) = u(i-k)$, $i=0,\dots,n-1+r$, and $A = D - \sum_{k=0}^r \omega_k V_k$ is independent of $\omega_k$, $k \in \{x,\dots,x+r\}$.
We apply Lemma \ref{lemma:detgen} and obtain for all $\alpha = (\alpha_k)_{k=0}^r \in M := \{\alpha \in \RR^{r+1} : \alpha_0 \not = 0,  \mbox{ $\sum_{k=0}^r \alpha_k V_k$ is invertible}\}$ the bound  $\mathbb{E}_{\{x,\dots,x+r\}} \bigl \{ | G_\Gamma (z;x,x+n-1+r)|^{s/(n+r)} \bigr \} \leq D_\alpha$ where
\begin{equation*} 
 D_\alpha  = \|\rho'\|_{L^1}^{s} \frac{s^{-s}}{1-s}   \Bigl(\sum_{k=0}^r \abs{\alpha_k} \Bigr)^s \,\, \prod_{i=0}^{n-1+r} \Bigl| \sum_{k=0}^r \alpha_k u(i-k) \Bigr|^{-\frac{s}{n+r}}
\end{equation*}
if $\mathcal{A}_1$ is satisfied and
\begin{equation*} \fl
 D_\alpha  = \|\rho\|_\infty^{(r+1)s} (2R)^{rs} \frac{2^s s^{-s}}{1-s} | \alpha_0|^s \Biggl( 1+\max_{i\in \{1,\dots,r\}} \frac{|\alpha_i|}{|\alpha_0|} \Biggr)^{rs} \, \prod_{i=0}^{n-1+r} \Bigl| \sum_{k=0}^r \alpha_k u(i-k) \Bigr|^{-\frac{s}{n+r}}
\end{equation*}
if $\mathcal{A}_2$ is satisfied. The set $M$ is non-empty and equal to the set $\{\alpha \in \RR^{r+1} : \alpha_0 \not = 0, D_\alpha \mbox{ is finite}\}$.
Thus Ineq.{} \eref{eq:leqD} holds with the constant
\begin{equation} \label{eq:defD}
 D := \inf_{\alpha \in M} D_\alpha .
\end{equation}
In the following we establish an upper bound for $D$. Using a volume comparison criterion we can find a vector $\alpha'=(\alpha_k')_{k=0}^r \in [0,1]^{r+1}$ which has to each hyperplane $\sum_{k=0}^r \alpha_k u(i-k) = 0$, $i=0,\dots,n-1+r$, at least the Euclidean distance $(2 (n+r) (r+1)^{r/2})^{-1}$, as outlined in Fig. \ref{fig:volume}.
\begin{figure}[t]
\centering
\begin{tikzpicture}[scale=4]
\draw[dotted] (0,1)--(1,1);\draw[dotted] (1,1)--(1,0);\draw[dotted] (1,0)--(0,0);\draw[dotted] (0,0)--(0,1);
\draw[<->] (1.1207,0.9793)--(0.9793,1.1207); \draw (1.1,1.1) node {\small $\epsilon$};
\draw (0,0)--(0,1);\draw[dashed] (0.1,0)--(0.1,1);
\draw (0,0)--(1,1); \draw[dashed] (0,0.14)--(0.86,1); \draw[dashed] (0.14,0)--(1,0.86);
\draw (0,0)--(1,0.4); \draw[dashed] (0,0.108)--(1,0.508); \draw[dashed] (0.27,0)--(1,0.292);
\draw (0,1.1) node {\small $H_0^\epsilon$};
\draw (1.1,0.8) node {\small $H_1^\epsilon$};
\draw (1.185,0.4) node {\small $H_{n-1+r}^\epsilon$};
\filldraw (0.4,0.95) circle (0.4pt); \draw (0.4,0.9)  node {\small $\alpha'$};
\draw (2.3,0.8) node {\small $\mathrm{Vol}\,(W) = 1$};
\draw (2.3,0.6) node {\small $\mathrm{Vol}\,(\cup_i H_i^\epsilon) \leq (n+r) (r+1)^{r/2} \epsilon$};
\draw (2.3,0.4) node {\small $\mathrm{Vol}\,(W \setminus \cup_i H_i^\epsilon) \geq 1 - (n+r) (r+1)^{r/2} \epsilon$};
\end{tikzpicture}
\caption{Sketch of the existence of a vector $\alpha' \in W = [0,1]^{r+1}$
with the desired properties: Let  $H_i^\epsilon$ denote the $\epsilon$-neighborhood of the hyperplane $H_i:=\{ \alpha \in W \mid \sum_{k=0}^r \alpha_k u(i-k) = 0\}$
for  $i \in \{0,\dots,n-1+r\}$. Since the volume of $W \setminus \cup_i H_i^\epsilon$ is positive if $\epsilon$ is smaller  than $(n+r)^{-1} (r+1)^{-r/2} = d_0$, we conclude (using continuity) that there is a vector $\alpha'$ whose distance to each hyperplane $H_i$, $i \in \{0,\dots,n-1+r\}$, is at least $d_0/2$.}
\label{fig:volume}
\end{figure}
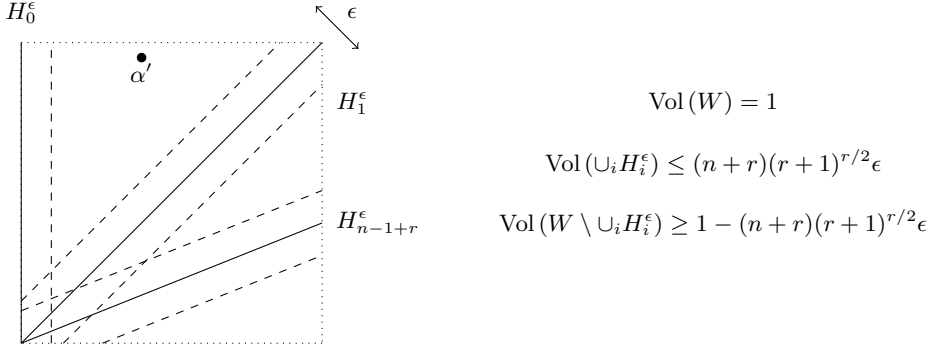
This implies $\alpha_0' \geq ( 2 (n+r) (r+1)^{r/2})^{-1}$ since the hyperplane for $i=0$ is $\alpha_0 = 0$. With this choice of $\alpha$ and the notation $u_i = (u(i-k))_{k=0}^r$, $i \in \{0,\dots,n-1+r\}$, we have
\begin{eqnarray} 
\prod_{i=0}^{n-1+r} \Bigl| \sum_{k=0}^r \alpha_k' u(i-k) \Bigr|^{-\frac{s}{n+r}}
&=\prod_{i=0}^{n-1+r} \Bigl| \| u_i \| \sprod{\alpha'}{u_i/ \| u_i\|}_2 \Bigr|^{-\frac{s}{n+r}} \nonumber \\
&\leq
\frac{\bigl[ 2 (n+r) (r+1)^{r/2} \bigr]^s}{\Bigl| \prod_{i=0}^{n-1+r} \Bigl( \sum_{k=0}^r u(i-k)^2 \Bigr) \Bigr|^{\frac{s}{2(n+r)}}} \label{eq:volume}
\end{eqnarray}
where $\sprod{\cdot}{\cdot}_2$ denotes the standard Euclidian scalar product. Now, in both cases $\mathcal{A}_1$ and $\mathcal{A}_2$ we choose $\alpha = \alpha'$ and obtain
\begin{equation} \label{eq:D1}
 D \leq \|\rho'\|_{L^1}^{s} \frac{s^{-s}}{1-s} \frac{(r+1)^s \bigl[ 2 (n+r) (r+1)^{r/2} \bigr]^s}{\Bigl| \prod_{i=0}^{n-1+r} \Bigl( \sum_{k=0}^r u(i-k)^2 \Bigr) \Bigr|^{\frac{s}{2(n+r)}}}
\end{equation}
if $\mathcal{A}_1$ is satisfied and
\begin{equation} \label{eq:D2} \fl
 D \leq  \|\rho\|_\infty^{(r+1)s}  \frac{2^s s^{-s}}{1-s} \bigl( 1+2 (n+r) (r+1)^{r/2} \bigr)^{rs} \frac{(2R)^{rs} \bigl[ 2 (n+r) (r+1)^{r/2} \bigr]^s}{\Bigl| \prod_{i=0}^{n-1+r} \Bigl( \sum_{k=0}^r u(i-k)^2 \Bigr) \Bigr|^{\frac{s}{2(n+r)}}}
\end{equation}
if $\mathcal{A}_2$ is satisfied.
\par
The proof of the second statement is similar but without use of Lemma \ref{lemma:fraction}. By Cramer's rule we get $| G_\Gamma (z;\gamma_0,\gamma_1)| = 1/|\det [H_\Gamma - z]|$. Set $d = \gamma_1 - \gamma_0$. We have the decomposition $[H_\Gamma - z] = \tilde A + \sum_{k=0}^{r} \omega_{\gamma_0+k} \tilde V_k$, where the elements of the diagonal matrices $\tilde V_k \in \RR^{(d + 1) \times (d + 1)}$, $k = 0,\dots,r$, are given by $\tilde V_k(i) = u(i-k)$, $i  \in \{0,\dots,d\}$, and $\tilde A := [H_\Gamma - z] - \sum_{k=0}^r \omega_k \tilde V_k$ is independent of $\omega_k$, $k \in \{x,\dots,x+r\}$.
We apply Lemma \ref{lemma:detgen} with $t = s \frac{d+1}{n+r}$ and obtain (using $s \geq t$) for all
$\alpha = (\alpha_k)_{k=0}^r \in \tilde M := \{\alpha \in \RR^{r+1} : \alpha_0 \not = 0,~ \mbox{$\sum_{k=0}^r \alpha_k \tilde V_k$ is invertible}\}$ that $\mathbb{E}_{\{\gamma_0,\dots,\gamma_0+r\}} \bigl \{ | G_\Gamma (z;\gamma_0,\gamma_1) |^{s/(n+r)} \bigr \} \leq D_\alpha^+ (d)$ where
\[
 D_\alpha^+ (d)  = \|\rho'\|_{L^1}^{s \frac{d+1}{n+r}} \frac{s^{-s}}{1-s}   \Bigl(\sum_{k=0}^r | \alpha_k| \Bigr)^{s \frac{d+1}{n+r}}  \prod_{i=0}^{d} \Bigl| \sum_{k=0}^r \alpha_k u(i-k) \Bigr|^{-s/(n+r)}
\]
if $\mathcal{A}_1$ is satisfied and $D_\alpha^+ (d)$ equals
\begin{equation*} \fl
    \|\rho\|_\infty^{s \frac{(r+1)(d+1)}{n+r}} (2R)^{s \frac{r(d+1)}{n+r}} \frac{2^s s^{-s}}{1-s} |\alpha_0|^{s \frac{d+1}{n+r}}\Biggl( 1+\max_{i\in \{1,\dots,r\}} \frac{|\alpha_i|}{|\alpha_0|} \Biggr)^{s r} \prod_{i=0}^{d} \Bigl| \sum_{k=0}^r \alpha_k u(i-k) \Bigr|^{\frac{s}{n+r}}
\end{equation*}
if $\mathcal{A}_2$ is satisfied. Since $\tilde M \supset M$ for each $d \in {0,\dots n-1+r}$ the set $\tilde M$ is non-empty. Thus Ineq.{} \eref{eq:leqDplus} holds with the constant
\begin{equation} \label{eq:defDplus}
 D^+ := \max_{d \in \{0,\dots,n-1+r\}\phantom{\tilde M}} \inf_{\alpha \in \tilde M} D_\alpha^+ (d) .
\end{equation}
We again choose $\alpha = \alpha'$ as in Fig. \ref{fig:volume}, use $\alpha_k' \in [0,1]$ and $\alpha_0' \geq (2 (n+r) (r+1)^{r/2})^{-1}$, estimate $D_{\alpha'}^+ (d)$ similar to Ineq.{} \eref{eq:volume}, and obtain
\begin{equation} \label{eq:Dplus1} \fl
 D^+ \leq  \max_{d \in \{0,\dots,n-1+r\}} \left\lbrace   \|\rho'\|_{L^1}^{s \frac{d+1}{n+r}} \frac{s^{-s}}{1-s}    \frac{(r+1)^s \bigl[2(d+1)(r+1)^{r/2} \bigr]^s}{\Bigl| \prod_{i=0}^{d} \sum_{k=0}^r u(i-k)^2 \Bigr|^{s/(2(n+r))}} \right\rbrace
\end{equation}
if $\mathcal{A}_1$ is satisfied and
\begin{eqnarray} \label{eq:Dplus2} \fl
D^+\leq \\ \fl \max_{d \in \{0,\dots,n-1+r\}} 
\left\lbrace
 \frac{\|\rho\|_\infty^{s\frac{(r+1)(d+1)}{n+r}}}{(2R)^{-s\frac{r(d+1)}{n+r}}} \cdot \frac{\bigl[1+2 (d+1) (r+1)^{r/2} \bigr]^{sr} \bigl[2(d+1)(r+1)^{r/2} \bigr]^s}{2^{-s} s^s (1-s) \Bigl| \prod_{i=0}^d \sum_{k=0}^r u(i-k) \Bigr|^{\frac{s}{2(n+r)}}}
\right\rbrace \nonumber
\end{eqnarray}
if $\mathcal{A}_2$ is satisfied.
\end{proof}
\begin{theorem} \label{theorem:exp2}
Let $n \in \NN$, $\Theta \subset \ZZ$, $\min \Theta = 0$, $\max \Theta = n-1$, $\Gamma \subset \ZZ$ connected, $s \in (0,1)$, $r$ as in Eq.{} \eref{eq:r}, $D$ the constant from Lemma \ref{lemma:finitness3}, and let $\rho$ satisfy one of the assumptions $\mathcal{A}_1$ or $\mathcal{A}_2$ from \eref{eq:assumptions}. Assume $D < 1$.
Then $m = - \ln D$ is strictly positive and we have the bound
\[
 \mathbb{E} \bigl\{ | G_\Gamma (z;x,y)|^{s/(n+r)} \bigr\} \leq D^+ \,\, {\rm e}^{-m \left \lfloor \frac{| x-y|}{n+r} \right\rfloor}
\]
for all $x,y \in \ZZ$ with $| x-y|\geq 2(n+r)$ and all $z \in \CC \setminus \RR$, where $\lfloor \cdot \rfloor$ is defined by $\lfloor z \rfloor:=\max\{k\in \ZZ \mid k\leq z\}$.
\end{theorem}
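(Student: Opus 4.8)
The plan is to repeat the proof of Theorem~\ref{theorem:exp} almost verbatim, using the window of width $n+r$ in place of the one of width $n$ and Lemma~\ref{lemma:finitness3} in place of Lemma~\ref{lemma:finitness1}. The two ingredients are the analogues of Lemmas~\ref{lemma:iteration1} and~\ref{lemma:iteration2}. For the \emph{iteration step}, given a connected $\Gamma$ and $x,y\in\Gamma$ with $y-x\geq n+r$, set $\Lambda:=\{x+n+r,x+n+r+1,\dots\}\cap\Gamma$ and take the $(x,y)$ matrix element of the resolvent identity~\eref{eq:resolvent} with this $\Lambda$. The depleted term vanishes because $x\notin\Lambda$, $y\in\Lambda$ and $H_\Gamma^\Lambda$ is block diagonal, and since $T_\Gamma^\Lambda$ is supported on the single bond $\{x+n-1+r,x+n+r\}$ one obtains, exactly as in~\eref{e:geometric-resolvent},
\[
 G_\Gamma(z;x,y)=G_\Gamma(z;x,x+n-1+r)\,G_\Lambda(z;x+n+r,y).
\]
The second factor depends only on the coupling constants $\omega_k$ with $k\geq x+r+1$, hence is independent of $\omega_x,\dots,\omega_{x+r}$; averaging over these $r+1$ variables and applying~\eref{eq:leqD} to the first factor gives $\mathbb{E}\bigl\{|G_\Gamma(z;x,y)|^{s/(n+r)}\bigr\}\leq D\,\mathbb{E}\bigl\{|G_\Lambda(z;x+n+r,y)|^{s/(n+r)}\bigr\}$.

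For the \emph{base case}, given a connected $\Gamma$ with $\min\Gamma=x$ and $y\in\Gamma$ with $n+r\leq y-x<2(n+r)$, apply~\eref{eq:resolvent2} with $\Lambda:=\{x,\dots,y-n-r\}$ (so $1\leq|\Lambda|\leq n+r$), exactly as in the proof of Lemma~\ref{lemma:iteration2}, to get
\[
 G_\Gamma(z;x,y)=G_\Lambda(z;x,y-n-r)\,G_\Gamma(z;y-n-r+1,y).
\]
Since the first factor depends only on $\omega_k$ with $k\leq y-n-r$, one may integrate out the constants $\omega_k$, $k\in\{y-n-r+1,\dots,y-n+1\}$, first and bound the second factor by $D$ using~\eref{eq:leqD} (which averages over precisely these constants), and then bound the remaining expectation of $|G_\Lambda(z;x,y-n-r)|^{s/(n+r)}$ by $D^+$ using~\eref{eq:leqDplus}. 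This yields $\mathbb{E}\bigl\{|G_\Gamma(z;x,y)|^{s/(n+r)}\bigr\}\leq D^+D$.

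To conclude, by symmetry assume $y-x\geq 2(n+r)$ and put $p:=\lfloor(y-x)/(n+r)\rfloor-1\geq1$. Iterating the iteration step $p$ times, starting from $\Gamma$, leads to $G_{\Lambda_p}(z;x+p(n+r),y)$ with $\Lambda_p=\{x+p(n+r),\dots\}\cap\Gamma$ and $n+r\leq y-(x+p(n+r))<2(n+r)$; the base case then gives
\[
 \mathbb{E}\bigl\{|G_\Gamma(z;x,y)|^{s/(n+r)}\bigr\}\leq D^{p}\,D^{+}D=D^{+}\exp\{-m(p+1)\},
\]
where $m=-\ln D>0$ because $D<1$, and $p+1=\lfloor|x-y|/(n+r)\rfloor$, which is the claimed bound.

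The step I expect to require the most care is the base case, specifically the coupling-constant bookkeeping for $r\geq1$: the index set $\{x,\dots,x+r\}$ used to average the left factor and the set $\{y-n-r+1,\dots,y-n+1\}$ used to average the right factor may then overlap. This is harmless: the left factor $G_\Lambda(z;x,y-n-r)$ is supported strictly to the left of the deleted bond, hence does not depend on any $\omega_k$ with $k\geq y-n-r+1$, so the two averages may be carried out successively through conditional expectation without double counting; and when $|\Lambda|<r+1$ the bound~\eref{eq:leqDplus} only genuinely uses the $|\Lambda|$ constants $\omega_x,\dots,\omega_{x+|\Lambda|-1}$. One should also verify the two points that make the geometric identities collapse to a product of exactly two Green's functions (one-dimensionality together with the block-diagonal form of the depleted Hamiltonians) and that the estimates of Lemma~\ref{lemma:finitness3} hold uniformly in the remaining coupling constants, which they do since the right-hand sides coming from Lemma~\ref{lemma:detgen} do not involve the matrix $A$.
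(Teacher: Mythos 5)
Your proof is correct and follows the paper's own argument essentially verbatim: the same depletion/geometric resolvent identities with windows of width $n+r$, Lemma~\ref{lemma:finitness3} in place of Lemma~\ref{lemma:finitness1}, the same analogues of Lemmas~\ref{lemma:iteration1} and~\ref{lemma:iteration2}, and the same iteration count $p=\lfloor|x-y|/(n+r)\rfloor-1$. Your careful treatment of the possible overlap of the two averaging sets in the base case (for $n+r\le y-x<n+2r$ with $r\ge1$), which the paper passes over silently, is correct and worth noting.
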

\begin{proof}
 The proof is similar to the proof of Theorem \ref{theorem:exp}. We again assume $y>x$. Let $\Gamma_1 \subset \ZZ$ be connected. Using Eq.{} \eref{eq:resolvent} with $\Lambda := \{x+n+r,\dots\} \cap \Gamma_1$ and Lemma \ref{lemma:finitness3} we have for all pairs $x,y \in \Gamma_1$ with $y-x \geq n+r$
 \begin{equation} \label{eq:iteration3}
  \mathbb{E} \bigl\{ | G_{\Gamma_1} (z;x,y)|^{s/(n+r)}\bigr\} \leq D \,\,  \mathbb{E} \bigl\{ | G_\Lambda (z;x+n+r,y)|^{s/(n+r)}\bigr \}
\end{equation}
which is the analogue to Lemma \ref{lemma:iteration1}. Now, let $\Gamma_2 = \{x,x+1,\dots\}$ and $y \in \Gamma_2$ with $n+r \leq y-x < 2(n+r)$. By Eq.{} \eref{eq:resolvent2} with $\Lambda = \{x,\dots,y-(n+r)\}$ and Lemma \ref{lemma:finitness3} we have
\begin{equation} \label{eq:iteration4}
 \mathbb{E} \bigl\{ | G_{\Gamma_2} (z;x,y)|^{s/(n+r)}\bigr\} \leq D D^+
\end{equation}
    which is the analogue of Lemma \ref{lemma:iteration2}. Iterating Eq.{} \eref{eq:iteration3} exactly $p=\lfloor(y-x)/(n+r)\rfloor - 1$ times, starting with $\Gamma_1 = \Gamma$, and finally using Eq.{} \eref{eq:iteration4} once gives the statement of the theorem.
\end{proof}
\section{Apriori bound} \label{sec:apriori}
Here we prove a global uniform bound on $(x,y) \mapsto \EE\{ | G_\Gamma (z;x,y) |^s  \}$
for $s>0$ sufficiently small.
We assume throughout that assumption $\mathcal{A}_2$ holds, i.\,e.{} there is
an $R \in (0, \infty)$  such that $ \supp \rho \subset [-R, R]$.
We use the notation $u_j (x) = u(x-j)$,  for all $j,x \in \ZZ$,  for the translated function
as well as for the corresponding multiplication operator.
\begin{theorem}\label{t:a-priori}
Let $\Gamma \subset \ZZ$ be connected, $s \in (0,1)$, $\Theta \subset \ZZ$ with $\min \Theta = 0$ and $\max \Theta = n-1$ for some $n \in \NN$, and  $\supp \rho $ be compact.
Then there is a positive constant $C$ such that for all $x,y \in \Gamma$ and all $z \in \CC \setminus \RR$ we have
\[
\mathbb{E} \bigl \{ | G_\Gamma (z;x,y) |^{s/(4n)} \bigr \} \leq C .
\]
\end{theorem}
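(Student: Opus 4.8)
The plan is to establish the a priori bound by combining the decoupling estimates from Lemma~\ref{lemma:finitness1} with an iteration scheme, but the key new difficulty here is that we want to control \emph{diagonal} and all off-diagonal elements, including those matrix elements that are ``far'' from being of the determinant form exploited earlier, and we want this uniformly in $z$ near the real axis. The essential obstruction is that a single matrix element $G_\Gamma(z;x,y)$ need not equal the reciprocal of a determinant; Cramer's rule expresses it as a ratio $\det C_{y,x}/\det[H_\Lambda - B_\Gamma^\Lambda - z]$, and the numerator can be large. To handle the numerator I would use the Combes--Thomas-type or rather a purely algebraic bound, or better: use the fact (second resolvent identity, as in Section~\ref{sec:exp}) that any Green's function element can be written as a product of a bounded number of ``good'' factors each of which is either of the determinant form or sits in a finite box of controlled size.

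First I would reduce to a standard situation: given $x,y \in \Gamma$, use the geometric resolvent expansion \eref{e:geometric-resolvent}-type identities to peel off factors, so that $G_\Gamma(z;x,y)$ is written as a product $G_{\Lambda_0}(z;x,a_1)\,G_{\Lambda_1}(z;a_1',a_2)\cdots$, where consecutive cut points are at distance roughly $n$, each $\Lambda_i$ is a half-line or an interval, and each factor is covered by one of the three cases of Lemma~\ref{lemma:finitness1} or by the trivial bound on an interval of length $<n$. The subtlety is that when $|x-y|<n$ one cannot reach the determinant form at all; here one must instead bound $G_\Gamma(z;x,y)$ using a box $\Lambda$ of size comparable to $n$ containing both $x$ and $y$, apply Lemma~\ref{lemma:fraction}, and then control the \emph{ratio of determinants} rather than a single determinant. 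This is where the exponent $s/(4n)$ (rather than $s/n$) comes from: one applies Hölder with several factors, and one needs room to absorb the numerator $\det C_{y,x}$, which is itself a determinant of a submatrix of $H_\Lambda - B_\Gamma^\Lambda - z$ of size $<n$, hence can be bounded by a product of at most $n$ entries, each of which involves at most one coupling constant linearly; splitting off these coupling constants and re-using Lemma~\ref{lemma:det} (with the compact-support hypothesis $\mathcal{A}_2$ ensuring $\|\rho\|_{L^1}$-type quantities are finite and the $\omega_k$ integrable) yields a finite bound.

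Concretely, the key steps in order: (1) fix $x,y$, WLOG $x \le y$, and choose $\Lambda$ a connected subset of $\Gamma$ of size $3n$ (or so) containing $\{x,\dots,y\}$ when $|x-y|$ is small, using \eref{eq:resolvent} to reduce $G_\Gamma$ to $G_\Lambda$ plus a product term; (2) via Cramer's rule write $G_\Lambda(z;x,y) = \pm\det C_{y,x}/\det[H_\Lambda - B_\Gamma^\Lambda - z]$; (3) bound $|\det C_{y,x}|$ from above: it is the determinant of a matrix obtained by deleting one row and column, so expand it as a sum over permutations and use that each factor is either an off-diagonal $\pm1$ or a diagonal entry $V_\omega(k)-z-\dots$, and since $\supp\rho$ is compact the diagonal entries are bounded by a deterministic constant times $(1+|z|)$—but since we only care about $z$ with, say, $|\im z|$ arbitrary and the final bound should be uniform, one must be careful; here I would instead avoid bounding the numerator directly and rather bound $|G_\Lambda(z;x,y)| \le \|(H_\Lambda - B_\Gamma^\Lambda-z)^{-1}\|$ and relate the operator norm to $|\det|^{-1}$ times a polynomial in the matrix entries via the adjugate formula, so that $|G_\Lambda(z;x,y)|^{s/(4n)} \le (\text{poly bounded by const})\cdot |\det[H_\Lambda - B_\Gamma^\Lambda-z]|^{-s/(4n)} \cdot (\text{something in }\omega)$; (4) split $[H_\Lambda - B_\Gamma^\Lambda - z]$ into $A + \sum_k \omega_k V_k$ over the (at most $O(n)$) relevant coupling constants and apply Lemma~\ref{lemma:det} (or Lemma~\ref{lemma:detgen}) iteratively, each application costing a factor $t$-power of $\|\rho\|_\infty,\|\rho\|_{L^1}$; the factor $4$ in $s/(4n)$ provides exactly the slack needed so that after distributing via Hölder across numerator-control, determinant inversion, and the $\le n$ spectral averages, every exponent stays in $(0,1)$. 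The main obstacle I anticipate is step (3): getting a bound on the numerator / off-diagonal matrix element that is genuinely \emph{uniform in} $z \in \CC\setminus\RR$ — one cannot afford any growth in $|z|$ or blow-up as $\im z \to 0$, so the argument must be structured so that the only $z$-dependence sits inside a determinant that is then inverted and spectrally averaged, with the numerator being controlled either by $1$ (in the pure determinant case) or by a further determinant of strictly smaller size to which the same machinery applies recursively, the recursion terminating because the size drops each time.
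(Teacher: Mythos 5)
You correctly identify the main technical tool: bound $|G_\Lambda(z;x,y)|$ by the operator norm $\|(H_\Lambda - B_\Gamma^\Lambda - z)^{-1}\|$, relate that norm to $|\det|^{-1}$ via the adjugate (the paper proves precisely $\|V^{-1}\|\le \|V\|^{n-1}/|\det V|$ in Lemma~\ref{lemma:averagenorm} using singular values), and then spectrally average the determinant after splitting $[H_\Lambda - B_\Gamma^\Lambda - z] = A + \sum_k\omega_k V_k$ via the substitution of Lemma~\ref{lemma:detgen}. Your idea to reduce to $|x-y|$ bounded (by $O(n)$) using the exponential decay already proved, and the role of $s/(4n)$ as the exponent matching $|\Lambda|=4n$, are also in the right spirit. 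However, there is a genuine gap at your step~(3): you assert the extra factor from the adjugate is ``poly bounded by const,'' but the matrix $[H_\Lambda - B_\Gamma^\Lambda - z]$ has $B_\Gamma^\Lambda$ in its entries, and $B_\Gamma^\Lambda$ consists of boundary resolvents of the \emph{exterior} Hamiltonian $H_{\Gamma\setminus\Lambda}$, which blow up as $z$ approaches $\sigma(H_{\Gamma\setminus\Lambda})$ independently of $\im z$. These terms live in the matrix \emph{entries}, not ``inside a determinant that is then inverted and averaged,'' so they contaminate the $\|V\|^{n-1}$ factor in a way that no amount of H\"older slack built into $s/(4n)$ can absorb. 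Your remark that ``one must be careful'' here is exactly where the proof would fail as written; the recursion you gesture at (``a further determinant of strictly smaller size'') does not apply to these exterior-resolvent entries.

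The paper's resolution, missing from your plan, is a \emph{second} Schur complement with a slightly enlarged box. Take $\Lambda_+=\{-1,\dots,4n\}\supset\Lambda=\{0,\dots,4n-1\}$; the outer boundary term $B_\ZZ^{\Lambda_+}$ is concentrated on $\partial\Lambda_+=\{-1,4n\}$, and a second Schur reduction produces a $2\times2$ matrix $K$ on $\ell^2(\partial\Lambda_+)$ whose two diagonal entries depend \emph{linearly} on the coupling constants $\omega_{-1}$ and $\omega_{3n+1}$ --- variables disjoint from the $\omega_0,\dots,\omega_{3n}$ used in the main spectral average over $\Lambda$. One first averages the norm of $(H_\Lambda - \cdots)^{-1}$ over $\omega_0,\dots,\omega_{3n}$ via Lemma~\ref{lemma:averagenorm}, obtaining a bound involving $\|(K-z)^{-1}\|^{s(4n-1)/(4n)}$, and then tames this residual term by averaging over $\omega_{-1},\omega_{3n+1}$ as in \eref{eq:averageK}. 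This two-stage averaging over \emph{disjoint} groups of random variables is the crux. You also omit the preliminary reduction to $|z|\le \|H_\omega\|+1$ (possible exactly because $\supp\rho$ compact makes $H_\omega$ uniformly bounded; for larger $|z|$ the resolvent has norm $\le1$), which is what removes all problematic $z$-growth before the Schur argument starts.
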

For the proof we will need
\begin{lemma}\label{lemma:averagenorm}
 Let $n \in \mathbb{N}$, $R \in \mathbb{R}$, $A \in \mathbb{C}^{n \times n}$ an arbitrary matrix, $V \in \mathbb{C}^{n \times n}$ an invertible matrix and $s \in (0,1)$. Then we have the bounds
\begin{equation} \label{eq:norm_estimate}
\| V^{-1} \| \leq \frac{\| V \|^{n-1}}{\abs{\det V}}
\end{equation}
and
\begin{equation} \label{eq:average_norm}
\int_{-R}^R \bigl\| (A+rV)^{-1} \bigr\|^{s/n} {\rm d}r \leq \frac{2R^{1-s} (\| A \| + R \| V \|)^{s(n-1)/n}}{s^s (1-s) \abs{\det V}^{s/n}} .
\end{equation}
\end{lemma}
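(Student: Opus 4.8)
The plan is to prove the two inequalities of Lemma~\ref{lemma:averagenorm} separately, since the norm bound \eref{eq:norm_estimate} is the essential ingredient for the integral bound \eref{eq:average_norm}. For \eref{eq:norm_estimate}, I would express $V^{-1}$ via the adjugate (cofactor) matrix: $V^{-1} = \operatorname{adj}(V)/\det V$. Each entry of $\operatorname{adj}(V)$ is, up to sign, an $(n-1)\times(n-1)$ minor of $V$, and such a minor is a determinant of a submatrix whose columns are subvectors of columns of $V$. Using Hadamard's inequality (or the submultiplicativity of the determinant under the operator norm restricted to the relevant subspace) one bounds each such minor by a product of $n-1$ column norms, each of which is at most $\|V\|$. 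Hence every entry of $\operatorname{adj}(V)$ is bounded by $\|V\|^{n-1}$, and after controlling the passage from entrywise bounds to the operator norm (a dimensional constant, which one may absorb or handle more carefully via the singular-value characterization), one gets $\|V^{-1}\| \le \|V\|^{n-1}/|\det V|$. A cleaner route, avoiding dimensional constants entirely, is to use singular values: if $\sigma_1 \ge \cdots \ge \sigma_n > 0$ are the singular values of $V$, then $\|V^{-1}\| = \sigma_n^{-1}$, $\|V\| = \sigma_1$, and $|\det V| = \prod_j \sigma_j$, so $\|V^{-1}\| = \sigma_n^{-1} = \bigl(\prod_{j<n}\sigma_j\bigr)/|\det V| \le \sigma_1^{n-1}/|\det V| = \|V\|^{n-1}/|\det V|$. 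I expect this singular-value argument to be the intended proof, as it is sharp and immediate.

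For the integral bound \eref{eq:average_norm}, the plan is to split into the two factors that make $\|(A+rV)^{-1}\|$ large: the determinant $|\det(A+rV)|$ in the denominator, and the numerator $\|A+rV\|^{n-1}$. First I would apply \eref{eq:norm_estimate} with $V$ replaced by $A+rV$ (which is invertible for all but finitely many $r$, since $r \mapsto \det(A+rV)$ is a nonzero polynomial of degree $n$ because $V$ is invertible), giving
\[
\bigl\|(A+rV)^{-1}\bigr\|^{s/n} \le \frac{\|A+rV\|^{s(n-1)/n}}{|\det(A+rV)|^{s/n}}.
\]
On the domain $r \in [-R,R]$ the numerator is bounded deterministically by $(\|A\| + R\|V\|)^{s(n-1)/n}$, so it pulls out of the integral. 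For the remaining integral $\int_{-R}^R |\det(A+rV)|^{-s/n}\,dr$, I would factor the polynomial as $\det(A+rV) = \det V \prod_{j=1}^n (r - z_j)$ (exactly as in the proof of Lemma~\ref{lemma:det}), bound $|r-z_j| \ge |r - \operatorname{Re} z_j|$, apply Hölder's inequality over the $n$ factors, and then estimate each $\int_{-R}^R |r-\operatorname{Re} z_j|^{-s}\,dr$. This last one-dimensional integral is maximized (over the location of the real singularity) when the singularity sits at an endpoint or, more simply, is bounded by $\int_{-R}^{R} |r-c|^{-s}\,dr \le 2\int_0^{R} t^{-s}\,dt = \frac{2R^{1-s}}{1-s}$ after translating and using symmetry; I would check the constant $s^s$ factor arises from an optimization analogous to the choice of $\lambda$ in Lemma~\ref{lemma:det}, or simply verify the stated constant $\frac{2R^{1-s}}{s^s(1-s)}$ works by a direct comparison.

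The main obstacle I anticipate is bookkeeping the constants so that the final expression matches \eref{eq:average_norm} exactly, in particular tracking how the $s^s$ in the denominator appears and confirming that the crude bound $|r - \operatorname{Re} z_j| \ge |r - \operatorname{Re} z_j|$ combined with Hölder does not lose the sharp dependence on $R$. A secondary subtlety is the step from entrywise to operator-norm control in \eref{eq:norm_estimate} if one does not use the singular-value argument; I would sidestep this by using singular values throughout, which makes both bounds dimension-constant-free and keeps the constants clean. Everything else—nonvanishing of the determinant polynomial, Hölder, the elementary integral of $|r-c|^{-s}$—is routine and parallels computations already carried out in the proof of Lemma~\ref{lemma:det}.
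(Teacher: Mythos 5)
Your proposal is correct and matches the paper's proof: for \eref{eq:norm_estimate} the paper uses exactly the singular-value argument you identified as the intended route, and for \eref{eq:average_norm} it applies \eref{eq:norm_estimate} to $A+rV$ (valid for all but finitely many $r$), pulls the deterministic bound $(\|A\|+R\|V\|)^{s(n-1)/n}$ out of the integral, and then invokes Lemma~\ref{lemma:det} with $\rho = \tfrac{1}{2R}\chi_{[-R,R]}$ to bound $\int_{-R}^R|\det(A+rV)|^{-s/n}\,\drm r$. The $s^s$ factor you were worried about is simply inherited from the constant in Lemma~\ref{lemma:det}; your direct estimate $\int_{-R}^R|r-\re z_j|^{-s}\,\drm r\le 2R^{1-s}/(1-s)$ is in fact sharper and already implies the stated bound, since $s^s\le 1$ for $s\in(0,1)$.
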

\begin{proof}
To prove Ineq. \eref{eq:norm_estimate} let $0< s_1 \leq s_2 \leq \ldots \leq s_n$ be the singular values of $V$. Then we have $\prod_{i=1}^n s_i \leq s_1 s_n^{n-1}$, that is,
\begin{equation}\label{eq:2}
 \frac{1}{s_1} \leq \frac{s_n^{n-1}}{\prod_{i=1}^n s_i} .
\end{equation}
For the norm we have $\| V^{-1} \| = 1/s_1$ and $\| V \| = s_n$. For the determinant of $V$ there holds $\abs{\det V} = \prod_{i=1}^n s_i$. Hence, Ineq. \eref{eq:norm_estimate} follows from Ineq. \eref{eq:2}. To prove Ineq. \eref{eq:average_norm} recall that, since $V$ is invertible, the set $\{r \in \RR \colon \mbox{$A+rV$ is singular}\}$ is a discrete set. Thus, for almost all $r \in [-R,R]$ we may apply Ineq. \eref{eq:norm_estimate} to the matrix $A+rV$ and obtain
\[
 \bigl\| (A+rV)^{-1} \bigr\|^{s/n} \leq \frac{(\| A \| + R \| V \|)^{s(n-1)/n}}{\abs{\det (A+rV)}^{s/n}} .
\]
Inequality \eref{eq:average_norm} now follows from Lemma \ref{lemma:det}.
\end{proof}
\begin{proof}[Proof of Theorem \ref{t:a-priori}]
To avoid notation we assume $\Gamma = \ZZ$. Since $\supp \rho \subset [-R,R]$, $H_\omega$ is a bounded operator. Set $m = \| H_\Gamma \| + 1$. If $| z| \geq m$, we use $\| G_\omega (z) \| = \sup_{\lambda \in \sigma (H_\omega)} | \lambda - z |^{-1} \leq 1$ and obtain the statement of the theorem. Thus it is sufficient to consider $| z| \leq m$.
If $| x-y| \geq 4n$ Theorem 5.2 applies, since $r \leq n$. We thus only consider the case $| x-y| \leq 4n-1$. By translation we assume $x = 0$ and by symmetry $y \geq 0$. Set $\Lambda_+ = \{-1,\dots,4n\}$ and $\Lambda = \{0,\dots,4n-1\}$. Lemma \ref{lemma:fraction} gives
\[
 \Pro_{\Lambda_+} G_\omega (z) \Pro_{\Lambda_+}^* = (H_{\Lambda_+} - B_\ZZ^{\Lambda_+} - z)^{-1}
\]
where $\langle\delta_x , B_\Gamma^{\Lambda_+} \delta_y \rangle = \sum_{k \in \Gamma \setminus \Lambda_+ , \abs{k-x} = 1} \langle \delta_k , (H_{\Gamma \setminus \Lambda_+} - z)^{-1}\delta_k \rangle$ if $x=y$ and $x \in \partial \Lambda_+ = \{-1,4n\}$, and zero else.
Similarly, by another application of the Schur complement formula
\begin{eqnarray*} \fl
 \Pro_{\Lambda} (H_{\Lambda_+} - &B_\ZZ^{\Lambda_+} - z)^{-1} \Pro_{\Lambda}^* = \\ \fl
 &\Bigl(H_{\Lambda} - z - \Pro_{\Lambda} \Delta \Pro_{\partial\Lambda_+}^* \, \Bigl(\Pro_{\partial \Lambda_+}^{\Lambda_+}(H_{\Lambda_+} - B_\ZZ^{\Lambda_+})\bigl(\Pro_{\partial \Lambda_+}^{\Lambda}\bigr)^* - z \Bigr)^{-1} \, \Pro_{\partial\Lambda_+} \Delta \Pro_{\Lambda}^* \Bigr)^{-1} ,
\end{eqnarray*}
and consequently
\begin{equation} \label{eq:2xschur}
 \Pro_{\Lambda} G_\omega (z) \Pro_{\Lambda}^* = \Bigl(H_{\Lambda} - z - \Pro_{\Lambda} \Delta \Pro_{\partial\Lambda_+}^* \, (K-z)^{-1} \, \Pro_{\partial\Lambda_+} \Delta \Pro_{\Lambda}^* \Bigr)^{-1}
\end{equation}
where
\[
 K = \Pro_{\partial \Lambda_+}^{\Lambda_+}(H_{\Lambda_+} - B_\ZZ^{\Lambda_+})\bigl(\Pro_{\partial \Lambda_+}^{\Lambda_+}\bigr)^* .
\]
Note that $B_\ZZ^{\Lambda_+}$ is independent of $\omega_k$, $k \in \{-1,\dots,3n+1\}$,
and $K$ is independent of $\omega_k$, $k \in \{0,\dots,3n\}$. Thus, in matrix representation with respect to the canonical basis, the operator $K:\ell^2 (\partial \Lambda_+) \to \ell^2 (\partial \Lambda_+)$ may be decomposed as
\begin{equation*}
[K] = \pmatrix{
    \omega_{-1}u(0) & 0 \cr
    0     & \omega_{3n+1} u(n-1) \cr
   } -
\pmatrix{
    f_1 & 0 \cr
    0     & f_2 \cr
   }
\end{equation*}
where $f_1 := \sum_{k \in \ZZ \setminus \{-1\}} \omega_k u(-1-k) - \langle \delta_{-1} B_\ZZ^{\Lambda_+} \delta_{-1} \rangle$
and $f_2 := \sum_{k \in \ZZ \setminus \{3n+1\}} \omega_k u(4n-k) - \langle \delta_{4n} B_\ZZ^{\Lambda_+} \delta_{4n} \rangle$
are independent of $\omega_{-1}$ and $\omega_{3n+1}$.
Standard spectral averaging or Lemma \ref{lemma:det} gives for all $t \in (0,1)$
\begin{equation} \label{eq:averageK} 
 \mathbb{E}_{\{-1,3n+1\}} \Bigl\{\bigl\|(K-z)^{-1}\bigr\|^t \Bigr\} \leq \bigl(| u(0)|^{-t}+| u(n-1)|^{-t}\bigr) \frac{ \| \rho \|_\infty^t 2^t t^{-t}}{1-t} .
\end{equation}
Now, the operator $H_{\Lambda}$ can be decomposed as $H_{\Lambda} = A + \sum_{k=0}^{3n} \omega_k u_k$
where $A := H_{\Lambda} - \sum_{k=0}^{3n} \omega_k u_k$ is
independent of $\omega_k$, $k \in \{0,\dots,3n\}$. Let $\alpha := (\alpha_k)_{k=0}^{3n} \in [0,1]^{3n+1}$
with $\alpha_0 \not = 0$. Similarly to the proof of Lemma \ref{lemma:finitness3}, we use the substitution $\omega_0 = \alpha_0 \zeta_0$ and $\omega_i = \alpha_i \zeta_0 + \alpha_0 \zeta_i$ for $i \in \{1,\dots,3n\}$ and obtain from Eq. \eref{eq:2xschur}
\begin{eqnarray*} \fl
 E  := \mathbb{E}_{\{0,\dots,3n\}} \Bigl\{ \bigl\|  \Pro_{\Lambda} G_\omega (z) \Pro_{\Lambda}^* \bigr\|^{s/(4n)} \Bigr\} \\
 \fl \leq \| \rho \|_\infty^{3n+1} \!\!\!\!\!\!\!\!\!\!\!\! \int\limits_{[-R,R]^{3n+1}} \!\!\!\!\!\!\!\!\!  \Bigl\|  \Bigl(A + \sum_{k=0}^{3n} \omega_k u_k - z - \Pro_{\Lambda} \Delta \Pro_{\partial\Lambda_+}^*  (K-z)^{-1}  \Pro_{\partial\Lambda_+}^\Lambda \Delta \Pro_{\Lambda}^* \Bigr)^{-1} \Bigr\|^{\frac{s}{4n}} \drm \omega_0 \dots \drm\omega_{3n} \\
 \fl \leq \| \rho \|_\infty^{3n+1} \int_{[-S, S]^{3n+1}}  \Bigl\|  \Bigl(A' + \zeta_0 \sum_{k=0}^{3n} \alpha_k u_k \Bigr)^{-1} \Bigr\|^{s/(4n)} | \alpha_0|^{3n+r} \drm \zeta_0 \dots \drm\zeta_{3n}
\end{eqnarray*}
where $A' = A + \alpha_0 \sum_{k=1}^{3n} \zeta_k u_k - z - \Pro_{\Lambda} \Delta \Pro_{\partial\Lambda_+}^* \, (K-z)^{-1} \, \Pro_{\partial\Lambda_+} \Delta \Pro_{\Lambda}^*$ and $S = R (1+\max_{i \in \{1,\dots,3n\}}$ $|\alpha_i / \alpha_0 |)/| \alpha_0|$. Since $\bigcup_{i=0}^{3n} \supp u_i = \Lambda$, there exists an $\alpha \in [0,1]^{3n+1}$ such that $\sum_{k=0}^{3n} \alpha_k u_k$ is invertible on $\ell^2 (\Lambda)$, compare the proof of Lemma \ref{lemma:finitness3} and Figure \ref{fig:volume}. Thus we may apply Lemma \ref{lemma:averagenorm} and obtain
\begin{equation} \label{eq:Eint} \fl
 E \leq \| \rho \|_\infty^{3n+1} \int_{[-S,S]^{3n}} \frac{2 s^{-s} S^{1-s}}{1-s} \frac{\Bigl(\| A' \| + S \| \sum_{k=0}^{3n} \alpha_k u_k \| \Bigr)^{\frac{s(4n-1)}{4n}}}{\abs{\det \bigl( \sum_{k=0}^{3n} \alpha_k u_k \bigr)}^{s/(4n)}} \drm \zeta_1 \dots \drm \zeta_{3n}
\end{equation}
Using $\zeta_k \in [-S,S]$ for $k \in \{1,\dots,3n\}$, $\omega_k \in [-R,R]$ for $k \in \ZZ \setminus \{0,\dots,3n\}$ and $\alpha_k \in [0,1]$ for $k \in \{0,\dots,3n\}$, the norm of $A'$ can be estimated as
\begin{eqnarray} \fl
 \| A' \| = \Bigl\| H_{\Lambda} - \sum_{k=0}^{3n} \omega_k u_k + \alpha_0 \sum_{k=1}^{3n} \zeta_k u_k - z - \Pro_{\Lambda} \Delta \Pro_{\partial\Lambda_+}^* \, (K-z)^{-1} \, \Pro_{\partial\Lambda_+} \Delta \Pro_{\Lambda}^* \Bigr\| \nonumber \\
  \leq 2 + (n-1) R \, \| u\|_{\infty} + 3S n \| u \|_{\infty} + m + 4  \bigl\|(K-z)^{-1} \bigr\| \, .\label{eq:A'}
\end{eqnarray}
All terms in the sum \eref{eq:A'} are independent of $\zeta_k$, $k \in \{0,\dots,3n\}$.
Using $(\sum | a_i|)^t \leq \sum | a_i|^t$ for $t<1$ we see from Ineq.~\eref{eq:Eint} and \eref{eq:A'} that there are constants $C_1$ and $C_2$ such that $E \leq C_1 + C_2\|(K-z)^{-1} \|^{s(4n-1)/(4n)}$.
If we average over $\omega_{-1}$ and $\omega_{3n+1}$, Ineq.~\eref{eq:averageK} gives the desired result.
\end{proof}
%
%
%
%
%
%
%
%
\section{Localization} \label{sec:loc}
In this section we discuss exponential localization for the discrete alloy-type model. In particular,
\begin{enumerate}[$\bullet$]
 \item we prove Theorem \ref{theorem:loc} on exponential localization in the one-dimensional discrete alloy-type model for large disorder, see the end of the section.
 \item we establish a criterion for exponential localization using fractional moment bounds \emph{only}, see Theorem \ref{thm:fmb-localization}.
\item we discuss why previous papers on fractional moment bounds do not imply immediately spectral localization for our model.
\end{enumerate}
The existing proofs of localization via the fractional moment method use either
the Simon Wolff criterion, see e.\,g. \cite{SimonW1986, AizenmanM1993,AizenmanSFH2001}, or the RAGE-Theorem, see e.\,g.
\cite{Aizenman1994,Graf1994}, or eigenfunction correlators, see e.\,g.
\cite{AizenmanENSS2006}. Neither dynamical nor spectral localization can be directly
inferred from the behavior of the Green's function using the existent methods.
The reason is that the random variables $V_\omega (x)$, $x \in \ZZ$, are not
independent, while the dependence of $H_\omega$ on the i.\,i.\,d. random
variables $\omega_k$, $k \in \ZZ$, is not monotone.
\par
However, for the discrete alloy-type model it is possible to show localization using the multiscale analysis. The two
ingredients of the multiscale analysis are the initial length scale estimate and
the Wegner estimate, compare assumptions (P1) and (P2) of \cite{DreifusK1989}.
The initial length scale estimate is implied by the exponential decay of an
averaged fractional power of Green's function, i.\,e. Theorem
\ref{theorem:exp} and \ref{theorem:exp2}, using Chebyshev's inequality. A Wegner
estimate for the models considered here was established in \cite{Veselic2010b}. Thus a variant of the multiscale analysis of
\cite{DreifusK1989} yields pure point spectrum with exponential decaying
eigenfunctions for almost all configurations of the randomness. We say a
variant, since in our case the potential values are independent only for lattice
sites having a minimal distance. It has been implemented in detail in the paper
\cite{GerminetK2001} for random Schr\"odinger operators in the continuum, and holds similarly for discrete models.
\par
In this section we conclude localization from bounds on averaged fractional powers of Green's function \emph{without using the multiscale analysis}.
Roughly speaking, we skip over the induction step of the multiscale analysis and directly compute the ``typical output''
of the multiscale analysis, i.\,e. the hypothesis of Theorem 2.3 in \cite{DreifusK1989}.
Then we conclude localization using existent methods.
\par
For $L > 0$ and $x \in \ZZ$ we denote by $\Lambda_{x,L} = [ x-L , \dots ,
x+L ] \cap \ZZ$ the cube of side length $2L+1$. Let further $m > 0$, $L \in \NN$ and $E \in \RR$. A cube $\Lambda_{x,L}$ is called \emph{$(m,E)$-regular} (for a fixed
potential), if $E \not \in \sigma (H_{\Lambda_{x,L}})$ and
\[
 \sup_{w \in \partial \Lambda_{x,L}} | G_{\Lambda_{x,L}} (E ; x,w) |
\leq \euler^{-m L} .
\]
Otherwise we say that $\Lambda_{x,L}$ is \emph{$(m , E)$-singular}. The next
Proposition now states that certain bounds on averaged fractional moments of Green's function imply the hypothesis of Theorem 2.3 in \cite{DreifusK1989} (without applying the induction step of the multiscale analysis).
\begin{proposition} \label{prop:replace-msa}
Let $\Theta \subset [0,n-1] \cap \ZZ$ for some $n \in \NN$, $I \subset \RR$
be a bounded interval and $s \in (0,1)$. Assume the following two statements:
\begin{enumerate}[(i)]
\item  There are constants $C,\mu \in (0,\infty)$, $L_0 \in \NN$ and $N \in \NN$
such that $\EE \bigl\{ | G_{\Lambda_{k,L}} (E;x,y) |^{s/N} \bigr\} \leq C \euler^{-\mu | x-y |}$ for all $k \in \ZZ$, $L \in \NN$, $x,y \in \Lambda_{k,L}$ with $| x-y | \geq L_0$, and all $E \in I$.
\item There are constants $C' \in (0,\infty)$ and $N' \in \NN$ such that $\EE \bigl\{ | G_{\Lambda_{k,L}} (E+\i \epsilon ;x,x) |^{s/N'} \bigr\} \leq C'$ for all $k \in \ZZ$, $L \in \NN$, $x \in \Lambda_{k,L}$, $E \in I$ and $\epsilon>0$ .
\end{enumerate}
Then we have for all $L \geq \max\{8 \ln (2)/\mu , L_0\}$ and all $x,y \in \ZZ$ with $| x-y| \geq 2L+n$ that
\begin{eqnarray*} 
 \PP \{\forall \, E \in I \mbox{ either $\Lambda_{x,L}$ or $\Lambda_{y,L}$ is
$(\mu/8,E)$-regular} \}  \geq  1- K ,
\end{eqnarray*}
where $K = 8 \bigl( C | I |  + C_{\rm W}(2L+1)^2 \bigr) \euler^{-\mu sL /(8N')}$ and $C_{\rm W}=4C'/\pi$.
\end{proposition}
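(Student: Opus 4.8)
The plan is to bound the probability of the complementary event
\[
 \mathcal{B} := \bigl\{\,\exists\, E \in I : \mbox{ both } \Lambda_{x,L} \mbox{ and } \Lambda_{y,L} \mbox{ are } (\mu/8,E)\mbox{-singular}\,\bigr\}
\]
and to show $\PP\{\mathcal{B}\} \le K$. The real point is the passage from the pointwise-in-$E$ hypotheses (i) and (ii) to a bound valid simultaneously for the whole continuum $I$, and the structural fact I would lean on is that, since $|x-y|\ge 2L+n$, the restricted operators $H_{\Lambda_{x,L}}$ and $H_{\Lambda_{y,L}}$ depend on disjoint blocks of the variables $\{\omega_k\}$ (the box $\Lambda_{z,L}$ only feels $\omega_k$ with $k\in\{z-L-n+1,\dots,z+L\}$), hence are \emph{independent}.

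First I would extract a Wegner-type estimate from hypothesis (ii): using the spectral representation of the finite matrix $H_{\Lambda_{k,L}}$, the positivity of $\mathrm{Im}\,\langle\delta_z,(H_{\Lambda_{k,L}}-E-\i\eta)^{-1}\delta_z\rangle$ and the normalisation $\sum_z|\phi_j(z)|^2=1$ of its eigenvectors, one sees that $\dist(E,\sigma(H_{\Lambda_{k,L}}))<\eta$ forces $|G_{\Lambda_{k,L}}(E+\i\eta;z,z)|\ge (2\eta(2L+1))^{-1}$ for some $z$, so that Markov's inequality together with (ii) yields
\[
 \PP\bigl\{\dist(E,\sigma(H_{\Lambda_{k,L}}))<\eta\bigr\}\le C_{\rm W}(2L+1)^2\,\eta^{s/N'}
\]
uniformly in $E\in I$, $k$ and $L$. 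Secondly, for fixed $E$, hypothesis (i) applied to the two boundary sites $w\in\partial\Lambda_{z,L}$ (each at distance exactly $L\ge L_0$ from the centre $z$) plus Markov gives $\PP\{\sup_{w\in\partial\Lambda_{z,L}}|G_{\Lambda_{z,L}}(E;z,w)|>\frac{1}{2}\euler^{-\mu L/8}\}\le 2^{1+s/N}C\,\euler^{-\mu L(1-s/(8N))}$, which decays at a rate far exceeding $\mu s/(8N')$.

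The heart of the argument is to combine the two. I would fix the energy scale $\eta:=\euler^{-\mu L/8}$ and split $\mathcal{B}$ according to whether the offending energy $E$ lies within $\eta$ of $\sigma(H_{\Lambda_{x,L}})\cup\sigma(H_{\Lambda_{y,L}})$ or not. On the ``regular region'' $\dist(E,\sigma(H_{\Lambda_{z,L}}))\ge\eta$, the map $E\mapsto G_{\Lambda_{z,L}}(E;z,w)$ is Lipschitz with constant $\le\eta^{-2}$, so on an equispaced grid $\mathcal{G}\subset I$ of mesh $\asymp\eta^{3}$ (hence $\#\mathcal{G}\lesssim|I|\,\euler^{3\mu L/8}$) singularity at such an $E$ forces $|G_{\Lambda_{z,L}}(E_i;z,w)|>\frac{1}{2}\euler^{-\mu L/8}$ at a neighbouring grid point $E_i$; a union bound over $\mathcal{G}$ and over the two boxes, fed by the fixed-$E$ decay estimate, controls this part by $\lesssim C|I|\,\euler^{-\mu L(5/8-s/(8N))}$, which is $\le 8C|I|\,\euler^{-\mu sL/(8N')}$ once $L\ge\max\{8\ln 2/\mu,L_0\}$. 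In the complementary case the offending $E$ lies within $\eta$ of, say, $\sigma(H_{\Lambda_{x,L}})$ while $\Lambda_{y,L}$ is still singular at $E$; here one conditions on the variables of $\Lambda_{x,L}$, which freezes the at most $2L+1$ intervals making up the $\eta$-neighbourhood of $\sigma(H_{\Lambda_{x,L}})$ inside $I$, and then bounds the conditional probability that $\Lambda_{y,L}$ is singular somewhere on this frozen set, using the Wegner estimate for the part coming from $\sigma(H_{\Lambda_{y,L}})$ and the grid/decay bound for the remainder; by independence this contributes $\lesssim C_{\rm W}(2L+1)^2\,\euler^{-\mu sL/(8N')}$. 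Adding the three contributions and adjusting the numerical constants gives $\PP\{\mathcal{B}\}\le K$.

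I expect the combination step to be the main obstacle. The energy discretisation is useless near the finite-volume spectra — the resolvent Lipschitz constant blows up there — so those energies must be carved out and covered instead by the Wegner estimate, and arranging the bookkeeping so that the volume factor stays at $(2L+1)^{2}$, rather than a higher power produced by naively summing over all $\le 2L+1$ finite-volume eigenvalues, forces one to treat the two disjoint boxes asymmetrically and to invoke their independence at precisely the right point, all while checking that every exponentially small term decays at least at the rate $\mu s/(8N')$ with the correct polynomial prefactor; the hypothesis $L\ge 8\ln 2/\mu$ is exactly what is needed to close these estimates.
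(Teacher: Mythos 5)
Your proposal is correct in spirit and delivers the same kind of bound, but it takes a genuinely different route from the paper's at the crucial step, and the route you chose is noticeably messier than the one the paper uses. The paper does not discretize the energy interval. Instead, it introduces the sets $\tilde\Delta_\omega^k$ of energies where the resolvent exceeds the \emph{relaxed} threshold $\euler^{-\mu L/4}$ (not $\euler^{-\mu L/8}$), and the events $\tilde B_k$ where $\mathcal{L}(\tilde\Delta_\omega^k) > \euler^{-5\mu L/8}$. Hypothesis (i) is used once, integrated in $E$ over $I$, to bound $\PP(\tilde B_k)$ via Markov on the Lebesgue measure. Then the paper's key claim, Eq.~\eref{eq:claim}, says: on $\tilde B_k^c$, every $(\mu/8,E)$-singular energy in $I$ lies within $\delta = 2\euler^{-\mu L/8}$ of $\sigma(H_{\Lambda_{k,L}})$. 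This is exactly what your Lipschitz propagation argument achieves pointwise, but cast as a measure statement: if a singular $E$ were far from the spectrum, the first resolvent identity would make a whole interval of length $2\euler^{-5\mu L/8}$ around $E$ lie in $\tilde\Delta_\omega^k$, contradicting $\omega\notin\tilde B_k$. That one claim makes your ``part (b)'' collapse to the pure resonance event $B_{\rm res} := \{I\cap I_{\omega,x}(\delta)\cap I_{\omega,y}(\delta)\neq\emptyset\}$, which is then handled exactly as you propose: condition on the $\omega_k$ entering $\Lambda_{x,L}$ (here independence via $|x-y|\geq 2L+n$ is used), apply Chebyshev plus the Wegner estimate derived from (ii), and sum over the $2L+1$ eigenvalues of the other box to get the $(2L+1)^2 C_{\rm W}(4\delta)^{s/N'}$ term. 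The total is then $\PP(\tilde B_x)+\PP(\tilde B_y)+\PP(B_{\rm res})$, which reproduces $K$ cleanly.

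What each approach buys: the paper's measure argument avoids the grid entirely, never produces the $|I|\,\euler^{3\mu L/8}$-sized union bound, and gives the sharp constant $8$ directly; it also exposes cleanly why $L\geq 8\ln 2/\mu$ is needed, namely so that $\tfrac12\euler^{-\mu L/8}\geq\euler^{-\mu L/4}$, which makes the relaxed threshold consistent with the Lipschitz loss. Your grid approach is conceptually more pedestrian and will work, but your part (b) is structured suboptimally: conditioning first on $\Lambda_{x,L}$ and then resolving $\Lambda_{y,L}$ into a ``near its own spectrum'' and ``far from its own spectrum'' piece produces an extra term of the shape $C(2L+1)\,\euler^{-\mu L(3/4-s/(8N))}$ that does not fit the stated form of $K$ and has to be absorbed by comparing $C$ with $C_{\rm W}$, which the hypotheses do not allow in general. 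A cleaner split within your own framework would be into the three events $\{\exists E:\Lambda_{x,L}\text{ singular at }E,\ \dist(E,\sigma_x)\geq\eta\}$, the analogue for $y$, and the pure resonance event $\{\exists E:\dist(E,\sigma_x)<\eta\text{ and }\dist(E,\sigma_y)<\eta\}$; then the grid handles the first two, Wegner plus independence handles the third, and you never need the awkward ``grid inside the frozen resonance intervals'' step. One further technical remark: your Wegner derivation from (ii) (imaginary part plus normalised eigenvector argument) is fine, but gives a bound of the form $(2L+1)^{1+s/N'}(2\eta)^{s/N'}C'$, which is not identical to $C_{\rm W}(2L+1)^2\eta^{s/N'}$ with $C_{\rm W}=4C'/\pi$; the paper gets that specific constant from the Stone-formula variant, so if you want the stated $K$ literally you would need to redo that bookkeeping as well.
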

\begin{proof}
 Fix $L \in \NN$ with $L \geq \max\{8 \ln (2)/\mu , L_0\}$ and $x,y \in \ZZ$ such that
$| x-y | \geq 2L+n$. For $\omega \in \Omega$ and $k \in
\{x,y\}$ we define the sets
\begin{eqnarray}
 \Delta_\omega^k &:= \{E \in I : \sup_{w \in \partial \Lambda_{k,L}} |
G_{\Lambda_{k,L}} (E ; k,w) | > \euler^{-\mu L /8}\}, \nonumber \\
\tilde \Delta_\omega^k &:= \{E \in I : \sup_{w \in \partial \Lambda_{k,L}}
| G_{\Lambda_{k,L}} (E ; k,w) | > \euler^{-\mu L/4 }\}, \mbox{ and} \nonumber \\
 \tilde B_k &:= \{\omega \in \Omega : \mathcal{L} \{\tilde \Delta_\omega^k\} >
 \euler^{-5\mu L /8} \} . \label{eq:deltatilde}
\end{eqnarray}
Here $\mathcal{L}$ denotes the Lebesgue measure. For $\omega \in \tilde B_k$ we have
\begin{eqnarray*}
\sum_{w \in \partial \Lambda_{k,L}} \!\!\!\!  \int_I | G_{\Lambda_{k,L}} (E ; k,w) |^{s/N} \drm E
& \ge \int_I \sup_{w \in \partial \Lambda_{k,L}} | G_{\Lambda_{k,L}} (E ; k,w) |^{s/N} \drm E \\
& > \euler^{-5 \mu L / 8} \euler^{-\mu L s/ (4N)} > \euler^{-7 \mu L / 8}.
\end{eqnarray*}
By assumption (i) this implies for $k \in \{x,y\}$
\begin{equation*}
 \PP \{\tilde B_k\} <  2 | I | C \euler^{-\mu L /8} .
\end{equation*}
For $k \in \{x,y\}$ we denote by $\sigma (H_{\Lambda_{k,L}}) =
\{E_{\omega,k}^i\}_{i=1}^{2L+1}$ the spectrum of
$H_{\Lambda_{k,L}}$. We claim that for $k \in \{x,y\}$,
\begin{equation} \label{eq:claim}
 \omega \in \Omega \setminus \tilde B_k \quad \Rightarrow
 \quad \Delta_\omega^k \subset \bigcup_{i=1}^{2L+1}
 \bigl[E_{\omega,k}^i-\delta , E_{\omega,k}^i + \delta \bigr] =:
 I_{\omega,k}(\delta),
\end{equation}
where $\delta = 2\euler^{-\mu L / 8}$. Indeed, suppose that $E\in \Delta_\omega^k$ and $\dist\big(E,\sigma
(H_{\Lambda_{k,L}})\big)>\delta$. Then there exists $w \in \partial \Lambda_{k,L}$ such
that $| G_{\Lambda_{k,L}} (E;k,w) | > \euler^{-\mu L / 8}$.
For any $E'$ with $| E-E'| \le 2\euler^{-5\mu L / 8}$ we have
$\delta -| E-E'|\ge \euler^{-\mu L / 8}\ge 2\euler^{-3\mu L / 8}  $ since $L \geq 8 \ln (2) / \mu$.
Moreover, the first resolvent identity and the estimate $\| (H-E)^{-1} \| \leq \dist
(E,\sigma (H))^{-1}$ for selfadjoint $H$ and $E \in \CC\setminus \sigma (H)$ implies
\begin{eqnarray*} \fl
| G_{\Lambda_{k,L}} (E ; k,w) - G_{\Lambda_{k,L}} (E' ; k,w)| \leq \ | E-E'| \cdot \| G_{\Lambda_k} (E)\| \cdot \| G_{\Lambda_k} (E') \| \leq \frac{1}{2} \euler^{-\mu L / 8} ,
\end{eqnarray*}
 and hence
\[
 | G_{\Lambda_{k,L}} (E' ; k,w)|
\ > \ \frac{\euler^{-\mu L / 8}}{2} \geq \euler^{-\mu L / 4}
\]
for $L \geq 8 \ln (2) / \mu$.
We infer that $[E-2\euler^{-5\mu L / 8},E+2\euler^{-5\mu L / 8}]\cap I \subset  \tilde \Delta_\omega^k$ and conclude
$\mathcal{L} \{\tilde \Delta_\omega^k\} \ge 2\euler^{-5\mu L / 8}$.
This is however impossible if
$\omega\in \Omega \setminus \tilde B_k$ by \eref{eq:deltatilde},
hence the claim \eref{eq:claim} follows.
\par
In the following step we use assumption (ii) to deduce a  Wegner-type estimate.
We denote by $P_{[a,b]} (H_{\Lambda_{x,L}})$ the spectral projection corresponding to the interval $[a,b] \subset \RR$ and the operator $H_{\Lambda_{x,L}}$. Since we have for any $a,b \in \RR$ with $a<b$, any $\lambda \in \RR$ and $0<\epsilon \leq b-a$
\[
 \arctan \left( \frac{\lambda - a}{\epsilon} \right) - \arctan \left( \frac{\lambda - b}{\epsilon} \right) \geq \frac{\pi}{4} \ \chi_{[a,b]}(\lambda) ,
\]
one obtains an inequality version of Stones formula:
\[
 \langle \delta_x , P_{[a,b]} (H_{\Lambda_{x,L}}) \delta_x \rangle
\leq \frac{4}{\pi} \int_{[a,b]} \im \left\{ G_{\Lambda_{x,L}} (E+ \i \epsilon ; x,x) \right\} \drm E 
\]
for all $\epsilon \in (0, b-a]$. Using triangle inequality, $| \im z| \leq | z|$ for $z \in \CC$, Fubini's theorem,
$| G_{\Lambda_{x,L}} (E+\i \epsilon ; x,x) |^{1-s/N'} \leq \dist (\sigma(H_{\Lambda_{x,L}}) , E+i \epsilon)^{s/N'-1} \leq \epsilon^{s/N'-1}$
and assumption (ii) we obtain for $[a,b]\subset I$ and all $\epsilon \in (0,b-a]$
\begin{eqnarray*} \fl
\EE \bigl\{ \Tr P_{[a,b]}(H_{\Lambda_{x,L}}) \bigr\} & \leq \EE \Bigl\{ \sum_{x \in \Lambda_{x,L}} \frac{4}{\pi} \int_{[a,b]} \im \left\{ G_{\Lambda_{x,L}} (E+\i \epsilon ; x,x) \right\} \drm E  \Bigr\} \\
&  \leq  \frac{4\epsilon^{s/N'-1}}{\pi}  \sum_{x \in \Lambda_{x,L}} \int_{[a,b]} \EE \Bigl\{ \bigl|  G_{\Lambda_{x,L}} (E+\i \epsilon ; x,x)  \bigr|^{s/N'} \Bigr\} \drm E   \\
& \leq 4\pi^{-1}\epsilon^{s/N'-1}  (2L+1) \, | b-a | C' .
\end{eqnarray*}
We minimize the right hand side by choosing $\epsilon = b-a$ and obtain the Wegner estimate
\begin{eqnarray} 
\EE \bigl\{ \Tr P_{[a,b]}(H_{\Lambda_{x,L}}) \bigr\} &\leq \frac{4C'}{\pi} | b-a |^{s/N'}  (2L+1) \nonumber \\
& =: C_{\rm W} | b-a |^{s/N'}  (2L+1) .\label{eq:wegner}
\end{eqnarray}
Now we want to estimate the probability of the event $B_{\rm res} := \{\omega \in \Omega :  I \cap I_{\omega,x}(\delta) \cap I_{\omega,y}(\delta) \not = \emptyset \}$
that there are ``resonant'' energies for the two box Hamiltonians $H_{\Lambda_{x,L}}$ and $H_{\Lambda_{y,L}}$.
For this purpose we denote by $\Lambda_{x,L}^+$ the set of all lattice sites $k \in \ZZ$
whose coupling constant $\omega_k$ influences the potential in $\Lambda_{x,L}$,
i.\,e. $\Lambda_{x,L}^+ = \cup_{x \in \Lambda_{x,L}} \{k \in \ZZ : u(x-k) \not = 0)\}$.
Notice that the expectation in Ineq. \eref{eq:wegner} may therefore be replaced by $\EE_{\Lambda_{x,L}^+}$.
Moreover, since $| x-y | \geq 2L + n$ and $\Theta \subset [0,n-1] \cap \ZZ$,
the operator $H_{\Lambda_{y,L}}$ and hence the interval $I_{\omega , y} (\delta)$ is independent of $\omega_k$, $k \in \Lambda_{x,L}^+$.
We use the product structure of the measure $\PP$, Chebyshev's inequality, and estimate \eref{eq:wegner} to obtain
\begin{eqnarray} \fl
\PP_{\Lambda_{x,L}^+}(B_{\rm res}) &= \sum_{i=1}^{2L+1} \PP_{\Lambda_{x,L}^+} \bigl\{ \omega \in \Omega : \Tr \bigl( P_{I \cap [E_{\omega,y}^i-2\delta , E_{\omega,y}^i + 2\delta ]} (H_{\Lambda_{x,L}}) \bigr) \geq 1 \bigr\} \nonumber  \\ \fl
& \leq \sum_{i=1}^{2L+1} \EE_{\Lambda_{x,L}^+} \bigl\{ \Tr \bigl( P_{I \cap [E_{\omega,y}^i-2\delta , E_{\omega,y}^i + 2\delta ]}  (H_{\Lambda_{x,L}}) \bigr) \bigr\} \nonumber \\ \fl
&\leq (2L+1)^2  C_{\rm W} (4\delta)^{s/N'} . \label{eq:wegner-application}
\end{eqnarray}
Consider now an $\omega \not \in \tilde B_x \cup \tilde B_y$. Recall that \eref{eq:claim} tell us that $\Delta_\omega^x \subset  I_{\omega,x}(\delta)$
and $\Delta_\omega^y \subset  I_{\omega,y}(\delta)$. If additionally $\omega \not \in B_{\rm  res}$ then no $E \in I$ can be in
$\Delta_\omega^x $ and $\Delta_\omega^y$ simultaneously. Hence for each $E \in I$ either  $\Lambda_{x,L}$ or $\Lambda_{y,L}$
is $(\mu/8,E)$-regular. A contraposition gives us
\begin{eqnarray*} \fl
\PP \bigl\{\mbox{$\exists \, E \in I$, $\Lambda_{x,L}$ and $\Lambda_{y,L}$ is $(\mu/8,E)$-singular} \bigr\}
& \le \PP (\tilde B_x) +\PP (\tilde B_y ) + \PP(B_{\rm res}) \\
&\le 4 | I | C \euler^{-\frac{\mu L}{8}} + (2L+1)^2  C_{\rm W} (4\delta)^{\frac{s}{N'}} ,
\end{eqnarray*}
which proves the statement of the proposition.
\end{proof}
\begin{remark}
In the proof of Proposition \ref{prop:replace-msa} assumption (ii) is only used to get a Wegner estimate, see Ineq. \eref{eq:wegner}. Hence, if we know that a Wegner estimate holds for some other reason, we can drop assumption (ii) and skip a part of the proof of Proposition \ref{prop:replace-msa}. In the situation where $\rho \in W^{1,1} (\RR)$, a Wegner estimate for our model was established in \cite{Veselic2010b}.
\end{remark}
\begin{remark} \label{remark:assii}
Note that the conclusions of Proposition \ref{prop:replace-msa} tells us that the probabilities
of $\{\forall \, E \in I \mbox{ either $\Lambda_{x,L}$ or $\Lambda_{y,L}$ is $(\mu/8,E)$-regular} \}$
tend to one exponentially fast as $L$ tends to infinity. In particular, for any $p>0$ there is some $\tilde L \in \NN$
such that for all $L \ge \tilde L$:
\[
 \PP \{\forall \, E \in I \mbox{ either $\Lambda_{x,L}$ or $\Lambda_{y,L}$ is $(m,E)$-regular} \}\ge 1- L^{-2p}.
\]
\end{remark}
To conclude exponential localization from the estimate provided in Proposition
\ref{prop:replace-msa} we will use Theorem 2.3 in \cite{DreifusK1989} in the $1d$ situation. Since the potential values at different lattice sites are not necessarily independent in our model we need a slight extension which can be proven with the same arguments as the original result.
\begin{theorem}[\cite{DreifusK1989}] \label{thm:vDK-2.3}
Let $I\subset \RR$ be an interval and let $p>1$, $ L_0>1$, $\alpha \in (1,2p)$, $m>0$, $n \in \NN$.
Assume that $\Theta \subset     [0,n-1]\cap \ZZ$.
Set $L_{k} =L_{k-1}^\alpha$, for $k \in \NN$. Suppose that for any $k \in \NN_0$
\[
 \PP \{\forall \, E \in I \mbox{ either $\Lambda_{x,L_k}$ or $\Lambda_{y,L_k}$ is $(m,E)$-regular} \}\ge 1- L_k^{-2p}
\]
for any $x,y \in \ZZ$ with $| x-y| > 2 L_k + n$.

Then for almost all $\omega \in \Omega$ there is no continuous spectrum of $H_\omega$ in $I$ and
the eigenfunctions of all eigenvalues of $H_\omega$ in $I$ decay exponentially at infinity with mass $m$.
\end{theorem}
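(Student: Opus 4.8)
The statement to prove is Theorem \ref{thm:vDK-2.3}, which asserts that the multiscale-type hypothesis (exponentially good probability that at least one of two widely separated boxes is regular, at a geometrically growing sequence of scales) implies absence of continuous spectrum and exponential decay of eigenfunctions in the interval $I$. This is an adaptation of the classical von Dreifus--Klein argument to the situation where the potential values at different sites are not independent, but only \emph{finitely correlated} (since $\Theta \subset [0,n-1]\cap\ZZ$, potentials on sites more than $n$ apart depend on disjoint sets of coupling constants). The plan is to follow the proof of Theorem~2.3 in \cite{DreifusK1989} essentially verbatim, checking that the only place independence is used---the decoupling of the events ``$\Lambda_{x,L_k}$ singular'' and ``$\Lambda_{y,L_k}$ singular''---is already incorporated into the hypothesis rather than derived from independence.

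First I would set up the deterministic skeleton. Fix a realization $\omega$ outside a bad set to be specified. The key deterministic fact is: if $\psi$ is a generalized eigenfunction (polynomially bounded solution of $H_\omega\psi = E\psi$) with $E\in I$, and $\Lambda_{x,L}$ is $(m,E)$-regular, then the geometric resolvent identity gives $|\psi(x)| \le C L^{d-1} \euler^{-mL}\sup_{w\in\partial\Lambda_{x,L}}|\psi(w')|$ where $w'$ is a neighbor of $w$ outside the box; in one dimension this is particularly clean, $|\psi(x)|\le 2\euler^{-mL}\max\{|\psi(x-L-1)|,|\psi(x+L+1)|\}$. Iterating this along a chain of regular boxes forces exponential decay. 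So it suffices to show that for a.e.\ $\omega$, for every $E$ that is a generalized eigenvalue in $I$, and for every $x$ large, the box $\Lambda_{x,L_k}$ is $(m,E)$-regular for all large $k$---or rather, that one cannot have two far-apart singular boxes at infinitely many scales.

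Next comes the Borel--Cantelli step. Using the hypothesis, $\PP\{\exists E\in I:\Lambda_{0,L_k} \text{ and } \Lambda_{y,L_k} \text{ both } (m,E)\text{-singular}\}\le L_k^{-2p}$ for $|y|$ in the annulus $2L_k+n < |y| \le 2L_{k+1}+n$ (roughly), say; summing over the $O(L_{k+1}) = O(L_k^\alpha)$ relevant values of $y$ and over $k$, the total is $\sum_k L_k^{\alpha} L_k^{-2p} = \sum_k L_k^{\alpha - 2p}$, which converges because $L_k = L_0^{\alpha^k}$ grows super-geometrically and $\alpha < 2p$. By Borel--Cantelli, for a.e.\ $\omega$ there is $K(\omega)$ such that for all $k\ge K(\omega)$ and all such $y$, it is \emph{not} the case that both $\Lambda_{0,L_k}$ and $\Lambda_{y,L_k}$ are singular for some common $E\in I$. (One also runs the same argument centered at each point of $\ZZ$, or uses translation covariance together with a countable union, to control all centers simultaneously---here one must be mildly careful since the process is only stationary, not i.i.d., but stationarity of $\PP$ under the $\ZZ$-shift is all that is needed, and the hypothesis is already stated uniformly in the centers $x,y$.)

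Finally I would combine the two ingredients. Fix $\omega$ in the full-measure set, let $E\in I$ be a generalized eigenvalue with polynomially bounded eigensolution $\psi\not\equiv 0$, normalized so $\psi(0)\ne 0$ (translate if necessary). For each large scale $L_k$, at least one of $\Lambda_{0,L_k}$, $\Lambda_{y,L_k}$ is $(m,E)$-regular for every $y$ in the annulus; since $\psi(0)\ne0$, the regularity-plus-geometric-resolvent estimate shows $\Lambda_{0,L_k}$ \emph{cannot} be regular for all large $k$ (else $\psi(0)$ would be forced small), hence for each large $k$ every far box $\Lambda_{y,L_k}$ is regular, and chaining these along a path to infinity yields $|\psi(y)|\le \euler^{-m|y|/2}$ (say, after absorbing constants and lower-order scale corrections) for $|y|$ large. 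This gives exponential decay with mass $m$ of every generalized eigensolution; since by the standard Schnol/Simon argument the generalized eigenvalues are spectrally dense in $\sigma(H_\omega)\cap I$, square-integrability of these eigensolutions implies $H_\omega$ has only pure point spectrum in $I$ with exponentially decaying eigenfunctions.

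The main obstacle---and the only real point of departure from \cite{DreifusK1989}---is the bookkeeping in the Borel--Cantelli step once independence is dropped: one must make sure the summable bound $\sum_k L_k^{\alpha-2p}$ genuinely covers \emph{all} pairs of far-apart boxes at all scales and all centers, using only stationarity of $\PP$ and the uniform-in-centers form of the hypothesis, rather than any product structure. Everything else (the deterministic geometric-resolvent decay, the chaining, the Schnol argument) is standard and dimension-one makes it especially transparent. Since the problem statement explicitly grants that ``it can be proven with the same arguments as the original result,'' I would present this as a sketch emphasizing exactly which steps use only stationarity, and refer to \cite{DreifusK1989, GerminetK2001} for the routine details.
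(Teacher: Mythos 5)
Your proposal is correct and follows the route the paper intends: the paper gives no proof of Theorem~\ref{thm:vDK-2.3} beyond citing von~Dreifus--Klein and remarking that the argument carries over with the distance condition $|x-y|>2L_k+n$ replacing $|x-y|>2L_k+1$, and your sketch reconstructs exactly that argument (deterministic decay from regular boxes via the geometric resolvent estimate, Borel--Cantelli using summability of $L_k^{\alpha-2p}$, chaining plus the Schnol/Simon density-of-generalized-eigenvalues argument). Your key observation is the right one and is precisely why the paper can state this as a cited result: the proof of Theorem~2.3 in \cite{DreifusK1989} never uses independence of the potential values---the pair-decoupling probability bound is an \emph{assumption} there (it is Theorem~2.2, the actual multiscale induction, that uses independence to \emph{produce} it)---so only the uniform-in-centers form of the hypothesis, adjusted for the correlation range $n$, and a countable intersection over centers is required.
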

Putting together Proposition \ref{prop:replace-msa} and
Theorem \ref{thm:vDK-2.3} we obtain the statement that exponential
decay of fractional moments of the Green's function implies exponential localization.
\begin{theorem}\label{thm:fmb-localization}
Let $n \in \NN$, $\Theta \subset [0,n-1]\cap \ZZ$, $s \in (0,1)$,
$C,\mu \in (0,\infty)$,  $N \in \NN$,
and $I \subset \RR$ be a bounded interval. Assume that
$\EE \bigl\{ | G_{\Lambda_{k,L}} (E;x,y) |^{s/N} \bigr\} \leq C \euler^{-\mu | x-y |}$
for all $k \in \ZZ^d$, $L \in \NN$, $x,y \in \Lambda_{k,L}$ and all $E \in I$.
Then for almost all $\omega \in \Omega$ there is no continuous spectrum of $H_\omega$ in $I$ and the eigenfunctions of all eigenvalues of $H_\omega$ in $I$ decay exponentially at infinity.
\end{theorem}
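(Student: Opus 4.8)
The plan is to derive Theorem~\ref{thm:fmb-localization} as a straightforward consequence of Proposition~\ref{prop:replace-msa} and Theorem~\ref{thm:vDK-2.3}, with the apriori bound of Theorem~\ref{t:a-priori} supplying hypothesis (ii) of the proposition. First I would observe that the hypothesis of the theorem is precisely hypothesis (i) of Proposition~\ref{prop:replace-msa} (with $L_0 = 1$, say, since the bound is assumed for all $x,y\in\Lambda_{k,L}$ without a lower bound on $|x-y|$). Next, since we are in the regime where $\supp\rho$ is compact (this is implicit in the setting of Section~\ref{sec:loc}, or can be added as an hypothesis), Theorem~\ref{t:a-priori} applied to the finite connected sets $\Gamma=\Lambda_{k,L}$ gives a uniform bound $\EE\{|G_{\Lambda_{k,L}}(E+\i\epsilon;x,x)|^{s'/(4n)}\}\le C'$ for a suitable $s'\in(0,1)$ and all $k,L,x,E,\epsilon$; this is exactly hypothesis (ii) with $N' = 4n$ (after relabelling the fractional power). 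One should note here that Theorem~\ref{t:a-priori} is stated for $z\in\CC\setminus\RR$, so it covers $z = E+\i\epsilon$ uniformly in $\epsilon>0$, as required.

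With both hypotheses of Proposition~\ref{prop:replace-msa} in hand, I would invoke its conclusion together with Remark~\ref{remark:assii}: the probability that for all $E\in I$ either $\Lambda_{x,L}$ or $\Lambda_{y,L}$ is $(\mu/8,E)$-regular tends to $1$ exponentially fast in $L$. In particular, fixing $p>1$ and $\alpha\in(1,2p)$, for every sufficiently large starting scale $L_0$ the initial estimate of Theorem~\ref{thm:vDK-2.3} holds at $L_0$; and since the exponential decay rate $K$ in Proposition~\ref{prop:replace-msa} beats any polynomial $L_k^{-2p}$ along the super-exponentially growing sequence $L_k = L_{k-1}^\alpha$, the hypothesis of Theorem~\ref{thm:vDK-2.3} holds at every scale $L_k$, $k\in\NN_0$, for all $x,y$ with $|x-y|>2L_k+n$. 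Here one uses $\Theta\subset[0,n-1]\cap\ZZ$, which is assumed, so that the geometric separation condition matches.

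Applying Theorem~\ref{thm:vDK-2.3} then yields that for almost every $\omega$ the operator $H_\omega$ has no continuous spectrum in $I$ and all eigenfunctions corresponding to eigenvalues in $I$ decay exponentially, which is the assertion of Theorem~\ref{thm:fmb-localization}. The main point requiring care — though it is not really an obstacle — is the bookkeeping of the various fractional exponents: hypothesis (i) comes with exponent $s/N$ and hypothesis (ii) with exponent $s'/(4n)$, and these need not be equal, but Proposition~\ref{prop:replace-msa} is formulated precisely to accommodate independent exponents $N$ and $N'$, so nothing has to be reconciled. A secondary point is to check that $I$ being bounded is enough for the apriori bound; inspecting the proof of Theorem~\ref{t:a-priori}, the energy $z$ only enters through $|z|\le m := \|H_{\Lambda_{k,L}}\|+1$, and since $\supp\rho$ is compact the operator norms $\|H_{\Lambda_{k,L}}\|$ are bounded uniformly in $k$ and $L$, so the constant $C$ is genuinely uniform. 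Hence the two inputs feed directly into the two cited results and the theorem follows.
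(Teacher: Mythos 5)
Your proof follows the paper's route exactly: feed the assumption into hypothesis (i) of Proposition~\ref{prop:replace-msa}, supply hypothesis (ii) via the apriori bound of Theorem~\ref{t:a-priori}, then pass the resulting probability estimate (via Remark~\ref{remark:assii}) to the von Dreifus--Klein variant Theorem~\ref{thm:vDK-2.3}. The paper itself disposes of this in one sentence (``Putting together Proposition~\ref{prop:replace-msa} and Theorem~\ref{thm:vDK-2.3}\dots''), and your bookkeeping on $L_0$, on the independence of the exponents $N$ and $N'$, and on the exponential bound beating $L_k^{-2p}$ along $L_k = L_{k-1}^\alpha$ is correct.

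One point is worth stating more sharply than you do. Hypothesis (ii) of Proposition~\ref{prop:replace-msa} requires a bound on $\EE\{|G_{\Lambda_{k,L}}(E+\i\epsilon;x,x)|^{s/N'}\}$ at \emph{complex} energies, uniformly in $\epsilon>0$; this is the input to the Stone-formula Wegner estimate. The hypothesis of Theorem~\ref{thm:fmb-localization} as written only gives the diagonal bound at \emph{real} $E\in I$, which does not immediately yield the complex-energy version. You close this gap by invoking Theorem~\ref{t:a-priori} under the extra assumption $\supp\rho$ compact, which is not among the theorem's stated hypotheses; you note this and call it ``implicit.'' That is in fact what the paper does in its only application of the result, the proof of Theorem~\ref{theorem:loc}, where $\supp\rho$ compact is assumed and Theorem~\ref{t:a-priori} together with Remark~\ref{remark:real} supplies hypothesis (ii). So your proof matches the paper's intent and its actual use, but one should be aware that the literal statement of Theorem~\ref{thm:fmb-localization} is slightly short of being self-contained: a proof from the stated hypotheses alone would need either to strengthen the assumption to include the complex-energy diagonal bound, or to add compactness of $\supp\rho$ explicitly.
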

We now verify the hypotheses of Propositions \ref{prop:replace-msa}.
The next lemma shows that for finite box restrictions of our model the resolvent is well defined at almost all energies.

\begin{lemma} \label{lemma:nullset}
 Let $\min \Theta = 0$ and $\max \Theta = n-1$ for some $n \in \NN$. Let further
$\mathcal{F}$ denote the set of all finite connected subsets of $\ZZ$. Then, for
each $E \in \RR$ the set
\begin{equation*} \label{eq:union_null}
\bigcup_{\Gamma \in \mathcal{F}} \Bigl\{ \omega \in \Omega \mid E \in \sigma
(H_\Gamma) \Bigr\}
\end{equation*}
has $\PP$-measure zero.
\end{lemma}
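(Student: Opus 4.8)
The plan is to fix $E\in\RR$ and reduce the statement to a single finite connected set. A finite connected subset of $\ZZ$ is nothing but a finite interval $\{\gamma_0,\dots,\gamma_1\}$, so the family $\mathcal F$ of all such sets is countable, and it suffices to show that for each fixed $\Gamma\in\mathcal F$ the set $\{\omega\in\Omega : E\in\sigma(H_\Gamma)\}$ has $\PP$-measure zero; the asserted union is then a countable union of null sets. For fixed $\Gamma$ this event depends only on the finitely many coupling constants $\omega_k$ with $k\in\Gamma^+:=\bigcup_{x\in\Gamma}\{k\in\ZZ : u(x-k)\neq 0\}$, and $E\in\sigma(H_\Gamma)$ holds if and only if $P(\omega):=\det[H_\Gamma-E]=0$, where $[H_\Gamma-E]$ denotes the matrix representation and $P$ is a polynomial in $(\omega_k)_{k\in\Gamma^+}$ (the entries being affine in these variables). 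Since the restriction of $\PP$ to the coordinates in $\Gamma^+$ is absolutely continuous with respect to Lebesgue measure (density $\prod_{k\in\Gamma^+}\rho(\omega_k)$) and the zero set of a non-trivial polynomial in finitely many real variables is Lebesgue null, the whole lemma reduces to showing that $P$ is not the zero polynomial.

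To prove $P\not\equiv 0$ I would peel off the left endpoint of $\Gamma$ and induct on $|\Gamma|$. Put $\gamma_0:=\min\Gamma$ and consider the coupling constant $\omega_{\gamma_0-n+1}$. It enters $V_\omega(x)$ with coefficient $u(x-\gamma_0+n-1)$, which is nonzero exactly when $\gamma_0-(n-1)\le x\le\gamma_0$; for $x\in\Gamma$ this forces $x=\gamma_0$, and there the coefficient equals $u(n-1)\neq 0$ because $\max\Theta=n-1$. Hence $[H_\Gamma-E]$ differs from a matrix $A$ that is independent of $\omega_{\gamma_0-n+1}$ only in its $(\gamma_0,\gamma_0)$-entry, which equals the $(\gamma_0,\gamma_0)$-entry of $A$ plus $u(n-1)\,\omega_{\gamma_0-n+1}$. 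Expanding $\det[H_\Gamma-E]$ along the row indexed by $\gamma_0$ therefore gives
\[
 P = \det A + u(n-1)\,\omega_{\gamma_0-n+1}\,\det[H_{\Gamma'}-E],\qquad \Gamma':=\Gamma\setminus\{\gamma_0\},
\]
because $-\Delta_\Gamma$ is tridiagonal for connected $\Gamma$, so deleting row and column $\gamma_0$ from $A$ yields precisely the tridiagonal matrix $[H_{\Gamma'}-E]$ attached to the connected set $\Gamma'$. Moreover $\omega_{\gamma_0-n+1}$ occurs in neither $\det A$ nor $\det[H_{\Gamma'}-E]$ (the latter because $\Gamma'\subset\{\gamma_0+1,\gamma_0+2,\dots\}$, on whose potential $\omega_{\gamma_0-n+1}$ has no effect), so $P$ has degree one in $\omega_{\gamma_0-n+1}$ with leading coefficient $u(n-1)\det[H_{\Gamma'}-E]$; hence $P\not\equiv 0$ as soon as $\det[H_{\Gamma'}-E]\not\equiv 0$, and the induction closes with base case $\Gamma'=\emptyset$, the empty determinant being $1$.

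Combining the two steps gives $P\not\equiv 0$ for every $\Gamma\in\mathcal F$, so each set $\{\omega : E\in\sigma(H_\Gamma)\}$ is $\PP$-null and the countable union over $\mathcal F$ is $\PP$-null as well. I expect the only genuinely delicate point to be the combinatorial observation that the index $\gamma_0-n+1$ perturbs the potential on $\Gamma$ at exactly the single site $\gamma_0$; this is what collapses the cofactor expansion into the clean one-step recursion above, and it is where the hypotheses $\Theta\subset\{0,\dots,n-1\}$ and $\max\Theta=n-1$ enter. A secondary, routine point is to check that the $(\gamma_0,\gamma_0)$-minor of $A$ is literally $[H_{\Gamma'}-E]$, which holds because the only entries linking $\gamma_0$ to $\Gamma'$ are the hopping terms $\langle\delta_{\gamma_0},\Delta_\Gamma\delta_{\gamma_0\pm 1}\rangle$, and these are discarded when forming the minor.
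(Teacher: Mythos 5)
Your proof is correct, and it takes a genuinely different route from the paper's. Both arguments reduce to the same key claim — that for each finite connected $\Gamma$ the map $\omega\mapsto\det[H_\Gamma-E]$ is a non-zero polynomial in the finitely many relevant coupling constants, so its zero set is Lebesgue-null (hence $\PP$-null by absolute continuity), and the union over the countable family $\mathcal F$ is null. Where you diverge is in how you certify non-triviality of the polynomial. The paper performs the same linear change of variables $\omega_0=\alpha_0 r_0$, $\omega_k=\alpha_k r_0+\alpha_0 r_k$ used in Lemma~\ref{lemma:detgen}, chooses $\alpha$ (relying on $u(0)\neq0$ and a hyperplane-avoidance argument) to make $V=\sum_k\alpha_k V_k$ invertible, and then observes that $r_0\mapsto\det(A+r_0V)$ is a non-trivial polynomial of full degree; Fubini finishes. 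You instead exploit the one-dimensional tridiagonal structure directly: the coupling $\omega_{\gamma_0-n+1}$ enters $[H_\Gamma-E]$ only in the $(\gamma_0,\gamma_0)$-entry with coefficient $u(n-1)\neq0$, and the cofactor expansion along that row yields the clean degree-one recursion $P=\det A+u(n-1)\,\omega_{\gamma_0-n+1}\det[H_{\Gamma'}-E]$ with $\Gamma'=\Gamma\setminus\{\min\Gamma\}$, neither term involving $\omega_{\gamma_0-n+1}$; induction on $|\Gamma|$ then closes with the empty determinant. Your route is more elementary and avoids both the substitution and the choice of $\alpha$, at the price of leaning on tridiagonality (hence $d=1$); the paper's route is chosen to stay parallel to the machinery already built for Lemma~\ref{lemma:detgen} and Lemma~\ref{lemma:finitness3} and would carry over more readily to settings where the minor does not collapse so nicely. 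One cosmetic remark: you use $\max\Theta=n-1$ and $\Theta\subset\{0,\dots,n-1\}$ but never $\min\Theta=0$, whereas the paper's proof uses $u(0)\neq0$; both hypotheses appear in the lemma statement, so this is merely a matter of which end of $\Theta$ each argument peels from.
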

\begin{proof}
Let $\Lambda = \{0,1,\dots,m\}$ for some $m\in \NN$. The potential $V_\Lambda$
depends on the random variables $\omega_k$, $k \in \{-n+1,-n+2,\dots,m\}
= \Lambda^+$. For computing the probability $P = \PP \bigl\{
\det(H_{\Lambda} - E)=0 \bigr\}$ we have the estimate
\[
P  \leq \| \rho \|_\infty^{m+n}
\int_{\RR^{m+n}} \chi_{\bigl\{ \omega \in \Omega  \mid \det (H_\Lambda - E) = 0
\bigr\} } (\omega) \drm \omega
\]
where $\omega = (\omega_k)_{k \in \Lambda^+}$ and $\drm \omega = \prod_{k \in
\Lambda^+} \drm \omega_k$.
We introduce the linear transformation $\omega_k = r_k$ for $k \in
\{-n+1,\dots,-1\}$, $\omega_0 = \alpha_0 r_0$ and $\omega_k = \alpha_k r_0 +
\alpha_0 r_k$ for $k \in \{1,2,\dots,m\}$ with $\alpha_0 = 1$.
With this transformation we can rewrite the potential as
\begin{eqnarray*} \fl
 V_{\Lambda} &= \sum_{k=-n+1}^{-1} \omega_k V_k + \sum_{k=0}^{m}  \omega_{k}
V_k
             =\sum_{k=-n+1}^{-1} r_k V_k  + \sum_{k=1}^{m} \alpha_0 r_{k} V_k +
r_{0} \sum_{k=0}^{m} \alpha_k V_k ,
\end{eqnarray*}
where $V_k$, $k=-n+1,\dots,m$, are diagonal matrices with diagonal elements
$u(i-k)$, $i=0,\dots,m$. We introduce the notation
\[
A = -\Delta_\Lambda + \sum_{k=-n+1}^{-1} r_k V_k  + \sum_{k=1}^{m}
\alpha_0 r_{k} V_k - E \quad \mbox{and} \quad V = \sum_{k=0}^{m}
\alpha_k V_k ,
\]
transform the integral and obtain
\begin{equation} \label{eq:ptransform}
P  \leq \| \rho \|_\infty^{m+n}
\int_{\RR^{m+n}} \chi_{\bigl\{ r \in \Omega  \mid \det (A + r_0 V) = 0 \bigr\} }
(r) \drm r ,
\end{equation}
where $r = (r_k)_{k \in \Lambda^+}$ and $\drm r = \prod_{k \in \Lambda^+} \drm
r_k$.
For the determinant of $V$ we have
\[
 \det (V) = \prod_{i=0}^{m} \left( \sum_{k=0}^{m} \alpha_k u(i-k) \right) =
\prod_{i=0}^{m} \langle \alpha , u_i \rangle ,
\]
where we have used the notation $u_i = (u(i-k))_{k=0}^{m}$ and $\alpha =
(\alpha_k)_{k=0}^m$. From $u(0) \not = 0$ we conclude that $u_i \not \equiv 0$
for all $i \in \{0,\dots,m\}$. We choose the vector $\alpha$ with $\alpha_0 = 1$
such that $\langle \alpha , u_i \rangle \not = 0$ for all $i \in \{0,\dots,m\}$.
With this choice of $\alpha$ the matrix $V$ is invertible. Thus, for any fixed
collection of $r_k$, $k \in \Lambda^+ \setminus \{0\}$, the mapping $\RR \ni r_0
\mapsto \det (A + r_0 V)$ is a polynomial of order $m+1$. Therefore, the set
\[
 \{r_{0} \in \RR \mid \det (A + r_{0} V)=0\}
\]
is a finite set. From this property, Fubini's theorem and Ineq.
\eref{eq:ptransform} we obtain $P = 0$. By translation this gives for arbitrary
finite connected set $\Lambda \subset \ZZ$ and each $E \in \RR$ that $\PP
\bigl\{E \in \sigma (H_\Lambda) \bigr\} = 0$. Since the union
\eref{eq:union_null} is countable, we obtain the statement of the lemma.
\end{proof}
%
%
%
%
\begin{remark} \label{remark:real}
 Lemma \ref{lemma:nullset} allows us to obtain the results of Theorem
\ref{theorem:exp}, \ref{theorem:exp2} and \ref{t:a-priori} also for real energies $z$ in the case
where $\Gamma$ is a finite set. For sets of measure zero, the integrand may not
be defined. However, for the bounds on the expectation value this is irrelevant.
\end{remark}
\begin{proof}[Proof of Theorem \ref{theorem:loc}]
Since $\RR = \cup_{M \in \ZZ} [M,M+1]$ is a countable union of compact intervals, it is sufficient to
show that the assumptions of Theorem \ref{thm:vDK-2.3} hold for each interval $[M,M+1]$ individually.
The assumptions of Proposition \ref{prop:replace-msa} are fulfilled by Theorem \ref{thm:result2}, Theorem \ref{t:a-priori} and Remark \ref{remark:real}. Hence we obtain the statement of the theorem if $\| \rho \|_\infty$ is sufficiently small.
\end{proof}
%
%
%
%
%
%
%
%
%
%
\appendix
\section{Reduction to the monotone case} \label{sec:monotone}
This appendix concerns the last statement of Remark \ref{remark:localization}. First we discuss a criterion which ensures that an appropriate one-parameter family of positive
potentials can be extracted from the random potential $V_\omega$.
\begin{lemma} \label{l:equivalence}
Let $u= \sum_{k=0}^{n-1}u(k) \delta_k\colon \ZZ\to \RR$.
Then the following statements are equivalent.
\begin{enumerate}[(A)]
\item
There exists an $N \in \NN$ and real $\lambda_0, \dots, \lambda_N$
such that $w := u \ast \lambda := \lambda_0 u_0 + \dots + \lambda_N u_N$ is a non-negative function and
$w(0)>0 $, $w(N+n-1)>0 $ hold.
\item
There exists an $M \in \NN$ and real $\gamma_0, \dots, \gamma_M$
such that $v := u \ast \gamma := \gamma_0 u_0 + \dots + \gamma_M u_M$ is a non-negative function and
$\supp v = \{0, \dots, M+n-1\}$ holds.
\item
The polynomial $\CC \ni z \mapsto p_u(z) := \sum_{k=0}^{n-1}u(k) z^k$ has no roots in $[0,\infty)$.
\end{enumerate}
\end{lemma}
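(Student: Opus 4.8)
The plan is to establish the cyclic chain $(B)\Rightarrow(A)\Rightarrow(C)\Rightarrow(B)$ after passing to generating polynomials. For a finitely supported $f\colon\ZZ\to\RR$ put $P_f(z):=\sum_{x\in\ZZ}f(x)z^x$; then $P_u(z)=p_u(z)$, and since $u_j(x)=u(x-j)$ one has $P_{u\ast\lambda}=p_\lambda\cdot p_u$ with $p_\lambda(z):=\sum_{j=0}^N\lambda_j z^j$, and likewise $P_{u\ast\gamma}=p_\gamma\cdot p_u$. Throughout I use the standing normalization $0,n-1\in\supp u$, i.e.\ $u(0)\neq0$ and $u(n-1)\neq0$; note that without $u(n-1)\neq0$ the three statements are \emph{not} equivalent, so this hypothesis is essential. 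Replacing $u$ by $-u$ — which leaves each of (A), (B), (C) unchanged, after flipping the corresponding $\lambda$ or $\gamma$ — I may further assume $u(0)>0$.

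The implication $(B)\Rightarrow(A)$ is immediate: if $v=u\ast\gamma\geq0$ with $\supp v=\{0,\dots,M+n-1\}$, then both $0$ and $M+n-1$ lie in $\supp v$, so $v(0)>0$ and $v(M+n-1)>0$; take $N=M$, $\lambda=\gamma$ and $w=v$. For $(A)\Rightarrow(C)$: if $w=u\ast\lambda\geq0$ with $w(0)>0$, then for every real $x\geq0$ we have $p_w(x)=\sum_k w(k)x^k\geq w(0)>0$, so $p_w$ has no zero on $[0,\infty)$; since $p_w=p_\lambda\,p_u$, neither does $p_u$, which is exactly (C). (The extra hypothesis $w(N+n-1)>0$ is not used here; it only pins down the degree of $p_w$.)

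The substantive step is $(C)\Rightarrow(B)$. By (C) together with $u(0)>0$ and continuity, $p_u(x)>0$ for every $x\geq0$. At this point I invoke the classical theorem of Poincar\'e (the two-variable case of P\'olya's Positivstellensatz on the simplex; see also the problem section of P\'olya and Szeg\H{o}'s book): \emph{if a real polynomial is strictly positive on $[0,\infty)$, then some power $(1+z)^N$ multiplies it into a polynomial all of whose coefficients are strictly positive.} Applying this to $p_u$, set $M:=N$ and $\gamma_j:={N\choose j}$ so that $p_\gamma=(1+z)^N$, and let $v:=u\ast\gamma$. Then $P_v=(1+z)^N p_u$ has degree exactly $N+(n-1)$ (here $u(n-1)\neq0$ enters) with all coefficients of $z^0,\dots,z^{N+n-1}$ strictly positive, while $v(k)=0$ for $k$ outside this range. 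Hence $v\geq0$ and $\supp v=\{0,\dots,M+n-1\}$, i.e.\ (B) holds.

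The single real obstacle is the Poincar\'e step; everything else is bookkeeping with generating functions. Should a self-contained argument be preferred to the citation, one factors $p_u$ over $\RR$ into linear factors $z+a$ with $a>0$ and quadratic factors $z^2+bz+c$ with $b\geq0$, $c>0$ — all already coefficient-nonnegative — together with quadratic factors $z^2-2az+b$ with $a>0$, $b>a^2$; since the class of polynomials with positive coefficients is closed under products, it then suffices to show each factor of the last type becomes coefficient-positive after multiplication by a sufficiently high power of $1+z$, which reduces to an explicit but somewhat delicate positivity estimate for ${N\choose k}-2a{N\choose k-1}+b{N\choose k-2}$. I would cite the classical result rather than carry this out.
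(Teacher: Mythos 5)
Your proof is correct, and structurally it is a close cousin of the paper's: both arguments pass to generating polynomials, reduce everything to the observation that $u\ast\lambda$ corresponds to the product $p_\lambda\cdot p_u$, and hinge on the one non-trivial external fact that a real polynomial strictly positive on $[0,\infty)$ has a cofactor whose product with it has strictly positive coefficients. The paper cites Corollary~2.7 of Motzkin--Straus for this, which gives \emph{some} cofactor; you cite the Poincar\'e--P\'olya refinement that $(1+z)^N$ suffices, which is a sharper but equivalent (for this purpose) statement. Your cyclic scheme $(B)\Rightarrow(A)\Rightarrow(C)\Rightarrow(B)$ actually fills a small gap in the paper's presentation: the paper proves $(A)\Rightarrow(B)$ by the explicit construction $v(x)=\sum_{j=0}^{N+n-2}w(x-j)$ and then $(B)\Leftrightarrow(C)$, but never states the trivial converse $(B)\Rightarrow(A)$, so formally only $(A)\Rightarrow(B)\Leftrightarrow(C)$ is established; you supply the missing direction. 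Two smaller differences: the paper routes $(B)\Leftrightarrow(D)$ through \textquotedblleft Fourier transform and the identity theorem,\textquotedblright{} whereas you observe directly that the coefficient sequence of $p_\gamma p_u$ is $v$, which is cleaner; and you explicitly flag the standing hypothesis $u(n-1)\neq 0$ (forced by $\max\Theta=n-1$ in the paper's setup), which is indeed essential --- as you note, with $u=\delta_0$ and $n=2$ statement $(C)$ holds while $(A)$ and $(B)$ fail.
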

Note, if $u(0) \not = 0$ and $u(n-1) \not = 0$, then $\{0, \dots, M+n-1\}$ is the union of the supports of $u_0, \dots, u_M$.
If (A) or (B) hold
we may assume that $| \lambda_0| $, respectively $ |\gamma_0| $,  equals one.
\begin{proof}
If (A) holds, one may choose $v (x) = \sum_{j=0}^{N+n-2} w(x-j)$ to conclude (B). Thus it is sufficient to show (B)$\Leftrightarrow$(C). Using Fourier transform and the identity theorem for holomorphic functions one sees that (B) is equivalent to
\begin{enumerate}[(D)]
\item There exists an $M \in \mathbb{N}$ and real $\gamma_0,\dots,\gamma_M$ such that all coefficients of the polynomial $p_u(z) \cdot \sum_{j=0}^M \gamma_j z^j$ are strictly positive.
\end{enumerate}
If (D) holds, $p_u(x) \cdot \sum_{j=0}^M \gamma_j x^j$ is strictly positive for $x \in [0,\infty)$. Thus its divisor $p_u$ has no root in $[0,\infty)$ and one concludes (C). Assuming (C), one infers from Corollary 2.7 of \cite{MotzkinS1969} that there exists a polynomial $p$ such that $p_u \cdot p$ has strictly positive coefficients. Choosing $M = \deg (p)$ and $\gamma_0,\dots,\gamma_M$ to be the coefficients of $p$ leads to (D).
\end{proof}
If the random potential $V_\omega$ contains a positive building block
$w$ as in (A) of the previous lemma, one
obtains Theorems \ref{thm:result2} with \cite{AizenmanENSS2006}, as we outline now.
The crucial tool is Proposition 3.2 of \cite{AizenmanENSS2006}. Here are two direct consequences of the latter:
\begin{lemma}\label{l:AENSS}
Let $H$ be bounded, selfadjoint on $\ell^2(\ZZ)$,
$\phi,\psi\colon\ZZ \to [0,\infty)$ bounded, $z \in \CC$ with $\im z >0$, and $t,S\in (0,\infty)$.
Then there is a universal constant $C_W\in (0,\infty)$
such that for all $x, y\in \ZZ$
\begin{enumerate}[(i)]
\item
$ \displaystyle
\sqrt{\phi(x)\psi(y)} \,\mathcal{L} \bigl\{ v_1,v_2 \in [-S,S] : | \langle\delta_x , (H+z - v_1 \phi - v_2 \psi)^{-1}\delta_y \rangle | >t  \bigr \}
 \leq 4 C_W \frac{S}{t} \,
$\\[0,5em]
where $\mathcal{L}$ denotes Lebesgue measure.
\item If $\phi(x)\psi(y)\neq0$ and $s \in (0,1)$:
\begin{equation*} \fl
\int_{[-S,S]^2}
|\langle\delta_x , (H+z - v_1 \phi - v_2 \psi)^{-1}\delta_y \rangle |^s \drm v_1 \drm  v_2
 \leq
\frac{4}{1-s} \left(\frac{C_W}{\sqrt{\phi(x)\psi(y)}}\right)^s S^{2-s}\, .
\end{equation*}
\end{enumerate}
\end{lemma}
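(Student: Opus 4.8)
The plan is to obtain both statements as immediate corollaries of Proposition~3.2 of \cite{AizenmanENSS2006}, the two-parameter spectral averaging estimate for a resolvent matrix element perturbed by two non-negative operators. Throughout I abbreviate $F(v_1,v_2) := \langle \delta_x, (H + z - v_1\phi - v_2\psi)^{-1}\delta_y\rangle$ and $a := \sqrt{\phi(x)\psi(y)}$; note that $F$ is well defined for all real $v_1,v_2$ because $H - v_1\phi - v_2\psi$ is bounded and self-adjoint and, since $\im z > 0$, the point $-z$ lies off the real line and hence in its resolvent set.

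For part~(i) I would view $\phi,\psi$ as bounded non-negative multiplication operators on $\ell^2(\ZZ)$ and apply Proposition~3.2 of \cite{AizenmanENSS2006} with $H$ self-adjoint, the perturbations $\phi$ and $\psi$, averaging parameters $v_1,v_2\in[-S,S]$, vectors $\delta_x,\delta_y$, and spectral parameter $-z$ (taking a complex conjugate first should the cited statement be normalised for the upper half-plane, which leaves $\abs{F}$ unchanged since $\overline{F(v_1,v_2)} = \langle\delta_y,(H+\bar z - v_1\phi - v_2\psi)^{-1}\delta_x\rangle$). Since $\langle\delta_x,\phi\delta_x\rangle = \phi(x)$ and $\langle\delta_y,\psi\delta_y\rangle = \psi(y)$, the cited proposition produces a universal constant $C_W$ with
\[
 a\,\mathcal{L}\{ v_1,v_2\in[-S,S] : \abs{F(v_1,v_2)} > t \} \le 4C_W\,\frac{S}{t},
\]
which is exactly~(i). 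The parameter $S$ must appear: for fixed $v_2$ the map $v_1\mapsto F(v_1,v_2)$ is of Möbius type, so the super-level sets of $\abs{F}$ need not have finite area on all of $\RR^2$, and one records instead their dependence on the size of the box $[-S,S]^2$.

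Part~(ii) then follows from~(i) by the layer-cake formula. Assuming $a>0$ and substituting $\tau = t^s$,
\[
 \int_{[-S,S]^2}\abs{F}^s\,\drm v_1\,\drm v_2 = \int_0^\infty s\,t^{s-1}\,\mathcal{L}\{ v_1,v_2\in[-S,S] : \abs{F} > t \}\,\drm t .
\]
In the range $t\le t_0$ I bound the measure trivially by $4S^2$ (the area of the box), and in the range $t > t_0$ by $4C_W S/(at)$ using~(i); integrating gives $4S^2 t_0^s + \frac{4C_W S}{a}\cdot\frac{s}{1-s}\,t_0^{s-1}$. Choosing $t_0 = C_W/(Sa)$ — the value at which the two bounds on the measure coincide — collapses this to $\frac{4}{1-s}(C_W/a)^s S^{2-s}$, which is the bound claimed in~(ii).

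The one delicate point is the bookkeeping in the passage to~(i): one must match the precise formulation of Proposition~3.2 of \cite{AizenmanENSS2006} — in particular its normalisation (the factor $\sqrt{\phi(x)\psi(y)}$ as opposed to $\|\phi^{1/2}\delta_x\|\,\|\psi^{1/2}\delta_y\|$), the half-plane convention for the spectral parameter, and the explicit role of the box size $S$ — so as to land on the stated constant $4C_W$ rather than some other universal multiple. Once~(i) is in hand, the derivation of~(ii) is entirely routine.
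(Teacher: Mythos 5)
Your proposal follows exactly the paper's route: statement~(i) is taken as a direct consequence of Proposition~3.2 of \cite{AizenmanENSS2006}, and statement~(ii) is deduced from~(i) by the layer-cake formula with the domain split at the threshold $t_0 = C_W/(S\sqrt{\phi(x)\psi(y)})$ (the paper writes this as $\kappa = t_0^s$ since it integrates $\mathcal{L}\{|f|^s>t\}$ rather than $s t^{s-1}\mathcal{L}\{|f|>t\}$, but the two bookkeepings are equivalent). The arithmetic in your part~(ii) checks out and reproduces the stated constant $\frac{4}{1-s}(C_W/\sqrt{\phi(x)\psi(y)})^s S^{2-s}$.
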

To obtain statement (ii) from (i) use the layer cake representation
\[
\int_{[-S,S]^2}  | f(v_1,v_2)|^s \drm v_1 \drm  v_2
= \int_0^\infty  \mathcal{L} \{| v_1|,| v_2| \le S : | f(v_1,v_2)|^s >t\} \drm t
\]
and decompose the integration domain into $[0, \kappa]$ and $[\kappa, \infty)$
where $\kappa = \big (C_w / S \sqrt{\phi(x)\psi(y)}\big )^s$.

\begin{proposition} \label{p:bounded}
Let $\Gamma \subset \ZZ$ be connected, $\Theta \subset \ZZ$
with $\min \Theta = 0$ and $\max \Theta = n-1$ for some $n \in \NN$.
Assume that  $u$ satisfies condition (A) in Lemma \ref{l:equivalence} and that $\supp \rho$ is compact. Set $\Lambda_x = \{x,\dots,x+N\}$ and $\Lambda_j=\{j-n+1-N,\dots,j-n+1\}$.
Then we have for all $x,j \in \Gamma$ with $| j-x| \geq 2(N+n) -1$
and all $z \in \CC$ with $\im z>0$
\[
 \mathbb{E}_\Lambda \bigl\{ | G_\Gamma (z;x,j) |^s  \bigr\} \leq C
\]
where $C$ is defined in Eq. \eref{e:K} and $\Lambda= \Lambda_x \cup \Lambda_j$.
\end{proposition}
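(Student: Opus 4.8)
The plan is to use condition~(A) of Lemma~\ref{l:equivalence} to split off two positive building blocks from the potential — one localized near $x$, one localized near $j$, whose amplitudes are independent affine functions of the coupling constants — and then to invoke Lemma~\ref{l:AENSS}(ii).

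Fix, by condition~(A), an $N$ and reals $\lambda_0,\dots,\lambda_N$ with $|\lambda_0|=1$ such that $w:=\sum_{i=0}^N\lambda_i u_i$ satisfies $w\geq0$, $w(0)>0$ and $w(N+n-1)>0$; in particular $\supp w=\{0,\dots,N+n-1\}$. The first step is a triangular change of variables on the $2(N+1)$ couplings indexed by $\Lambda=\Lambda_x\cup\Lambda_j$, in the spirit of the proof of Lemma~\ref{lemma:detgen}: on $\Lambda_x=\{x,\dots,x+N\}$ put $\omega_x=\lambda_0 v_1$ and $\omega_{x+i}=\lambda_i v_1+\lambda_0\eta_i$ ($i=1,\dots,N$); on $\Lambda_j=\{j-n+1-N,\dots,j-n+1\}$ put $\omega_{j-n+1-N}=\lambda_0 v_2$ and $\omega_{j-n+1-N+i}=\lambda_i v_2+\lambda_0\zeta_i$ ($i=1,\dots,N$). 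A direct computation then gives
\[
 V_\omega=v_1\,\phi+v_2\,\psi+W,\qquad \phi(\cdot)=w(\cdot-x),\quad \psi(\cdot)=w\bigl(\cdot-(j-n+1-N)\bigr),
\]
where $W$ is a real, bounded potential independent of $v_1,v_2$. Here $\phi,\psi\geq0$, and the prescribed index sets place the two values of $w$ that are guaranteed to be nonzero precisely at the relevant sites: $\phi(x)=w(0)>0$ and $\psi(j)=w(N+n-1)>0$. The hypothesis $|j-x|\geq2(N+n)-1$ is exactly what makes $\Lambda_x$ and $\Lambda_j$ — and in fact also $\supp\phi$ and $\supp\psi$ — disjoint, so that $v_1$ and $v_2$ are independent; since $|\lambda_0|=1$ the change of variables has unit Jacobian, $v_1$ and $v_2$ each range over $[-R,R]$, and the auxiliary variables $\eta_i,\zeta_i$ over $[-R',R']$ with $R'=R\bigl(1+\max_{1\leq i\leq N}|\lambda_i|\bigr)$.

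Consequently $H_\Gamma=H'+v_1\phi+v_2\psi$, where $H'$ is the operator obtained from $H_\Gamma$ by deleting the summand $v_1\phi+v_2\psi$; it is bounded (using that $\supp\rho$ is compact), self-adjoint, and independent of $v_1,v_2$. Using $(H_\Gamma-z)^{-1}=-\bigl((-H')+z-v_1\phi-v_2\psi\bigr)^{-1}$ together with $\im z>0$ and $\phi(x)\psi(j)\neq0$, I apply Lemma~\ref{l:AENSS}(ii) with $H=-H'$ and $S=R$ to the inner integral over $v_1,v_2$, for each fixed value of the $\eta_i,\zeta_i$ and of the couplings $\omega_k$ with $k\notin\Lambda$:
\[
 \int_{[-R,R]^2}\bigl|\langle\delta_x,(H_\Gamma-z)^{-1}\delta_j\rangle\bigr|^{s}\,\drm v_1\,\drm v_2\leq\frac{4}{1-s}\Bigl(\frac{C_W}{\sqrt{w(0)\,w(N+n-1)}}\Bigr)^{s}R^{2-s}.
\]
Bounding $\prod_{k\in\Lambda}\rho(\omega_k)\leq\|\rho\|_\infty^{2(N+1)}$, enlarging the image of the domain of integration to $[-R,R]^2\times[-R',R']^{2N}$, and carrying out the trivial integration over the $2N$ variables $\eta_i,\zeta_i$, one obtains $\mathbb{E}_\Lambda\{|G_\Gamma(z;x,j)|^s\}\leq C$ with
\[
 C=\|\rho\|_\infty^{2(N+1)}\,(2R')^{2N}\,\frac{4}{1-s}\Bigl(\frac{C_W}{\sqrt{w(0)\,w(N+n-1)}}\Bigr)^{s}R^{2-s}.
\]

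The change-of-variables bookkeeping and the enlargement to a product domain are routine. The step that requires care is the first one: choosing the two positive bumps $\phi,\psi$ so that they have disjoint supports and their guaranteed nonzero values lie exactly at $x$ and $j$, and keeping track of the sign of $z$ so that Lemma~\ref{l:AENSS}(ii) genuinely applies to $(H_\Gamma-z)^{-1}$.
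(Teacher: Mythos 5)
Your proof is correct and takes essentially the same route as the paper: the triangular change of variables on $\Lambda_x$ and $\Lambda_j$ splits off the two nonnegative bumps $\phi=w(\cdot-x)$ and $\psi=w(\cdot-(j-n+1-N))$ with $\phi(x)=w(0)>0$ and $\psi(j)=w(N+n-1)>0$, and Lemma~\ref{l:AENSS}(ii) is then applied to the inner two-variable integral. The only (harmless) deviation is bookkeeping: you observe that the two pivotal variables actually range over $[-R,R]$ rather than $[-S,S]$, which gives a slightly smaller constant than the paper's Eq.~\eref{e:K}, so the stated bound follows a fortiori.
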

\begin{proof}
Without loss of generality we assume $j-x \geq 2(N+n) -1$ and $\lambda_0=1$.
By assumption $\Gamma \supset \{x,x+1,\dots,j\}$. Note that the operator
$A' := H_\Gamma - z -\sum_{k \in \Lambda_x} \omega_k u_k - \sum_{k \in \Lambda_j} \omega_k u_k$ is independent of $\omega_k$, $k \in \Lambda$.
To estimate the expectation
\begin{eqnarray*} \fl
 E &:= \mathbb{E}_{\Lambda} \Bigl\{ \bigl| G_\Gamma (z;x,j) \bigr|^s  \Bigr\} \\ \fl
&= \int_{[-R,R]^{|\Lambda|}} \Bigl| \Bigl\langle\delta_x , \Bigl(A' + \sum_{k \in \Lambda_x} \omega_k u_k + \sum_{k \in \Lambda_j} \omega_k u_k \Bigr)^{-1} \delta_j\Bigr\rangle \Bigr|^s \prod_{k \in \Lambda} \rho(\omega_k) \drm \omega_k
\end{eqnarray*}
we use the substitutions
\[ \fl
 \pmatrix{
  \omega_x \cr
  \omega_{x+1} \cr
  \vdots \cr
  \vdots \cr
  \omega_{x+N} \cr
 }
=
 T
\pmatrix{
  \zeta_x \cr
  \zeta_{x+1} \cr
  \vdots \cr
  \vdots \cr
  \zeta_{x+N} \cr
}
\quad \mbox{and} \quad
 \pmatrix{
  \omega_{j-n+1-N} \cr
  \omega_{j-n+2-N} \cr
  \vdots \cr
  \vdots \cr
  \omega_{j-n+1} \cr
 }
=
 T
 \pmatrix{
  \zeta_{j-n+1-N} \cr
  \zeta_{j-n+2-N} \cr
  \vdots \cr
  \vdots \cr
  \zeta_{j-n+1} \cr
 }
\]
where the matrix $T$ is the same as in Lemma \ref{lemma:detgen}
with $\alpha_k$ replaced by $\lambda_k$, $k =0, \dots, N$.
This gives the bound
\[ \fl
  E \leq \| \rho \|_\infty^{|\Lambda|} \!\!\!\!\!\!\!\!\! \int\limits_{[-S,S]^{|\Lambda|}}\!\!\!\!\!\!
\Bigl| \Bigl\langle \delta_x ,
    \Bigl(A + \zeta_x \! \sum_{k \in \Lambda_x} \! \lambda_{k-x} u_k +  \zeta_{j-n+1-N} \!\sum_{k \in \Lambda_j} \! \lambda_{k-(j-n+1-N)} u_k \Bigr)^{-1}
\delta_j \Bigr\rangle \Bigr|^s
\drm \zeta_\Lambda .
\]
where $\drm \zeta_\Lambda = \prod_{k \in \Lambda} \drm \zeta_k$,
$S = R (1+\max_{i\in \{1,\dots,N\}} |\lambda_i|)$, and
\[
A := A' + \sum_{k \in \Lambda_x \setminus\{x\}}  \zeta_k u_k + \sum_{k \in \Lambda_j \setminus\{j-n+1-N\}}  \zeta_k u_k
\]
 is independent of $\zeta_x$ and $\zeta_{j-n+1-N}$.
By assumption the functions
 $\phi:= \sum_{k \in \Lambda_x} \lambda_{k-x} u_k$ and $\psi:= \sum_{k \in \Lambda_j} \lambda_{k-(j-n+1-N)} u_k$ are bounded and  non-negative, with $\phi(x)=  u(0) > 0$ and $\psi(j) = \lambda_N u(n-1) > 0$.
Using Lemma \ref{l:AENSS} we obtain
\begin{eqnarray*}
 E' &:= \int_{[-S,S]^{2}} \Bigl| \Bigl\langle \delta_x , \bigl(A + \zeta_x \phi + \zeta_{j-n+1-N} \psi \bigr)^{-1} \delta_j \Bigr\rangle \Bigr|^s \drm \zeta_x \drm \zeta_{j-n+1-N} \\[1ex]
&\leq
\frac{4}{1-s} \left(\frac{C_W}{\sqrt{\phi(x)\psi(j)}}\right)^s S^{2-s} .
\end{eqnarray*}
Thus the original integral is estimated by
\begin{eqnarray}
  E &\leq
\| \rho \|_\infty^{|\Lambda|}  (2S)^{|\Lambda|-2}
\frac{4}{1-s} \left(\frac{C_W}{\sqrt{\phi(x)\psi(j)}}\right)^s S^{2-s} \nonumber
 \\[1ex]
 &=
\frac{4}{1-s} \left(\frac{C_W}{\sqrt{u(0) \lambda_N u(n-1) }}\right)^s
\big(2S\| \rho \|_\infty \big)^{2(N+1)} \frac{1}{S^{s}} =: C . \label{e:K}
\end{eqnarray}
\end{proof}
The last proposition and a formula analogous to \eref{e:geometric-resolvent} now give for $j=x + 2(N+n)-1$ and
$x + 2(N+n) \le y$
\begin{eqnarray*} \fl
\EE_\Lambda \Bigl\{ \bigl| G_{\Gamma_0} (z;x,y) \bigr|^s  \Bigr\} &= \EE_\Lambda \Bigl\{ \bigl| G_{\Gamma_0} (z;x,x + 2(N+n)-1) \bigr|^s  \Bigr\}
\bigl| G_{\Gamma_1} (z;x + 2(N+n),y) \bigr|^s \\
&\leq C \bigl| G_{\Gamma_1} (z;x + 2(N+n),y) \bigr|^s
\end{eqnarray*}
where $\Gamma_0 = \ZZ$ and $\Gamma_1 = \{x + 2(N+n), x + 2(N+n)+1, \dots\}$.
In an appropriate large disorder regime, where the constant $C$ in \eref{e:K}
is smaller than one, exponential decay now follows by iteration, similarly as in Theorem
\ref{theorem:exp}.
\ack
It is a pleasure to thank Prof. G\"unter Stolz for
stimulating discussions. This work was partially supported by NSF
grant DMS--0907165 (AE) and by the Deutsche Forschungsgemeinschaft 
within the Emmy-Noether-Programme (MT and IV). 
The authors enjoyed the hospitality of the
Mathematisches Forschungsinstitut Oberwolfach  during the final
stage of this work.
\section*{References}
%
%
%

\end{document}